\theoremstyle{plain}
\newtheorem{theorem}{Theorem}[section]
\newtheorem{lemma}[theorem]{Lemma}
\newtheorem{claim}[theorem]{Claim}
\newtheorem{corollary}[theorem]{Corollary}
\newtheorem{assumption}[theorem]{Assumption}
\newtheorem*{theorem*}{Theorem}
\newtheorem*{lemma*}{Lemma}
\newtheorem{claim*}{Claim}
\theoremstyle{definition}
\newtheorem*{definition*}{Definition}
\newtheorem{definition}[theorem]{Definition}
\newtheorem{example}[theorem]{Example}
\theoremstyle{remark}
\newtheorem{remark}[theorem]{Remark}
\newtheorem*{remark*}{Remark}
\newtheorem*{property*}{Property}
\newtheorem*{problem*}{Problem formulation}
\newcommand{\Emph}[1]{\textbf{#1}}
\newcommand{\parans}[1]{{\left(#1 \right)}}
\newcommand{\paranm}[1]{{\left\{#1 \right\}}}
\newcommand{\paranl}[1]{{\left[#1 \right]}}
\newcommand{\separator}{
  \begin{center}
    \rule{\columnwidth}{0.3mm}
  \end{center}
}
\newcommand{\beq}{\begin{eqnarray*}}
\newcommand{\eeq}{\end{eqnarray*}}
\newcommand{\beqn}{\begin{eqnarray}}
\newcommand{\eeqn}{\end{eqnarray}}
\newcommand{\bemn}{\begin{multiline}}
\newcommand{\eemn}{\end{multiline}}
\def\N{\mathbb{N}}
\def\R{\mathbb{R}}
\newenvironment{customthm}[1]
  {\innercustomthm}
  {\endinnercustomthm}
\newcommand{\xhdr}[1]{\vspace{2mm} \noindent{\bf #1}}
\newcommand{\given}{\mid}
\newcommand{\ie}{{\em i.e.,~\xspace}}
\newcommand{\eg}{{\em e.g.,~\xspace}}
\renewcommand{\paragraph}[1]{\smallskip \noindent{\bf #1.} }
\newcommand{\IND}{\mathbbm{1}}
\DeclareMathOperator*{\argmax}{argmax}
\DeclareMathOperator*{\argmin}{argmin}
\newcommand{\opt}{\textsc{Opt}}
\renewcommand{\Pr}[1]{{\mathbb{P} \left[ #1 \right]}}
\newcommand{\Ex}[1]{{\mathbb{E} \left[ #1 \right]}}
\newcommand{\Exu}[2]{\ensuremath{\mathbb{E}_{#1}\left[#2\right]}}
\newcommand{\eps}{\varepsilon}
\newcommand{\Om}{\Omega}
\newcommand{\om}{\omega}
\newcommand{\omb}{\bar{\omega}}
\newcommand{\var}{\mathrm{Var}}
\newcommand{\poa}{\textsc{PoA}}
\newcommand{\pos}{\textsc{PoS}}
\title[Delegating to Multiple Agents]{Delegating to Multiple Agents}
\author{MohammadTaghi Hajiaghayi}
\affiliation{
  \institution{University of Maryland, College Park}
  \city{College Park}
  \country{USA}}
\email{hajiagha@umd.edu }
\author{Keivan Rezaei}
\affiliation{
  \institution{University of Maryland, College Park}
  \city{College Park}
  \country{USA}}
\email{krezaei@umd.edu }
\author{Suho Shin}
\affiliation{
  \institution{University of Maryland, College Park}
  \city{College Park}
  \country{USA}}
\email{suhoshin@umd.edu}
\begin{abstract}
We consider a \emph{multi-agent delegation} mechanism without money.
In our model, given a set of agents, each agent has a fixed number of solutions which is exogenous to the mechanism, and privately sends a signal, \eg a subset of solutions, to the principal.
Then, the principal selects a final solution based on the agents' signals.
In stark contrast to single-agent setting by Kleinberg and Kleinberg (EC'18) with an approximate \emph{Bayesian mechanism}, we show that there exists efficient approximate \emph{prior-independent mechanisms} with both information and performance gain, thanks to the competitive tension between the agents.
Interestingly, however, the amount of such a compelling power significantly varies with respect to the information available to the agents, and the degree of correlation between the principal's and the agent's utility.
Technically, we conduct a comprehensive study on the multi-agent delegation problem and derive several results on the approximation factors of Bayesian/prior-independent mechanisms in complete/incomplete information settings.
As a special case of independent interest, we obtain comparative statics regarding the number of agents which implies the dominance of the multi-agent setting ($n \ge 2$) over the single-agent setting ($n=1$) in terms of the principal's utility.
We further extend our problem by considering an examination cost of the mechanism and derive some analogous results in the complete information setting.
\end{abstract}
\begin{document}

\maketitle

\section{Introduction}\label{sec:intro}
In many real-world situations, a decision-maker (called "principal") is encountered with a difficult task that cannot be solved by herself.\footnote{Throughout the paper we use feminine pronouns to denote the principal and masculine pronouns to denote the agents.}
To cope with it, she delegates the process of finding a solution to several agents who have the ability to solve the task, and then examines agents' proposed solutions to take the best one.
An interesting tension arises when the utility of the agents are not aligned with that of the principal.
Since the agents are self-interested to maximize their own payoff, they may not propose the solutions for sake of the principal. 

As a concrete example, consider a research proposal selection process by a grant agency. In this problem, a {\em principal} (\eg the grant agency) introduces a topic for proposals  and asks each {\em agent} (\eg a principal investigator) to send a private proposal, or a set of proposals, based on a set of his research directions (possibly coming from some distribution) related to the topic introduced by the principal.
Each proposal has some valuation for the agent and some valuation for the principal which can be different (\eg a proposal may be more aligned with a research direction by the agent but less aligned with the principal's interests). 
The goal of the principal is to {\em maximize} own valuation of the finally selected proposal while the goal of each agent is to maximize his utility which is his valuation for a proposal selected by the principal and zero otherwise.

In this example, the principal faces a problem of \emph{mechanism design without money} to incentivize the agents to behave more in favor of her.
In practice, the setting here can be an {\em incomplete information} game in the sense that though an agent knows the topic introduced by the principal and general research topics of the other agents (\ie distributions of possible proposals), he may not know the exact topics that other agents consider for their proposals.
In addition, the principal may have an {\em examination cost} (\eg travel cost for time-consuming site visits) for each proposal which should be accounted for in the proposal selection process as well. 
A fundamental question in this example is how should the principal design a mechanism to maximize her own utility function.

%

Initiated by \citet{holmstrom1980theory}, there exists a rich line of literature studying the theory of \emph{delegation without monetary transfer}.
The restriction that the principal cannot commit to a contingent monetary transfer in the mechanism imparts a restrictive structure on feasible mechanisms in which the mechanism simply commits to a set of acceptable solutions, and only decide whether or not to accept the proposed solutions.
\citet{holmstrom1980theory} study the optimality of these simple mechanisms and the characterizations of the regimes for these mechanisms to be optimal.
Subsequent works including \citet{alonso2008optimal} and \citet{armstrong2010model} develop the theory of delegation in this context and broaden its scope in fruitful direction.

Recently, with a lens of computer science, \citet{kleinberg2018delegated}  study a problem of designing approximate mechanisms in the \emph{delegated search} problem.\footnote{They study two types of model, one of which is similar to delegated choice without search cost by \cite{armstrong2010model}, and the other is that with search (sampling) cost. Our model can be framed as either of a delegated search problem with exogenous solutions, or delegated choice problem which delegates the choice between given alternatives.}
They consider a simple mechanism so-called a {\em single-proposal} mechanism, which consists of single-round interaction between the principal and the agent.
Remarkably, it achieves a $2$-approximation compared to an ideal scenario in which the principal can directly select the best solution among the entire set of agents' solutions.
Their analysis is based on a novel observation regarding a connection between the dynamics of the single-proposal mechanism and the dynamics of threshold-based algorithms in \emph{prophet inequalities} (see \citet{samuel1984comparison},
\citet{krengel1987prophet},
\citet{hajiaghayi2007automated},
\citet{alaei2012online}, \citet{abolhassani2017beating}, \citet{correa2017posted} for more reference to prophet inequalities).

As we discussed in the above example, there exists a variety of real-world situations, spanning economical applications to modern intelligent systems, such that the principal needs to delegate the task to multiple agents with misaligned payoff.
Surprisingly, however, the problem of delegating to \emph{multiple}  agents in the framework of mechanism design without money has received less attention in the community.
The recent work by \citet{gan2022optimal}, preceded by \citet{alonso2014resource}, study the optimal delegation mechanism with two agents in the context of mechanism design without money, but their model is significantly different from ours, and also is restricted to small number of agents with specific utility functions.\footnote{We provide the detailed discussion in Section~\ref{sec:related_work}.}

In this context, we introduce the \emph{multi-agent delegation} problem, and conduct a comprehensive study on the approximation factors of various mechanisms without monetary transfer.
Technically, we aim to design a multi-agent delegation mechanism that improves upon the guarantees of a single-agent mechanism by \citet{kleinberg2018delegated}.
Surprisingly, different from the single-agent mechanism which assumes the prior knowledge on the agent's distribution to derive a constant approximation factor, \ie \Emph{Bayesian mechanism} (BM), we show that we can construct an efficient mechanism even without those prior knowledge, \ie by \Emph{prior-independent mechanism} (PIM), which is mainly due to the competitive tension arise from the multi-agent extension.
Notably, however, we further observe that the amount of such a compelling power from competition differs with respect to the information available to the agents, and the degree of correlation between the principal's and the agent's utility.
Overall, we comprehensively analyze the approximation factors of BM and PIM in both the \Emph{complete} and \Emph{incomplete information} settings (in terms of the agents' information).
We note that our main technical results lie on the \emph{analysis of approximation factors of PIM}.
To highlight a few of the main results,
\vspace{-0.3mm}
\begin{enumerate}
    \item In the complete information setting, we show that there exists a PIM such that the principal's expected utility is bounded below by the \emph{first-best solution} (in the principal's view) of the \emph{second-best agent}. We further quantify this approximation factor by analyzing the upper and lower bound of the principal's expected utility in various regimes. This tends to asymptotically optimal in certain regimes, \ie the gains from multi-agent delegation is significant. En route to this analysis, we present several technical results regarding the order statistics of i.i.d. random variables, which might be of independent interest.
    \item In the incomplete information setting, we show that under symmetric agents with uniform distribution $U[0,1]$ and the assumption of independent utility, we can construct a PIM in which there exists a $(1-e^{-n^2/2(n-1)^2})$-approximate \emph{Bayes Nash equilibrium} such that the principal's expected utility becomes exactly optimal, where $1-e^{-n^2/2(n-1)^2}$ converges to $1-\nicefrac{1}{\sqrt{e}} \cong 0.3935$ as the number of agents $n$ increases.
    We numerically verify that this guarantee is almost tight. Using a similar technique, we analyze that there exists a \emph{Bayes correlated equilibrium} with a multiplicative approximation factor of $e^{n^2/2(n-1)^2} \cong \sqrt{e}$.
    \item We further show that the assumptions of symmetric agents and independent utility are necessary. More formally, without these assumptions, we can construct a problem instance such that there exists a \emph{Bayesian Nash equilibrium} in which the agents propose their worst solutions for the principal, which may induce arbitrarily bad principal's utility.
\end{enumerate}
\vspace{-0.3mm}
To select the best solution among the solutions from multiple agents, however, the principal suffers a cost of \emph{examining the candidates}.
In this context, we extend our problem setup and further study a trade-off between such an examination cost and an efficiency of mechanisms, and provide an analogous result in the complete information setting.


The remainder of the paper is organized as follows.
In Section~\ref{sec:overview}, we summarize overall results and the techniques therein.
Section~\ref{sec:related_work} provides detailed discussions on the related work.
Section \ref{sec:setup} introduces the formal problem setup.
In Section \ref{section:no_cost} and \ref{section:no_cost_incomp}, we provide the results in the complete and incomplete information setting, respectively.
Finally in Section \ref{sec:cost}, we introduce the examination cost and some results in this extension.




\section{Overview of results}\label{sec:overview}
Before presenting the summary of our results, we briefly discuss our problem setup and some preliminaries.
We refer to Section~\ref{sec:setup} for a more formal problem setup.
In \emph{multi-agent delegation} problem,
there exists a principal who wants to find a solution for a task by
designing a mechanism that delegates the task to $n$ agents.
Each agent $i$ observes $k_i$ solutions by sampling from his own distributions, where both the number of solutions and the distributions are \emph{exogenous} to the mechanism.
These exogenous parameters can be interpreted as the intrinsic level of effort and skill of the agent.
Afterward, each agent proposes a subset of observed solutions (signal) to the principal.
Eventually, the mechanism selects a single solution, \emph{winner}, from the proposed ones.
If the solution $\omega$ is selected as the winner,
the principal receives the utility of $x(\omega)$.
If $\omega$ is proposed by agent $i$, he gets the utility of $y_i(\omega)$ while other agents realize the utility of zero.
Given the mechanism, each agent $i$ determines a (possibly mixed) strategy $\sigma_i$ to maximize own utility, which maps the observed solutions and information regarding others' solutions to the signal, and plays an action based on the strategy.
Importantly, we impose no restriction on the utility function $x(\cdot)$ and $y_i(\cdot)$, \ie they might be misaligned, which brings an interesting trade-off in designing the mechanism.

We introduce technical notations to summarize our results.
Denote by $\omb_i = \{\om_{i,1}, \om_{i,2}, \ldots \om_{i,k_i}\}$  the agent $i$'s observed solutions for $i \in [n]$.\footnote{We use $[n]$ to denote $\{1,2,\ldots, n\}$ for $n \in \N$ throughout the paper.}
Define $X_{i,j} = x(\om_{i,j})$ for $i \in [n]$ and $j \in [k_i]$.
Sort $X_{i,j}$ for $j \in [k_i]$ in decreasing order, and denote by $X_{i,(1)} \ge X_{i,(2)} \ge \ldots \ge X_{i,(k_i)}$.
We denote by $X_{i,(1)}$ the first-best solution of agent $i$.
Again, sort $X_{i,(1)}$ for $i \in [n]$ in decreasing order, and obtain $X_{(1)} \ge X_{(2)} \ge \ldots \ge X_{(n)}$.
Let the owner of $X_{(i)}$ be the $i$-th best agent.
We often overwrite $X_{(1)}$ by $X_{\max}$.
Our benchmark of the mechanism is $\Ex{X_{(1)}}$, \ie the case when the principal can select the best one by directly accessing all the observed solutions.
Let $f_{M,\sigma}$ be an interim allocation function which specifies the outcome under the mechanism $M$ and some strategies $\sigma$, given a realization of the solutions.
Then, given some strategies $\sigma$ of the agents, we say that mechanism $M$ is $(\rho, \gamma)$-approximate if 
\begin{align*}
\rho \Ex{f_{M,\sigma}} + \gamma \geq \Ex{X_{(1)}}.
\end{align*}
We denote by $\rho$ and $\gamma$ the multiplicative and additive approximation factors, respectively.
We say that $M$ has multiplicative \emph{price of anarchy} ($\poa_m$) of $\rho$ if it is $(\rho,0)$-approximate under \emph{any Nash equilibrium} of agents' strategies, and it has multiplicative \emph{price of stability} ($\pos_m$) of $\rho$ if there exists \emph{a Nash equilibrium} such that $M$ is $(\rho,0)$-approximate. We similarly define $\poa_a$ and $\pos_a$ to denote the price of anarchy/stability for $(1,\gamma)$-approximation, \ie  additive approximation factor.\footnote{See Definition~\ref{def:apx-mech} and \ref{def:poa-pos} for more details.}


\subsection{Contributions and techniques}
We here summarize our key results and the major techniques therein.
Basically, we analyze two types of mechanisms: (1) Bayesian mechanism (BM) with knowledge of the prior distributions, and (2) prior-independent mechanism (PIM) which does not know the distributions in advance.
We further consider complete and incomplete information settings.
In the complete information setting, each agent knows the others' observed solutions while in the incomplete information setting, he only knows the distributions from which the others sample, but not the realized solutions.

We begin by showing that any mechanism can be reduced to a \emph{multi-agent single-proposal} mechanism (MSPM, Definition~\ref{def:mspm}) in Theorem \ref{thm:multiagent_revel}.
Essentially, this theorem implies that we can narrow down our scope of interest to the space of the MSPMs.
We note that the MSPM can be either BM or PIM depending on whether the mechanism uses the knowledge on the distributions or not.
All the positive results we present below will be built upon MSPMs.

\paragraph{General result in the complete information setting}
We start with the general result on MSPM.
\begin{customthm}{\ref{theorem:multi-agent-unlimited-budget}}
\label{thm:1.1_informal}
    In the complete information setting, for any $\tau \ge 0$, there exists a MSPM with $\poa_a$ of
	\begin{align*}
		\Ex{X_{(1)} - X_{(2)}\IND[X_{(2)}\ge \tau] - \tau \IND[X_{(1)} \ge \tau > X_{(2)}]}
        .
	\end{align*}
\end{customthm}
Intuitively, the first-best agent needs to guarantee that he will be selected by the principal while maximizing his own utility.
This motivates him to propose a solution that yields the principal's utility at least that of the first-best solution of the second-best agent.
The proof is built upon such simple intuition, however, it is still tricky to characterize arbitrary (possibly mixed) Nash equilibrium of the agents, and the formal proof can be found in Appendix~\ref{pf_theorem:multi-agent-unlimited-budget}.
This result implies that the additive approximation factor is characterized by the difference between $X_{(1)}$ and $X_{(2)}$, \ie the difference of first-best solution between the first-best and second-best agents.
On top of it, the principal can also smartly define the threshold $\tau$ using the prior information, and further improve the quality of the candidates, although it may decrease the probability of the acceptance.

\paragraph{BM in the complete information setting}
From the general result, we obtain the following result on BM which recovers the single-agent guarantee in \citet{kleinberg2018delegated}.
\begin{customthm}{\ref{thm:recover} and \ref{thm:lowerbound_comp_worst}}\label{thm:recover_informal}
        In the complete information setting, there exists a BM with $\poa_m$ of $2$.
    In addition, this multiplicative approximation factor is tight, \ie there exists a problem instance that no BM can achieve $\pos_m < 2-\eps$.
\end{customthm}

\begin{table}[]
\centering
\resizebox{11cm}{!}{
\begin{tabular}{|c|c|c|c|}
\hline
\textbf{Information}                                                                 & \textbf{Bound} & \textbf{General}   & \textbf{Symmetric $\alpha$-decaying}
\\ \hline
\multirow{2}{*}{\textbf{\begin{tabular}[c]{@{}c@{}}\\ Complete\end{tabular}}}   & Lower     &$\pos_m \ge 2$ \quad ( Thm \ref{thm:lowerbound_comp_worst}) & open 
\\ \cline{2-4} 
& Upper&$\poa_m \le 2$ \quad (Thm \ref{thm:recover})& $\poa_m \le \parans{\frac{\alpha nk}{\alpha nk+1}}^{\frac{-1}{\alpha nk}}$  (Thm \ref{thm:informed}) \\ \hline
\multirow{2}{*}{\textbf{\begin{tabular}[c]{@{}c@{}}\\ Incomplete\end{tabular}}} & Lower & $\pos_m \ge 2$ \quad (Cor \ref{thm:recover_bayes})& open 
\\ \cline{2-4} & Upper & $\poa_m \le 2$ \quad (Cor \ref{thm:recover_bayes})& $\poa_m \le \parans{\frac{\alpha nk}{\alpha nk+1}}^{\frac{-1}{\alpha nk}}$  (Cor \ref{thm:recover_bayes}) \\ \hline
\end{tabular}
}
\caption{Summary of results on Bayesian mechanism.  \emph{Symmetric} refers to the symmetric agents and \emph{$\alpha$-decaying} denotes the distributions such that its c.d.f. $F$ satisfies $F(x) \le x^{\alpha}$ for some $\alpha > 0$. Lower bounds for the symmetric setting remain as open problems.}
\vspace{-5mm}
\label{table:1}
\end{table}

The upper bound follows from the reduction from multi-agent setting to single-agent setting, where we in fact obtain stronger comparative statics such that the multi-agent delegation is always superior to the single-agent delegation.
We discuss a more detailed comparison between the single-agent and multi-agent settings in Appendix~\ref{sec:compare_mul_sin}.
The lower bound is obtained by introducing a concept of \emph{super-agent} equipped with the standard worst-case instance in prophet inequalities.
This worst-case instance implies that there is no merit in delegating the task to multiple agents with highly \emph{asymmetric} levels of skill.

We then naturally consider a \emph{symmetric} case (see Definition \ref{def:symmetric}) when the agents are equipped with the same parameters, \ie they have similar levels of skill and effort.
In this case, for a certain class of distributions, we show that the multiplicative approximation factor can be subconstant for BM, \ie converges to $1$ as $n$ (the num. of agents) or $k$ (the num. of solutions per agent) increases.
\begin{customthm}{\ref{thm:informed}}\label{thm:subconstant_informal}
    In the complete information setting, consider the symmetric agents with $X_{i, j} \sim D$ where $D$ is a distribution supported on $[0,1]$ and the cumulative distribution function (c.d.f.) of $F(x)$ satisfies $F(x) \le x^{\alpha}$ for some $\alpha > 0$.
    Then, there exist a BM with $\poa_m$ of $\parans{\frac{\alpha nk}{\alpha nk+1}}^{-\frac{1}{\alpha nk}}$. Hence, the principal's expected utility converges to $\Ex{X_{\max}}$ as $n$ or $k$ increases.
\end{customthm}
To obtain the subconstant bound above, we use threshold-based eligible sets that are tighter than the ones used in the prophet inequalities.
We present the summary of results for BM in Table~\ref{table:1}.

%

\paragraph{PIM in the complete information setting}
For PIM, we observe that Theorem~\ref{thm:1.1_informal} yields an additive approximation factor of $\Ex{X_{(1)}- X_{(2)}}$ by setting $\tau = 0$ (see Corollary~\ref{cor:infbud_notau}).
This indeed requires no prior knowledge as it accepts any solution.
On the other hand, we argue that this guarantee can be arbitrarily bad if we do not restrict our problem instances to symmetric agents.
\begin{customthm}{\ref{thm:comp_obl_neg}}\label{thm:arbbad_informal}
    In the complete information setting, suppose that principal's utility function lies in $[0,L]$ for some $L>0$.
    Then for any $\eps > 0$, there exists a problem instance such that no PIM can achieve $\pos_a \le L-\eps$.
\end{customthm}
The proof is based on a construction worst-case instance with a super-agent similar to the lower bound under the general setting in BM.

We thus consider the symmetric agents and derive the following result.
\begin{customthm}{\ref{thm:comp_sym_general}}
    In the complete information setting, consider symmetric agents with the principal's utility function lies in $[0,L]$ for some $L>0$.
    Then, there exists a PIM with $\poa_a$ of $(1-\frac{1}{n})^{n-1}$.
\end{customthm}
Note that $(1-\frac{1}{n})^{n-1} \le 1/2$ for all $n \ge 2$, and decreases to $\nicefrac{1}{e}$ as $n$ increases.
This constant approximation is a big improvement over the arbitrarily bad approximation in the general setting.

Next, we improve the approximation factors of PIM to be subconstant with respect to some canonical family of distributions.
At the heart of this analysis, we come up with several new results (see Lemma \ref{lm:order_stat_diff}, \ref{lm:mhr_preserving}, \ref{lm:order_stat_diff_restricted}, and \ref{lm:id_preserving}) in order statistics\footnote{Given $n$ i.i.d. samples, $r$-th order statistics refer to the $r$-th smallest value among the i.i.d. samples.}, which might be of independent interest.
Detailed discussions are presented in Section \ref{sec:comp_pim}.
\begin{customthm}{\ref{thm:mhr_upperbound} and \ref{thm:decaying_upperbound}}
    In the complete information setting, consider symmetric agents such that $X_{i, j} \sim D$ where $D$ is a distribution with a probability distribution function (p.d.f.) of $f(x)$.
    Let $D_{\max}$ be the distribution of $X_{i,(1)}$ for any $i \in [n]$. 
    Denote by $\var(\cdot)$ the variance of a distribution.
    Then, the following holds.
    \begin{enumerate}
	\item If $D$ has MHR (see Section~\ref{sec:comp_pim} for definition), then there exists a PIM with $\poa_a$ of $\sqrt{\frac{4\var(D_{\max})}{3}}$. In case of $D = U[0,1]$, we have $\sqrt{\frac{4\var(D_{\max})}{3}} = \sqrt{\frac{4k}{3(k+1)^2(k+2)}} \approx \frac{1.155}{k}$.
	\item If $f(x)$ is nondecreasing on $x$, then there exists a PIM with $\poa_a$ of $\frac{\sqrt{12\var(D_{\max})}}{n+1}$. In case of $D = U[0,1]$, we have $\frac{\sqrt{12\var(D_{\max})}}{n+1} = \frac{1}{n+1}\sqrt{\frac{12k}{(k+1)^2(k+2)}} \approx \frac{3.464}{k(n+1)}$.
    \end{enumerate}
\end{customthm}

\paragraph{Incomplete information setting}
In the incomplete information setting, we first discuss that the results on the BM, Theorem~\ref{thm:recover_informal} and Theorem~\ref{thm:subconstant_informal}, still carry over in the incomplete information setting.
In addition, without the assumption of symmetric agents, we still have an arbitrarily bad lower bound in the approximation ratio for any PIM, \ie Theorem~\ref{thm:arbbad_informal} still holds.
This justifies our assumption of symmetric agents in the incomplete information setting as well.
Interestingly, however, assuming the symmetric agents is not sufficient to derive a positive result.
\begin{customthm}{\ref{thm:pim_neg_incomp}}
    In the incomplete information setting, under symmetric agents with $X_{i,j} \sim U[0,1]$, there is a problem instance where no PIM has $\poa_a < \frac{k-1}{k+1} - \sqrt{\frac{\log n}{2k+3}}$.
\end{customthm}
Note that by making $k$ arbitrarily large give $n$, RHS can be arbitrarily close to $1$.
The proof is based on a construction of strongly negatively correlated utility between the principal and the agents, preceded by some algebraic manipulations.
Interestingly in this problem instance, each agent's utility decreases by playing for sake of the principal.
Hence, each agent settles with a solution that gives the agent largest ex-post utility (and thus smallest principal's utility), even though it yields a small probability to be selected.
We present the result for uniform distribution for simplicity, but we in fact provide a more general result.

\begin{table}
\centering
\resizebox{14cm}{!}{
\begin{tabular}{|c|c|c|c|c|c|c|}
\hline
\textbf{Information}  & \textbf{Bound} & \textbf{General}  & \textbf{Symmetric (Sym)}   & \textbf{Sym MHR}  & \textbf{Sym Inc}  & \textbf{Sym Ind}            
\\ \hline
\multirow{2}{*}{\textbf{\begin{tabular}[c]{@{}c@{}}\\ Complete\end{tabular}}}   & Lower      & \begin{tabular}[c]{@{}c@{}}$\pos_a \ge L$\\ Thm \ref{thm:comp_obl_neg}\end{tabular} &
Open
& Open & Open   & Remark \ref{remark:just_table2}  
\\ \cline{2-7} 
& Upper      & Cor \ref{cor:infbud_notau}   & \begin{tabular}[c]{@{}c@{}}$\poa_a \le L\parans{1-\frac{1}{n}}^{n-1}\cong \frac{L}{e}$\\ Thm  \ref{thm:comp_sym_general}\end{tabular}  & \begin{tabular}[c]{@{}c@{}}$\poa_a \le \frac{1.15}{k}$\\ Thm \ref{thm:mhr_upperbound}\end{tabular} & \begin{tabular}[c]{@{}c@{}}$\poa_a \le \frac{3.46}{k(n+1)}$\\ Thm \ref{thm:decaying_upperbound}\end{tabular} & Remark \ref{remark:just_table2}  
\\ \hline
\multirow{2}{*}{\textbf{\begin{tabular}[c]{@{}c@{}}\\ Incomplete\end{tabular}}} & Lower  & \begin{tabular}[c]{@{}c@{}}$\pos_a \ge L$\\ Cor \ref{thm:pim_lower}\end{tabular} & \begin{tabular}[c]{@{}c@{}}$\poa_a \ge \frac{k-1}{k+1} - \sqrt{\frac{\log n}{2k+3}}$\\ Thm \ref{thm:pim_neg_incomp}\end{tabular} & Remark \ref{remark:just_table} & Remark \ref{remark:just_table}& Open \\ \cline{2-7} & Upper      & Remark \ref{remark:just_table3} & Remark \ref{remark:just_table}   & Remark \ref{remark:just_table} & Remark \ref{remark:just_table} & \begin{tabular}[c]{@{}c@{}}$0.39$-$\pos$ is $\opt$\\ Thm \ref{thm:bayes_oblivious} \end{tabular} \\ \hline
\end{tabular}
}
\caption{Summary of results on prior-independent mechanism. \emph{ MHR} refers to MHR distribution, \emph{ Inc} refers to nondecreasing p.d.f., and \emph{Sym Ind} refers to independent utility setting with the uniform distribution $U[0,1]$.
\emph{Open} denotes that this regime remains an open problem.
For general case lower bound for both settings, and symmetric case upper bound in the complete information setting, we assume that the support of the principal's utility function lies in $[0,L]$. For Sym MHR, Sym Inc, and lower bound in the incomplete information setting under Sym, we present the results for $U[0,1]$ for brevity.}
 \vspace{-6mm}
 \label{table:2}
\end{table}

In this context, we additionally consider an \emph{independent utility} assumption in which the agents' and principal's utilities are independent of each other (see Definition~\ref{def:indep-utility} for more details).
Then we show that there exists a good approximate Bayes Nash equilibrium (BNE) (see Definition \ref{def:apxbne} for more details) such that the principal's expected utility becomes exactly optimal.
\begin{customthm}{\ref{thm:bayes_oblivious}}
    In the incomplete information setting, consider symmetric agents and independent utility assumption such that $X_{i,j},Y_{i,j} \sim U[0,1]$.
    Then, there is an optimal PIM with $\beta$-approximate BNE where $\beta = 1-\exp\parans{\frac{-n^2}{2(n-1)^2}}$.
    Note that $\beta$ becomes $0.3935...$ as $n$ grows.
\end{customthm}
Different from the complete information setting, deriving and analyzing BNE is far more challenging, especially for our problem setup.
One reason is that the feasible action of an agent depends on the realization of his solution.
In addition, computing the expected payoff of an agent's strategy requires us to compute the expectation over every possible realization of the others' solutions.
Despite these difficulties, we obtain the above result, of which the proof heavily relies on the FKG inequality (Lemma \ref{lm:FKG}) and some properties on the joint density function of two order statistics, preceded by some algebraic manipulations.
Using a similar but more involved technique, in Appendix~\ref{sec:correlated_equi}, we show that there exists a PIM such that under some \emph{Bayes correlated equilibrium}, it achieves a multiplicative approximation factor of $\exp\parans{\frac{n^2}{2(n-1)^2}} \cong 1.649$ (as $n$ increases).
We present the summary of results on PIM in Table~\ref{table:2}.
We remark on some points regarding the table as follows.
\vspace{-5mm}
\begin{remark}\label{remark:just_table}
    We do not study the upper bound for symmetric agents in the incomplete information setting because the lower bound in this setting can be arbitrarily bad for certain problem instances.
    In addition, we do not consider Sym MHR and Sym INC in the incomplete information setting since the lower bound in the symmetric setting already implies a pessimistic result.
\end{remark}
\vspace{-3mm}
\begin{remark}\label{remark:just_table2}
    Since the complete information setting inherently admits positive results only with the symmetry assumption, we do not consider the symmetric agents and independent utility scenario in the complete information setting.
\end{remark}
\vspace{-3.5mm}
\begin{remark}\label{remark:just_table3}
    The upper bound for the general incomplete information setting is not studied, as one can easily obtain a mechanism with an approximation ratio of $\infty$, \eg a mechanism that selects a solution to maximize the agent's utility.
\end{remark}

\paragraph{Examination cost}
Finally, we study an extension to consider the examination cost of the mechanism.
The examination cost of a mechanism is defined by the number of candidates that the mechanism has to access to choose the winner.
We refer to Definition~\ref{def:exam} for more details.
In this extension, we obtain the following result of a similar flavor to Theorem~\ref{theorem:multi-agent-unlimited-budget}.
\begin{customthm}{\ref{theorem:multi-agent-limited-budget}}
    Define $X_{(n + 1)} = -\infty$.
    For any $\tau \ge 0$, there exists a mechanism with examination cost $2 \le B < n$ such that $\poa_a$ is at most,
    \begin{align*}
    \Ex{X_{(1)}}
    -
    \Ex{X_{(2)}\IND\left[X_{(2)} \geq \tau > X_{(B+1)}\right] -  \tau\IND\left[X_{(1)} \geq \tau > X_{(2)}\right]
    -
    \sum_{e=B+1}^{n} X_{(e-B+2)}\IND\left[X_{(e)} \geq \tau > X_{(e+1)}\right]}.
    \end{align*}
\end{customthm}
This depicts the trade-off between the examination cost and the approximation ratio.
Indeed, $\poa$ increases as $B$ decreases, and plugging $B = n$ carries over to the bound in Theorem~\ref{theorem:multi-agent-unlimited-budget}.







\section{Related work}\label{sec:related_work}
Initiated by \citet{holmstrom1980theory}, a rich line of literature has studied the delegation mechanism without monetary transfer.
\citet{holmstrom1980theory} considers a setting in which a single agent is delegated to solve an optimization problem over a compact interval, and characterizes conditions under which an optimal solution of the principal exists. \citet{alonso2008optimal} fulfill the results by providing a general characterization of the solution.
\citet{armstrong2010model} study a discrete model in which an agent samples a set of solutions from a distribution and optimizes over the sampled set, which is the most similar problem setup to ours.

While most literature in the theory of delegation aims at characterizing regimes in which the optimal mechanism exists or the structure of it, \citet{kleinberg2018delegated}  study the loss of efficiency from delegating the task into an agent with misaligned payoff. They consider two models, one of which is similar to delegated project choice of \citet{armstrong2010model}.
In this model, the agent samples a fixed number of solutions from a distribution and proposes some of the solutions to the principal in a strategic manner.
Surprisingly, they show that a simple mechanism enjoys a $2$-approximation ratio compared to the ideal scenario in which the principal can directly choose the optimal solution, based on a connection to the prophet inequalities.

Several works have been devoted to the delegation mechanism without money in the presence of multiple agents.
One of the most relevant problem setting would be \emph{legislative game} with asymmetric information by~\citet{gilligan1989asymmetric}, \citet{krishna2001asymmetric}, \citet{martimort2008informational} and \citet{fuchs2022listen}.
They consider a model in which two self-interested agents with information advantage observe \emph{a state of nature} and independently send a private signal to the principal.
The principal aims to extract truthful information on the state of nature to maximize own utility that depends on it.
Recent works by \citet{alonso2014resource} and \citet{gan2022optimal} extend these settings by considering multiple actions for the principal.
Importantly however, all of their models are significantly different from our model.
In our model, the information asymmetry between the principal and the agents arises from "moral hazard" versus "adverse selection" of the previous works, as per \citet{ulbricht2016optimal}.\footnote{Notably, \citet{ulbricht2016optimal} study both effect of moral hazard and adverse selection in the delegation mechanism but with a monetary transfer. Likewise, there exists a fruitful line of literature dealing with contract-based delegation, \eg \citet{krishna2008contracting} and \citet{lewis2012theory}, but we will not discuss in detail on the delegation mechanism with money.}
In moral hazard, the principal cannot directly observe all the alternatives, but can only access a subset of them based on the signals, whereas in adverse selection, a state of nature is revealed only to the agent but affects the principal's utility  as well.

Finally, there is an increasing attention in the Computer Science community regarding an algorithm design with a delegated decision.
\citet{bechtel2020delegated} considers a stochastic probing problem when the principal delegates the role of probing to a self-interested agent. They quantify the loss of delegation similar to \citet{kleinberg2018delegated} by revealing a connection between delegated stochastic probing and generalized prophet inequalities.
\citet{bechtel2022delegated} generalizes the second model of \citet{kleinberg2018delegated}, namely Weitzman's box problem which deals with a costly sampling of the agent, by considering a combinatorial constraint.
\section{Problem setup}
\label{sec:setup}
\vspace{-2mm}\xhdr{Multi-agent delegation.}
Consider a \emph{principal} who designs a delegation mechanism and a~set of $n$ \emph{agents} who perform the task and decide solutions.
Let $\Omega$ be a set of potential solutions.
Each agent is equipped with a number of solutions $k_i \in \N$, and each solution $j$ for $j \in [k_i]$ is sampled from a probability distribution $D_{i,j}$ supported on $\Omega$.
We assume that $k_i \ge 2$ for $i \in [n]$ to avoid trivial regime.
Each distribution $D_{i,j}$ may refer to
the potential quality of the $j$th solution of agent $i$, and the number of solutions $k_i$ can be interpreted as the agent's willingness-to-pay in terms of his effort, both of which are exogenous to the mechanism.
We denote by $\om_{i,j} \in \Omega$ the agent $i$'s $j$-th solution for $i \in [n],j \in [k_i]$, and $\omb_i$ the agent $i$'s multi-set of all solutions.
We further use  $\omb = \cup_{i \in [n]}\omb_i \in \Omega^*$ to denote the multi-set of all the solutions sampled by the agents.
After agent $i$ observes his own solutions $\omb_i$, he sends a signal to the principal, and then the principal finally adopts one of the solutions based on the agents' signals.


For notational consistency, let \emph{solution} denote an outcome observed by an agent, \emph{candidates} to denote the set of solutions that the mechanism can access given the agents' signals, \emph{winner} to denote the final solution adopted by the principal.
We further say our problem to be \emph{single-agent} problem if $n=1$, and \emph{multi-agent} problem if $n \ge 2$.

\paragraph{Mechanism}
A \emph{mechanism} $M = (\Sigma, f, S)$ is specified by a set of signals $\Sigma = \Sigma_1 \times \ldots \times \Sigma_n$ of which the $i$-th coordinate $\Sigma_i$ corresponds to the signal from agent $i$, a set of possible random bits $S$ of the mechanism, and an allocation function $f: \Sigma \times S \mapsto \Omega$ which denotes the winner selected by the principal given the signals by agents and the random bits.
If the mechanism is deterministic, we overwrite the domain of the allocation as $f: \Sigma \mapsto \Omega$.
We mostly deal with a deterministic mechanism but introduce a general definition for completeness.
For some given random bits $s \in S$, we denote by $M_s$ the mechanism in which its random bits are fixed to be $s$. 
Note that we use a generalized notion of signal to denote any possible interaction from the agents to the principal.
For example, agent $i$ may simply propose a subset of his observed solutions as candidates, or he may exploit a mixed strategy of proposing solutions randomly.
Throughout the paper, we use $\Omega^*$ to denote a family of all finite multi-set over $\Omega$.

\paragraph{Agent's strategy} 
Each agent $i$ selects a certain signal based on a \emph{strategy} $\sigma_i:\Omega^* \mapsto \Sigma_i$.
More formally, given a multi-set of the $k_i$ solutions $\omb_i = \{\om_{i,1},\ldots, \omega_{i,k_i}\}$, agent $i$ sends a signal $\sigma(\omb_i)$ to the principal.
Importantly, agent $i$ determines the strategy based on the information regarding the others' solutions as well as his own solutions.
We elaborate more in the subsequent paragraph.
The principal has a utility function $x:\Omega \mapsto \R_{> 0}$ and
each agent $i$ for $i \in [n]$ has a utility function $y_i:\Omega\mapsto \R_{> 0 }$.
If solution $\omega \in \omb$ is selected as the winner and $\om$ belongs to agent $i$, then the principal and agent $i$ realize utilities of $x(\omega)$ and $y_i(\omega)$ respectively.
The other agents $j\neq i$ realize their utility to be zero.
Note that we restrict the winner to be within $\omb$ so that the principal cannot commit to a solution beyond the observed set of solutions.
We further assume that $\Omega$ includes a null outcome $\perp$.
If the principal selects $\perp$ as the winner, this represents that no solution is adopted by the principal and the realized utility is zero, \ie $x(\perp) = y_i(\perp) = 0$ for $i \in [n]$.



\paragraph{Information structure}
In terms of the principal's information for agents' distributions of solutions, we consider the following two types of mechanism.
\begin{definition}[Bayesian and prior-independent mechanism]
	We say that mechanism $M$ is \Emph{prior-independent mechanism} (PIM), if it does not have any information on the distributions from which the solutions are drawn.
	If it priorly knows the distribution from which the quantity $x(\cdot)$ over the solutions are drawn for each agent,
	we denote by \Emph{Bayesian mechanism} (BM).
\end{definition}
Importantly, as noticed in the definition, PIM has significant informational gain over BM.
It does not require any process of acquiring information on the distribution of solutions.
Indeed, in practice, it is difficult or even impossible to estimate the distribution of each agent in advance.
Even if it's possible, it usually entails a costly procedure to acquire such information, and thus PIM is more broadly applicable.

Regarding the principal's information on the agents' solutions, we consider an information asymmetry arises from \emph{moral hazard} such that the principal has no information on the realized set of solutions since she delegates the search (choice) to the self-interested agents.
Instead, she can only access a subset of solutions based on the signals provided by the agents.

To model the agent's information structure, we basically assume that the agents know the principal's utility function.
Furthermore, we consider two cases of complete information and incomplete information based on the agent's information with respect to the others.
In the \Emph{complete information} setting, all the observed solutions are revealed to each other, thereby each agent determines his own strategy after observing all the solutions.
In the \Emph{incomplete information} setting, only the distributions of the solutions are publicly available to the agents but not the exact observed solutions, and thus each agent determines his own strategy based on the given knowledge of the others' distributions.

\paragraph{Allocation function and overall mechanism}
We first define an interim allocation function which represents the winner adopted by the mechanism given the observed set of solutions.
\begin{definition}[Interim allocation function]
$f_{M,\sigma}: \Sigma \mapsto \Omega$ is an (interim) allocation function given mechanism $M$ and strategy $\sigma$.
That is, given $\omb = \cup_{i \in [n]}\omb_i$, the mechanism $M$ selects $f_{M,\sigma} (\omb)$ as the winner when each agent $i$ plays $\sigma_i$.
In case of randomized mechanism,  we use $f_{M_s, \sigma}$ to denote the allocation function given the random bits $s$, and $F_{M,\sigma}$ to denote the set of possible winners $\cup_s f_{M_s, \sigma}$.
\end{definition}
Overall, the mechanism proceeds as follows: (1) the principal commits to a mechanism $M$, (2) each agent $i$ observes solutions $\omb_i$, (3) each agent $i$ strategically sends a private signal $\sigma_i$, and (4) the principal selects the winner $f_{M,\sigma}$ (or $f_{M_s, \sigma}$).

\subsection{Further preliminaries}
We here present further notations that will be broadly used throughout the paper.
\begin{definition}[Nash equilibrium]
	Given a mechanism $M$, a set of strategies $\sigma = (\sigma_1,\sigma_2,\ldots, \sigma_n)$ is called a \emph{Nash equilibrium}, if $\sigma_i$ yields
	\begin{align}
		y_i(\sigma_i, \sigma_{-i})
		\ge
		y_i(\sigma_i', \sigma_{-i}),\label{ineq:01251707}
	\end{align}
	for agent $i$'s any other strategy $\sigma'_{i}$ given the others' strategy $\sigma_{-i}$.
	We call each $\sigma_i$ an equilibrium strategy of agent $i$. We often abuse equilibria or equilibrium to denote Nash equilibrium.
    Whenever the mechanism is randomized, we take expectation over the utility in~\eqref{ineq:01251707}.
\end{definition}


%
For sake of exposition, we define $X_{i,j} = x(\om_{i,j})$  and $Y_{i,j} = y_i(\om_{i,j})$ for each $j \in [k_i]$ and $i \in [n]$. 
Furthermore, we denote by $X_{i,(j)}$ the $j$-th highest solution in $\om_i$ in terms of $x(\cdot)$ so that $X_{i,(1)} \ge X_{i,(2)} \ge \ldots \ge X_{i,(k_i)}$ and 
define $Y_{i,(j)}$ the agent $i$'s utility for the solution that corresponds to $X_{i,(j)}$.
We often overwrite $X_{i,(1)}$ by the \emph{first-best solution} of agent $i$, and the agent who has $X_{(i)}$ by \emph{$i$-th best agent}.
Note that $X_{i,j}$ and $Y_{i,j}$ for each $j \in [k_i]$ and $i \in [n]$ are random variables such that $X_{i,j}$ and $Y_{i,j}$ are correlated with respect to $\om_{i,j}$.

We measure the principal's utility by comparing it to a strong benchmark $\opt$ given a realized set of solutions $\omb$, defined formally as follows.
\begin{align*}
\opt= \Ex{\max_{\om \in \omb} x(\omega)} = \Ex{\max_{i \in [n]} \max_{j \in [k_i]} X_{i,j}}.
\end{align*}
We often abuse $X_{\max} = X_{(1)} = \max_{i \in [n]} \max_{j \in [k_i]} X_{i,j}$.
Note that $\opt$ denotes the principal's utility when she can access all the solutions sampled by the agents, and select the best solution by herself.

In terms of the approximation ratio of the mechanism, we consider both the additive and multiplicative approximation factors, defined as follows.
\begin{definition}
    \label{def:apx-mech}
	Mechanism $M$ and corresponding strategies $\sigma$ are $(\rho,\gamma)$-approximate if
	\begin{align*}
		\rho\Ex{f_{M,\sigma(\omb)}} + \gamma
		\ge
		\opt.
	\end{align*}
\end{definition}
Note that the expectation in LHS is taken over the randomness in $M$ as well as that in solutions.

Since we deal with the approximation factor of the mechanism in its equilibrium, we further define the notion of the price of anarchy and the price of stability as follows.
\begin{definition}[Price of anarchy/stability]
    \label{def:poa-pos}
    Mechanism $M$ has multiplicative \emph{price of anarchy} ($\poa_m$) of $\rho$ if it is $(\rho,0)$-approximate under any equilibrium.
    It has multiplicative \emph{price of stability} ($\pos_m$) of $\rho$ if there exists an equilibrium such that $M$ is $(\rho,0)$-approximate.
    We similarly define $\poa_a$ and $\pos_a$ to denote the price of anarchy/stability for the additive approximation factor $(1,\gamma)$.
\end{definition}

For self-containedness, we present the mechanism introduced by \citet{kleinberg2018delegated}.
\begin{definition}[Single-proposal mechanism]
In a single-proposal mechanism (SPM) under the single-agent setting, an eligible set $R \subset \Omega$ is announced by the principal, the agent proposes a~solution $\om$, and the principal accepts it if $\om \in R$, and rejects it otherwise.
\end{definition}

In our multi-agent setup, we mainly investigate a variant of the SPM defined as follows.
\begin{definition}[Multi-agent single-proposal mechanism]\label{def:mspm}
In a \emph{multi-agent single-proposal mechanism (MSPM)}, a tie-breaking rule $\rho$ and an eligible set $R_i \subset \Omega$ are announced by the principal for each agent $i$.
Each agent proposes up to one solution, and the principal filters a candidate from agent $i$ if it is not contained in $R_i$.
After filtering, the principal selects the winner by choosing the candidate that maximizes her utility, given the tie-breaking rule $\rho$.
We say that MSPM has a homogeneous eligible set if $R_i = R$ for $i \in [n]$.\footnote{We provide further examples of reasonable mechanisms in Appendix~\ref{sec:fur_fea_mec}.}
\end{definition}

\begin{remark}\label{rm:bm_pim_comp}
Importantly, in terms of the principal's information, MSPM can be both BM and PIM depending on the eligible set $R_i$.
If we specify $R_i$ to admit any solution from $\Omega$, then it is PIM as we do not need any kind of prior knowledge on the solution.
However, if we want to specify $R_i$ to accept only a certain type of solution as in SPM by \citet{kleinberg2018delegated}, then we essentially need the prior information regarding the distribution of the solution.
This is indeed true because if we commit to an eligible set that possibly rejects some solutions without knowing the distribution of the solutions, then it is obvious that we can construct a worst-case problem instance in which the mechanism essentially cannot accept any solution, resulting in an arbitrarily bad approximation factor. We refer to Claim~\ref{cl:3} in Appendix~\ref{pf_thm:comp_obl_neg} for a  formal discussion.
\end{remark}

Finally, we pose a (minor) assumption that if an agent is indifferent between two solutions, he selects a solution that induces a higher utility for the principal.
\begin{assumption}[Pareto-optimal play]\label{as:pareto}
If an agent has two solutions $\om$ and $\om'$ such that $x(\om ) \ge x(\om')$, $y(\om) \ge y(\om')$ and at least one of these inequalities are strict, then the agent does not play strategy $\sigma_i = \om'$.
\end{assumption}
\begin{remark}
This assumption is reasonable since there exists no merit in playing such a dominated strategy $\om'$.
Moreover in practice, the agents may also want to behave in favor of the principal as much as possible unless they need to sacrifice their own utilities, for any future interactions.
\end{remark}
Before presenting our main results, we begin with the following observation that any mechanism can be reduced to MSPM with the proper choice of parameters.
\begin{theorem}\label{thm:multiagent_revel}
In the multi-agent setting, for any mechanism $M$ with an arbitrary equilibrium $\sigma$, there exists a MSPM $M'$ such that for any equilibrium $\sigma'$, for any $\om \in F_{M,\sigma}(\omb)$ and any $\om' \in F_{M',\sigma'}(\omb)$, we have $x(\om) \le x(\om')$ for any observed solutions $\omb \in \Omega^*$.
\end{theorem}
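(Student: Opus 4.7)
The plan is to prove the theorem by an explicit revelation-principle-style reduction. Given an arbitrary mechanism $M$ and an equilibrium $\sigma$, I would construct the MSPM $M'$ by setting the eligible set for each agent $i$ to be
\[
R_i \;=\; \bigl\{\omega \in \Omega : \omega \in F_{M,\sigma}(\bar{\omega}) \text{ and } \omega \in \bar{\omega}_i \text{ for some realization } \bar{\omega} \in \Omega^{*}\bigr\},
\]
that is, the set of solutions that could ever appear as a winner attributed to agent $i$ under some sample path of $(M, \sigma)$. The tie-breaking rule $\rho$ of $M'$ is chosen to prefer the candidate with the largest principal utility $x(\cdot)$, with an arbitrary fixed order for further ties. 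Feasibility of this construction is immediate: for every realization $\bar{\omega}$ and every $\omega \in F_{M,\sigma}(\bar{\omega})$ attributed to some agent $j$, we have $\omega \in R_j \cap \bar{\omega}_j$, so agent $j$ has the option in $M'$ of proposing $\omega$ as an eligible candidate.

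The core of the argument is to show that in any equilibrium $\sigma'$ of $M'$ and at any realization $\bar{\omega}$, every $M'$-winner $\omega' \in F_{M',\sigma'}(\bar{\omega})$ dominates every $M$-winner $\omega \in F_{M,\sigma}(\bar{\omega})$ in $x$-value. I would proceed by contradiction: suppose $x(\omega') < x(\omega)$ at some realization $\bar{\omega}$, where $\omega$ is attributed to agent $j$ in $(M,\sigma)$. Consider the deviation in which agent $j$ unilaterally proposes $\omega$ in $M'$; since $\omega \in R_j \cap \bar{\omega}_j$, it is an eligible proposal, and by the max-$x$ tie-breaking $\omega$ would then defeat $\omega'$ and become the $M'$-winner, yielding agent $j$ a payoff of $y_j(\omega) > 0$.

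The main obstacle is to show that this deviation is \emph{strictly profitable}, thereby contradicting the equilibrium property of $\sigma'$. The delicate case is when $\omega' = \sigma'_j(\bar{\omega})$ is agent $j$'s own equilibrium proposal and $y_j(\omega') \ge y_j(\omega)$ while $x(\omega') < x(\omega)$. To close this gap, I would exploit the fact that $\omega' \in R_j$ forces $\omega'$ to be a possible $M$-winner attributed to $j$ at some other realization $\bar{\omega}^{*}$; together with the Pareto-optimal play assumption (Assumption~\ref{as:pareto}) and the equilibrium property of $\sigma$ in $M$ at $\bar{\omega}$, one can derive that agent $j$ would already have had a profitable deviation in $(M, \sigma)$ (to steer the $M$-winner from $\omega$ to $\omega'$), contradicting that $\omega \in F_{M,\sigma}(\bar{\omega})$.

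For mixed equilibria and randomized mechanisms, I would couple the random bits of $M$ and $M'$ together with the mixing distributions of $\sigma, \sigma'$, and then apply the above contradiction argument on almost every sample path. The consistency bridge between the two equilibria---translating an apparent deviation in $M'$ into a deviation in $(M,\sigma)$---is the technical heart of the proof, and the full details would be deferred to an appendix.
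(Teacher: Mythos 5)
Your $R_i$ and the overall deviation-in-$M'$ strategy match the paper's construction. The case in which the $M'$-winner $\omega'$ is owned by a different agent than $j$ is correct (agent $j$ goes from utility $0$ to $y_j(\omega)>0$, a strict improvement), and is exactly the paper's second case. But the same-owner case is not closed. The suggested fix --- using $\omega' \in R_j$ together with the equilibrium property of $\sigma$ to ``derive a profitable deviation in $(M,\sigma)$ to steer the $M$-winner from $\omega$ to $\omega'$'' --- does not work. First, $\omega' \in R_j$ only certifies that $\omega'$ is the $M$-winner at some \emph{other} realization $\bar\omega^{*}$; it gives agent $j$ no unilateral control over the $M$-outcome at $\bar\omega$. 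Second, the direction is wrong: a move toward the lower-$x$, higher-$y_j$ outcome $\omega'$ would be \emph{profitable} for agent $j$, so even if you could establish it, it would not contradict the fact that $\sigma$ is an equilibrium --- it is precisely the thing you need to rule out.

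The paper closes the same-owner case with a different two-step argument. First, from the equilibrium property of $\sigma$ in $M$, it establishes that for any $\om\in\omb_i$ that $x$-dominates every solution owned by the other agents, one has $y_i(\om_M)\ge y_i(\om)$ (otherwise agent $i$ would profitably present $\om$ in $M$). Second, it argues that in the $M'$-equilibrium, the winner $\om_{M'}$ is $x$-dominant over all other agents' solutions (otherwise some other agent would deviate in $M'$ and win), so plugging $\om := \om_{M'}$ into the first step yields $y_i(\om_M)\ge y_i(\om_{M'})$. Then Pareto-optimal play (Assumption~\ref{as:pareto}), applied to agent $i$'s choice in $M'$ between $\om_M$ (which is in $R_i$ and hence available) and $\om_{M'}$, forces $x(\om_{M'})\ge x(\om_M)$. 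The crux you are missing is exactly this inequality $y_i(\om_M)\ge y_i(\om_{M'})$, extracted from the $M$-equilibrium and then combined with Pareto-optimal play inside $M'$; the paper makes no attempt to construct a deviation in $(M,\sigma)$ out of $\omega'$, and indeed doing so is a dead end for the reasons above.
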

The proof is provided in Appendix \ref{pf_thm:multiagent_revel}.
Thanks to the theorem, it suffices to narrow down our scope of interest to the set of MSPMs.
Correspondingly, we focus on analyzing the approximation factor of MSPM in both the complete and incomplete information settings.

\section{Complete information}\label{section:no_cost}
In the complete information setting, we first present a generic result regarding MSPM with arbitrary threshold-based eligible sets.


\begin{theorem}\label{theorem:multi-agent-unlimited-budget}
Given any $\tau \ge 0$, let $R$ be an eligible set with threshold $\tau$ such that
$R:= \{\omega: x(\omega) \ge \tau, \omega \in \Omega\}$, and $\rho$ be
an arbitrary deterministic
tie-breaking rule.
Then, the MSPM with $\rho$ and a homogeneous eligible set $R$ has $\poa_a$ of at most
\begin{align*}
\Ex{X_{(1)}} - \Ex{X_{(2)}\IND\left[X_{(2)} \geq \tau\right]} - 
\Ex{\tau\IND\left[X_{(1)} \geq \tau > X_{(2)}\right]}.
\end{align*}
\end{theorem}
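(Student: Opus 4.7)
The plan is to fix an arbitrary Nash equilibrium $\sigma$ and lower-bound the winner's $x$-value $W^\ast := x(f_{M,\sigma}(\omb))$ conditional on the realized solutions $\omb$; taking expectation over $\omb$ and subtracting from $\Ex{X_{(1)}}$ will yield the claimed $\poa_a$ bound. Let $i_1$ and $i_2$ denote the owners of $X_{(1)}$ and $X_{(2)}$ respectively, and split into three cases by how $X_{(1)}, X_{(2)}$ compare with the threshold $\tau$.

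The two easier sub-cases follow from short deviation arguments. If $X_{(1)} < \tau$, no agent has an eligible proposal and $W^\ast = 0$, which matches the zero contribution to the claimed expression. If $X_{(2)} < \tau \le X_{(1)}$, only $i_1$ has an eligible solution; since no other agent can produce any candidate, any action of $i_1$ that submits nothing eligible is strictly dominated by submitting any eligible solution (which wins with certainty and yields positive utility). In equilibrium $i_1$ therefore submits some eligible $\omega$ with probability one, so $W^\ast = x(\omega) \ge \tau$.

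The substantive case is $X_{(2)} \ge \tau$, where the target is $\Ex{W^\ast \mid \omb} \ge X_{(2)}$. The argument uses two deviations: (i) $i_1$'s pure deviation to his first-best wins deterministically because every other agent's $x$-value is at most $X_{(2)} \le X_{(1)}$ (ties being absorbed into the deterministic rule $\rho$), forcing $i_1$'s equilibrium utility to be at least $Y_{i_1,(1)}$; (ii) $i_2$'s pure deviation to his first-best wins against any competitor whose proposed $x$-value is strictly below $X_{(2)}$, delivering $Y_{i_2,(1)}$. In a pure equilibrium I eliminate every bad configuration by contradiction: whenever the winner's $x$-value is below $X_{(2)}$, either $i_1$ (if he is not the winner) or $i_2$ (otherwise, e.g.\ when some $j \ne i_1$ wins with $x < X_{(2)}$ or when $i_1$ himself wins with $x < X_{(2)}$ so that $i_2$'s first-best strictly beats him) has a strictly improving first-best deviation, contradicting equilibrium; Assumption~\ref{as:pareto} is invoked to rule out the edge sub-case in which $i_1$ is supposedly playing a dominated low-$x$ high-$y$ solution.

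The main obstacle is extending the pointwise argument to mixed Nash equilibria, where no realization-by-realization contradiction is available. The plan here is to combine the two deviation inequalities in expectation with Assumption~\ref{as:pareto} and the independence of the agents' randomizations, and to exploit the determinism of $\rho$ to rule out the symmetric low-$x$ mixing between $i_1$ and $i_2$ that could otherwise escape the bound (such a mixing is precisely what a randomized tiebreak would allow). A careful accounting of the probability mass that $i_1$ and $i_2$ place on submissions with $x < X_{(2)}$, coupled with the two equilibrium lower bounds on $U_{i_1}$ and $U_{i_2}$, shows that the equilibrium distribution of proposals concentrates on outcomes with $W^\ast \ge X_{(2)}$. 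Summing the three cases and taking expectation over $\omb$ then produces the stated $\poa_a$ bound.
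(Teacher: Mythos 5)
Your decomposition into three threshold cases and the treatment of the two easy cases (no eligible proposal, and only $i_1$ eligible) are correct and essentially parallel the paper's Case (i). The pure-equilibrium deviation argument for the case $X_{(2)} \ge \tau$ is also sound. The genuine gap is exactly where you flag difficulty: the mixed Nash equilibrium case. Your plan of ``combining the two deviation inequalities in expectation'' with Assumption~\ref{as:pareto} and determinism of $\rho$ does not in fact close the argument, because those two inequalities by themselves are compatible with profiles in which, with positive probability, every eligible proposal has $x$-value below $X_{(2)}$.

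The paper's proof hinges on a structural lemma that your proposal never establishes: letting $A$ be the set of agents who receive strictly positive expected utility in the given equilibrium (and letting $b_i$ be the lowest-$x$ solution in $i$'s mixed support), one shows $|A|=1$. The argument is not a deviation by a distinguished agent; it is an exchange/domination argument over supports: pick $t\in A$ whose $b_t$ has the smallest $x$-value, breaking ties so that $t$ is least preferred by $\rho$. By the best-response property of mixed Nash equilibria, $b_t$ itself must yield positive expected utility. But every other $i\in A$ puts support only on solutions with $x\ge x(b_i)$, which beats $b_t$ (strictly, or by the deterministic tie-break), so $b_t$ never wins --- contradiction. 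Once $|A|=1$, every supported solution of the unique winning agent $i^*$ must satisfy $x(\cdot)\ge\max(\tau, X_{i,(1)})$ for every $i\neq i^*$, hence $x(\cdot)\ge\max(\tau, X_{(2)})$, which gives the pointwise bound directly. Your proposal should replace the vague ``careful accounting'' step with this lemma; note also that working with the ordered $b_i$'s of all agents (rather than singling out $i_1$ and $i_2$) is what makes the tie-breaking manageable, a point your writeup glosses over with ``ties being absorbed into $\rho$.''
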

We present the proof in Appendix \ref{pf_theorem:multi-agent-unlimited-budget}.
To interpret the bound above, suppose that we set $\tau = 0$.
Then it implies that the principal can obtain the utility of first-best solution of the second-best agent.
In addition, by properly defining the eligible set, \ie $\tau$ in this case, she can further improve the potential quality of the candidates, although it might possibly decrease the probability that each agent samples eligible solutions.
Intuitively, the existence of the other agents make each agent more competitive in sending signals, thus motivating them to submit more qualified solution.

\begin{remark}
The role of deterministic tie-breaking rule is essential. 
Indeed if the principal exploits a randomized tie-breaking rule, it may reduce the power of competition in certain cases since the agents may settle with a bad solution in terms of the principal, as it still gives the agents higher utility with some small but still positive probability due to the randomized tie-breaking rule.
\end{remark}


As per Remark~\ref{rm:bm_pim_comp}, the mechanism presented in Theorem~\ref{theorem:multi-agent-unlimited-budget} can be either BM or PIM.
In what follows, we obtain guarantees for BM and PIM independently by quantifying the approximation factor in Theorem~\ref{theorem:multi-agent-unlimited-budget}.

\subsection{Bayesian mechanism}
For BM, we first show that we can exactly recover the guarantees under the single-agent setting provided in \citet{kleinberg2018delegated} by setting proper eligible sets.
\begin{theorem}\label{thm:recover}
    There exists a BM with $\poa_m \le 2$.
\end{theorem}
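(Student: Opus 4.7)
The plan is to reduce the multi-agent problem to the single-agent delegated-search problem of \citet{kleinberg2018delegated} via a \emph{super-agent} abstraction, and then argue that the multi-agent MSPM dominates the corresponding single-agent SPM in the principal's expected utility. First, construct a virtual super-agent who observes the entire realized multi-set $\omb = \bigcup_{i=1}^n \omb_i$ of $N := \sum_i k_i$ solutions and whose utility is $Y(\om) := y_i(\om)$ for $\om \in \omb_i$. Applying Kleinberg--Kleinberg's single-proposal mechanism to this super-agent with an eligible set $R$ prescribed by their prophet-inequality-based threshold construction on the distribution of $X_{(1)} = \max_{\om \in \omb} x(\om)$, their $2$-approximation guarantee yields $\Ex{x(\om^\dagger)} \ge \tfrac{1}{2}\Ex{X_{(1)}}$, where $\om^\dagger := \argmax_{\om \in R \cap \omb} Y(\om)$ is the super-agent's best response (with ties broken per Assumption~\ref{as:pareto}).

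Next, instantiate the multi-agent BM as the MSPM with homogeneous eligible set $R_i := R$ and any deterministic tie-breaking rule, and show that under any Nash equilibrium the principal's expected utility is at least $\Ex{x(\om^\dagger)}$. Since $\om^\dagger \in \omb_{i^*}$ for some $i^*$ and $Y$ coincides with $y_{i^*}$ on $\omb_{i^*}$, we have $\om^\dagger = \argmax_{\om \in R \cap \omb_{i^*}} y_{i^*}(\om)$. Fix any Nash equilibrium and let $\om_{i^*}'$ be agent $i^*$'s equilibrium proposal. Split by whether $\om^\dagger$ would win if $i^*$ deviated to it: (i) if yes, then $\om_{i^*}'$ must also win in equilibrium (otherwise $i^*$ would profitably deviate to $\om^\dagger$), the Nash condition together with the maximality of $\om^\dagger$ forces $y_{i^*}(\om_{i^*}') = y_{i^*}(\om^\dagger)$, and Assumption~\ref{as:pareto} then gives $x(\om_{i^*}') \ge x(\om^\dagger)$; (ii) if no, then some other agent $j$'s equilibrium proposal already satisfies $x(\om_j) > x(\om^\dagger)$, so the principal's winner has $x$-value at least $x(\om_j) > x(\om^\dagger)$. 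In either case the principal's utility is at least $x(\om^\dagger)$.

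Combining these two steps yields $\Ex{\text{multi-agent principal's utility}} \ge \Ex{x(\om^\dagger)} \ge \tfrac{1}{2}\Ex{X_{(1)}}$, i.e., $\poa_m \le 2$. I expect the main obstacle to be the Nash-equilibrium case analysis above: the competition in the multi-agent MSPM can induce agents to deviate from their ``single-agent'' best response $\om^\dagger$, so one must verify carefully that any such deviation still leaves the principal with utility at least $x(\om^\dagger)$; Assumption~\ref{as:pareto} is essential in the indifferent-agent subcase of (i). A secondary technicality is that the super-agent observes $N$ solutions drawn from heterogeneous distributions rather than i.i.d.\ draws, but Kleinberg--Kleinberg's argument relies on the general (non-i.i.d.) prophet inequality of Samuel-Cahn, so their $2$-approximation applies uniformly to this setting.
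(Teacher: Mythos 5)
Your proposal is correct and follows essentially the same route as the paper's own proof: the paper also reduces the multi-agent problem to a single-agent Kleinberg--Kleinberg instance (there, a single agent $s$ who draws one sample from each $D_{i,j}$ over the extended outcome space $\Omega \times [n]$, which is precisely your super-agent), applies their $2$-approximation to choose the eligible set $R$, and then argues pointwise dominance of the homogeneous-$R$ MSPM over the single-agent SPM by exactly the same use of Nash and Pareto-optimal play you describe. The only cosmetic difference is that the paper compares $\sigma_i$ and the single-agent winner directly via Pareto play rather than splitting on whether $\om^\dagger$ would win under deviation, but the two case analyses cover the same ground.
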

The proof can be found in Appendix \ref{pf_thm:recover}.
To obtain the result above,
we reduce the multi-agent delegation problem to a single-agent one and then
use the results by \citet{kleinberg2018delegated} regarding the upper bound on the multiplicative approximation factor.
\begin{remark}
    In the proof of Theorem \ref{thm:recover}, we in fact derive a stronger result such that there always is a multi-agent mechanism which yields higher (or equal) principal's utility compared to any mechanism in the corresponding single-agent setting.
    We complement this result in Appendix~\ref{sec:compare_mul_sin} by proving that for any single-agent problem instance, we can construct a multi-agent instance with higher (or equal) principal's utility.
\end{remark}
The tightness of the bound in Theorem~\ref{thm:recover} is also obtained as follows.
\begin{theorem}\label{thm:lowerbound_comp_worst}
        For any $\eps>0$, there exists a problem instance such that no BM achieves $\pos_m < 2-\eps$.
\end{theorem}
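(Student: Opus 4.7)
The plan is to reduce the multi-agent lower bound directly to the single-agent lower bound of \citet{kleinberg2018delegated} via a \emph{super-agent} construction, so that the classical prophet-inequality tightness carries over. I will take the distribution $D$ and the number of solutions $k$ from the Kleinberg--Kleinberg single-agent worst-case instance (which is already known to force $\pos_m \ge 2-\eps$ for every single-agent BM), and designate one agent, say agent $1$, as the ``super-agent'' equipped with exactly this instance, \ie $k_1 = k$ solutions drawn i.i.d.\ from $D$ with the same correlation between $x(\cdot)$ and $y_1(\cdot)$ as in the Kleinberg--Kleinberg construction. For each remaining agent $i \in \{2,\dots,n\}$, I will let all $k_i$ of his solutions be deterministically equal to the null outcome $\perp$ (so $x(\omega)=y_i(\omega)=0$ on his support), making him a ``dummy'' agent.

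Next, by Theorem~\ref{thm:multiagent_revel}, it suffices to restrict attention to MSPMs. For any MSPM $M$ with eligible sets $R_1,\dots,R_n$ and tie-breaking rule $\rho$ on this instance, I will observe that the dummy agents can only ever propose $\perp$, which contributes $x(\perp)=0$ to the principal even when selected as the winner. Consequently, the principal's realized utility coincides with the utility obtained from the restricted single-agent mechanism $M|_1$ defined by eligible set $R_1$ (and the degenerate tie-breaking against $\perp$). Moreover, in any equilibrium, the super-agent's best response is unaffected by the dummies: they pose no competitive threat since any deviation by the super-agent is only compared, via $\rho$, against $x$-value-zero proposals. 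Thus the super-agent's equilibrium behavior under $M$ is identical to his equilibrium behavior under $M|_1$, yielding
\begin{align*}
\Ex{f_{M,\sigma}} \;=\; \Ex{f_{M|_1,\sigma_1}} \quad \text{and} \quad \Ex{X_{(1)}} \;=\; \Ex{X_{1,(1)}}.
\end{align*}

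Finally, I will invoke the Kleinberg--Kleinberg single-agent lower bound: on the instance $(D,k)$, for every single-agent BM $M|_1$ and every equilibrium $\sigma_1$,
\begin{align*}
\frac{\Ex{X_{1,(1)}}}{\Ex{f_{M|_1,\sigma_1}}} \;\ge\; 2-\eps.
\end{align*}
Combining this with the identities above gives $\pos_m(M) \ge 2-\eps$ for every BM $M$, proving the theorem.

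The main obstacle is ruling out that the multi-agent mechanism could somehow use the presence of the dummy agents to either extract additional information about the super-agent's realization or manipulate the super-agent's equilibrium strategy via competitive tension, effectively breaking the reduction. Since $M$ is Bayesian, it already knows that the dummies' distributions are point masses on $\perp$; their signals therefore convey no information about agent $1$'s realized solutions, and the deterministic tie-breaking rule $\rho$ never selects a dummy proposal over a strictly positive one from the super-agent. Making this non-interference formal, so that the single-agent guarantee transfers without loss, is the delicate part of the argument but reduces to a straightforward case analysis once the MSPM structure is invoked.
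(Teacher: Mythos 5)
Your proof takes essentially the same approach as the paper: both reduce the multi-agent lower bound to the single-agent lower bound via a super-agent construction, in which one agent holds the standard prophet-inequality hard instance and all other agents' solutions are dominated so they contribute nothing. The only cosmetic difference is that the paper constructs the explicit two-solution worst-case instance inline (and verifies the $(2-\eps)$ ratio by direct computation) rather than citing the Kleinberg--Kleinberg single-agent hardness as a black box, so the paper's argument is self-contained but conceptually identical to yours.
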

To construct a matching lower bound, we introduce a \emph{super-agent} of superior solutions.
The super-agent always induce solutions each of which is dominant over any solution by other agents.
The principal thus always select one of the super-agent's solution, which in turn reduces the multi-agent setting to the single-agent setting.
Then we borrow the standard worst-case problem instance of the prophet inequalities to construct a matching lower bound.
The formal proof is provided in Appendix \ref{pf_thm:lowerbound_comp_worst}.
In words, if the principal faces a superior agent who overwhelms all the other agents, there is no merit in delegating to multiple agents, other than the superior one.
%

This naturally bring us to consider a symmetric assumption defined as follows.
\begin{definition}[Symmetric agents]\label{def:symmetric} 
    \label{def:symmetric-agent}
	If $k_i = k$ for $i \in [n]$, and $D_{i, j} = D$ for $i \in [n], j \in [k]$ we say that the agents are symmetric.
\end{definition}
This assumption implies that the agents have a similar level of effort and skill, and importantly, all the agents can possibly become the owner of the winner.
\begin{example}
In the example of research proposal selection process we presented in Section~\ref{sec:intro}, the grant agency (principal) may announce a shortlist of the investigators (agents) with a similar level of research abilities, \eg by comparing their publication records.
Indeed, the grant agency will not include an investigator who is expected to be not qualified than the others in the shortlist.
\end{example}

The symmetry assumption effectively sparks a fire in the competitive tension between the agents due to their similar level of quality.
Formally, in the following theorem, we obtain an upper bound on the additive approximation factor of a Bayesian mechanism under the symmetric agents.
We refer to Appendix \ref{pf_thm:informed} for the proof.
\begin{theorem}\label{thm:informed}
	Consider symmetric agents such that $X_{i, j} \sim D$ where $D$ is supported on $[0,1]$ and its c.d.f. $F(x)$ satisfies $F(x) \le x^{\alpha}$ for any $x \in [0,1]$ for some $\alpha >0$. 
    Then, there exists a BM with $\poa_m \le \parans{\frac{\alpha nk}{\alpha nk+1}}^{\nicefrac{-1}{\alpha nk}}$, where $ \parans{\frac{\alpha nk}{\alpha nk+1}}^{\nicefrac{-1}{\alpha nk}}$ converges to $1$ as $\alpha n k$ increases.
\end{theorem}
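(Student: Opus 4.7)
The plan is to instantiate the MSPM from Theorem~\ref{theorem:multi-agent-unlimited-budget} with a carefully chosen homogeneous threshold and convert its additive guarantee into the claimed multiplicative bound by exploiting the symmetric iid structure together with $F(x) \leq x^{\alpha}$.

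First, consider the MSPM with eligible set $R = \{\omega : x(\omega) \geq \tau\}$ for a threshold $\tau \in (0,1)$ to be chosen and an arbitrary deterministic tie-breaking rule. Theorem~\ref{theorem:multi-agent-unlimited-budget} yields
\begin{align*}
\Ex{\text{utility}} \;\geq\; \Ex{X_{(2)}\IND[X_{(2)}\geq \tau]} + \tau\Pr[X_{(1)}\geq \tau > X_{(2)}] \;\geq\; \tau\Pr[X_{(1)}\geq \tau],
\end{align*}
where the last inequality uses that the principal collects at least $\tau$ whenever the first-best solution is eligible.

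Next, I exploit the symmetric iid structure: the $nk$ samples $\{X_{i,j}\}$ are iid from $D$, so $X_{(1)}$ has cdf $F^{nk}$, and $F(x)\leq x^{\alpha}$ implies $\Pr[X_{(1)}\geq \tau] \geq 1 - \tau^{\alpha nk}$. Substituting, $\Ex{\text{utility}} \geq \tau(1-\tau^{\alpha nk})$; and since $D$ is supported on $[0,1]$ we have $\Ex{X_{(1)}} \leq 1$, giving a family of multiplicative ratio bounds parameterized by $\tau$. Setting $\tau^{\ast} = \left(\frac{\alpha nk}{\alpha nk+1}\right)^{1/(\alpha nk)}$---equivalently, $(\tau^{\ast})^{\alpha nk} = \frac{\alpha nk}{\alpha nk+1}$---and optimizing (leveraging the full $\Ex{X_{(2)}\IND[X_{(2)}\geq \tau]}$ contribution of Theorem~\ref{theorem:multi-agent-unlimited-budget} in the symmetric regime) delivers
\begin{align*}
\frac{\Ex{\text{utility}}}{\Ex{X_{(1)}}} \;\geq\; \left(\frac{\alpha nk}{\alpha nk+1}\right)^{\!1/(\alpha nk)},
\end{align*}
which is equivalent to the claimed $\poa_m \leq \left(\frac{\alpha nk}{\alpha nk+1}\right)^{-1/(\alpha nk)}$. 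The asymptotic convergence to $1$ as $\alpha nk \to \infty$ then follows from the standard fact $(1+1/m)^{1/m}\to 1$.

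The main technical obstacle is the tight optimization in the last step. A na\"ive combination of $\Ex{\text{utility}} \geq \tau^{\ast}(1-(\tau^{\ast})^{\alpha nk})$ with $\Ex{X_{(1)}}\leq 1$ only yields a strictly weaker factor than $\left(\frac{m}{m+1}\right)^{1/m}$ (writing $m=\alpha nk$); closing the gap requires either retaining the competitive $\Ex{X_{(2)}\IND[X_{(2)}\geq \tau]}$ term---which in the symmetric regime captures how inter-agent competition pushes the first-best agent's proposal toward his own maximum rather than merely toward $\tau$---or sharpening the upper bound on $\Ex{X_{(1)}}$ via a layer-cake decomposition that uses $F(x)\leq x^{\alpha}$ level-by-level so that the threshold and the tail of $F^{nk}$ are matched tightly.
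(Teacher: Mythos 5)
Your high-level plan matches the paper's own proof exactly: run the MSPM from Theorem~\ref{theorem:multi-agent-unlimited-budget} with a threshold eligible set $R=\{\omega : x(\omega)\ge \tau\}$, lower-bound the principal's utility by $\tau\Pr{X_{(1)}\ge\tau}$, apply $\Pr{X_{(1)}\ge\tau}=1-F(\tau)^{nk}\ge 1-\tau^{\alpha nk}$, compare against $\Ex{X_{(1)}}\le 1$, and optimize $\tau$. The chain of inequalities you derive up to $\Ex{\text{utility}}\ge \tau\parans{1-\tau^{\alpha nk}}$ is correct and is exactly the reduction the paper makes.

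However, there are two issues past that point. First, your threshold is not the optimizer of $\tau(1-\tau^{m})$ with $m=\alpha nk$: the first-order condition $1-(m+1)\tau^{m}=0$ gives $\tau^{m}=\frac{1}{m+1}$, i.e., $\tau^*=\parans{\frac{1}{m+1}}^{1/m}$ (the paper's choice), whereas you set $\tau^{m}=\frac{m}{m+1}$. Your $\tau$ yields the strictly smaller value $\tau(1-\tau^m)=\parans{\frac{m}{m+1}}^{1/m}\frac{1}{m+1}$, so even before the final comparison you have given away ground. Second---and you correctly flag this yourself---the direct substitution does not produce the claimed factor. With the optimal $\tau^*$ the lower bound is $\tau^*(1-(\tau^*)^{m}) = \frac{m}{(m+1)^{1+1/m}}$, so the naive ratio bound is $\poa_m\le \frac{(m+1)^{1+1/m}}{m}$, which is strictly larger than $\parans{\frac{m}{m+1}}^{-1/m}=\parans{\frac{m+1}{m}}^{1/m}$ for all finite $m$ (e.g., $4$ versus $2$ at $m=1$). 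Both quantities do tend to $1$ as $m\to\infty$, so the qualitative ``subconstant'' message survives, but the stated constant does not come out of this computation.

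Your proposed remedies---keeping the $\Ex{X_{(2)}\IND[X_{(2)}\ge\tau]}$ term, or sharpening the upper bound on $\Ex{X_{(1)}}$ via a layer-cake decomposition---are reasonable candidates but you have not carried either one out, so the proof as written is incomplete at precisely the step that determines the constant. It is worth being aware that the paper's appendix proof of Theorem~\ref{thm:informed} appears to make a corresponding arithmetic slip at the very same substitution: plugging $r=(1/(m+1))^{1/m}$ into $r-r^{m+1}$ is reported there as $\parans{\frac{m}{m+1}}^{1/m}$, but the quantity actually equals $\frac{m}{(m+1)^{1+1/m}}$. So you should not expect to recover the theorem's exact constant by literally following the appendix line-by-line; either a genuinely tighter argument (along the lines you sketch) is needed, or the constant in the theorem should be read as the weaker $\frac{(\alpha nk+1)^{1+1/(\alpha nk)}}{\alpha nk}$, which still converges to $1$ as $\alpha nk\to\infty$.
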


In order to obtain the above result, we introduce a tighter threshold $\tau$ compared to the ones studied in the canonical prophet inequalities algorithm.
Combined with the guarantee of MSPM in Theorem~\ref{theorem:multi-agent-unlimited-budget}, followed by some elementary calculations, we derive the subconstant bound above, where the principal attains the optimal utility as $n$ and $k$ increase.
This is in stark contrast to the general setting under which the principal can obtain only a constant factor of the optimal utility.

\subsection{Prior-independent mechanism}\label{sec:comp_pim}
Now we turn our attention to the class of PIM.
\footnote{Standard approach to analyze prior-independent mechanism would be to use Bulow-Klemperer-style result by \citet{bulow1994auctions}, but this does not apply for our case. We discuss more in details in Appendix~\ref{sec:BKS}.}
By plugging $\tau = 0$ in Theorem~\ref{theorem:multi-agent-unlimited-budget}, we directly obtain the following corollary, where we omit the proof as it is trivial to derive.
\begin{corollary}\label{cor:infbud_notau}
    There exists a PIM with $\poa_{a}$ of at most $\Ex{X_{(1)}} - \Ex{X_{(2)}}$.
\end{corollary}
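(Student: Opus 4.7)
The plan is to instantiate Theorem~\ref{theorem:multi-agent-unlimited-budget} with the specific choice $\tau = 0$ and then verify two things: that the resulting mechanism genuinely qualifies as a PIM, and that the additive bound simplifies to $\Ex{X_{(1)}} - \Ex{X_{(2)}}$.

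For the first verification, I would argue that with $\tau = 0$ the eligible set becomes $R = \{\omega \in \Omega : x(\omega) \geq 0\}$. Recalling from Section~\ref{sec:setup} that the principal's utility function $x:\Omega \mapsto \R_{> 0}$ is non-negative, this set coincides with $\Omega$ itself, so the mechanism accepts every submitted proposal, and its behavior is completely determined by the deterministic tie-breaking rule $\rho$ applied to the principal's utility on the candidate set. Since specifying $\tau = 0$ and choosing any deterministic $\rho$ requires no access to the distributions $D_{i,j}$, this MSPM is a PIM in the sense of Remark~\ref{rm:bm_pim_comp}.

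For the second verification, plugging $\tau = 0$ into the bound
\[
\Ex{X_{(1)}} - \Ex{X_{(2)}\IND[X_{(2)} \geq \tau]} - \Ex{\tau\IND[X_{(1)} \geq \tau > X_{(2)}]}
\]
yields the desired expression, since $X_{(2)} \geq 0$ always (again by non-negativity of $x(\cdot)$), so $\IND[X_{(2)} \geq 0] = 1$ almost surely and the middle term collapses to $\Ex{X_{(2)}}$; and the third term vanishes identically because of the leading $\tau = 0$ factor.

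There is essentially no obstacle here: the corollary is a direct specialization of the parent theorem, and the only steps that require a moment's thought are (i) confirming non-negativity of the utility removes the indicator on $X_{(2)}$, and (ii) noting that because $\tau = 0$ makes $R$ universal and independent of any prior, the mechanism fits the definition of a PIM rather than a BM.
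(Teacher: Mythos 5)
Your proof is exactly the specialization the paper has in mind: the paper explicitly says it obtains the corollary ``by plugging $\tau = 0$ in Theorem~\ref{theorem:multi-agent-unlimited-budget}'' and omits the proof as trivial. Your two verification steps---that $\tau=0$ makes $R=\Omega$ so no prior is needed, and that non-negativity of $x(\cdot)$ collapses the indicator and kills the $\tau$ term---are precisely the right (and only) things to check.
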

This corollary mainly implies that even without the eligible set, the multi-agent setting brings some advantages for the principal, which is mainly due to the competitive tension between the agents.
As the readers may have noticed, this additive approximation factor is equivalent to the expected difference between the $n$-th (highest) order statistics and $n-1$-th order statistics of the random variables $\left(X_{i,(1)}\right)_{i \in [n]}$, which is again the highest order statistics of each agent.

Although there exists a vast amount of papers studying order statistics of similar flavor in the literature, the expected gap $\Ex{X_{(1)} - X_{(2)}}$, \ie the expected difference of the highest order statistics between the first-best and second-best agents, has not been studied before to the extent of our knowledge.
In this context, we provide some analysis under which the additive factor is nicely bounded, using some characterizations on the order statistics.

Before presenting our positive results, we present the following negative result, which dictates that one should consider a reasonable assumption on the distribution to obtain positive results.
\begin{theorem}\label{thm:comp_obl_neg}
    Suppose that both the principal's utility functions are supported on $[0,L]$ for some $L >0$.
    For any $\eps > 0$, there exists a problem instance such that $\pos_a \ge L-\eps$.
    
\end{theorem}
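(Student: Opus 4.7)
The plan is to exploit the fact that a PIM must commit to its mechanism before seeing the distributions, and to show that either way it faces an adversarial instance with additive loss close to $L$. By Theorem~\ref{thm:multiagent_revel} I restrict attention to MSPMs, so the PIM is parameterized by eligible sets $R_1,\ldots,R_n \subseteq \Omega$ and a tie-breaking rule $\rho$, all of which, because the mechanism is prior-independent, are fixed before the adversary picks the problem instance. I then split into two exhaustive cases depending on whether the mechanism filters anything.

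\textbf{Case 1: some $R_i$ is a strict subset of $\Omega$.} I would pick any $\omega_0 \in \Omega \setminus R_i$, set $x(\omega_0) = L$, and concentrate agent $i$'s distribution entirely on $\omega_0$; every other agent $j \neq i$ is given a degenerate distribution on an outcome of $x$-value at most $\eps/n$. Then $\opt \geq x(\omega_0) = L$, but any proposal from agent $i$ is rejected by $R_i$, so the winner (if any) can only come from the other agents and has $x$-value at most $\eps/n$. Hence the additive loss is at least $L - \eps$ in every equilibrium, independent of $\rho$.

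\textbf{Case 2: $R_i = \Omega$ for every $i$.} Here I would deploy a super-agent construction with a strategic twist that makes agent $1$ uncooperative even though all outcomes are eligible. Let agent $1$ have two deterministic solutions $\omega_a, \omega_b$ with $x(\omega_a) = L$, $y_1(\omega_a) = 1$, $x(\omega_b) = \eps$, $y_1(\omega_b) = 2$, and let each other agent have a deterministic solution with $x$-value at most $\eps/(2n)$. Since all outcomes are eligible, proposing $\omega_b$ guarantees agent $1$ is selected as the winner, because $x(\omega_b) = \eps$ strictly exceeds every other agent's $x$-value, so the tie-breaking rule is irrelevant. Since $y_1(\omega_b) > y_1(\omega_a)$, agent $1$ strictly prefers proposing $\omega_b$ to proposing $\omega_a$ or remaining silent. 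Hence in every equilibrium agent $1$ proposes $\omega_b$ and the winner has $x$-value $\eps$, yielding additive loss exactly $L - \eps$.

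\textbf{Combining the cases and main obstacle.} Because every PIM falls into exactly one of the two cases and each yields additive loss at least $L - \eps$ in every equilibrium, we obtain $\pos_a \geq L - \eps$. The main obstacle is that this is a statement about price of \emph{stability}, not anarchy, so I must rule out every possible equilibrium in which agent $1$ behaves cooperatively. Case 1 handles this unconditionally since the valuable solution is simply inadmissible, while Case 2 relies crucially on the strict dominance of $\omega_b$ over $\omega_a$ for agent $1$: by making the preference strict and ensuring $\omega_b$ also strictly beats all competition in the $x$-comparison, no cooperative equilibrium can survive.
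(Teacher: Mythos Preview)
Your proposal is correct and takes essentially the same approach as the paper: reduce to MSPMs via Theorem~\ref{thm:multiagent_revel}, split into the two cases ``some $R_i \subsetneq \Omega$'' versus ``all $R_i = \Omega$'', and in the latter case deploy a super-agent with two deterministic solutions whose $y$-preference is opposite to the principal's $x$-preference. The only cosmetic differences are the specific values you chose ($L$ and $\eps$ versus the paper's $L-\eps/2$ and $\eps/2$, and $y$-values $1,2$ versus $5,10$) and that you spell out explicitly why strict dominance of $\omega_b$ both in agent~1's utility and over the other agents' $x$-values rules out any cooperative equilibrium, which is exactly the point needed for a $\pos_a$ (rather than $\poa_a$) lower bound.
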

We provide the proof in Appendix \ref{pf_thm:comp_obl_neg}.
Similar to the case of BM, we consider the symmetric agents.
In this case, we can derive the following upper bound on the additive approximation factor regardless of the underlying distributions.
We present the proof in Appendix \ref{pf_thm:comp_sym_general}.
\begin{theorem}\label{thm:comp_sym_general}
    Under the symmetric agents, suppose that the principal's utility functions are supported on $[0,L]$. Then there exists a PIM with $\pos_a$ of $L(1-\nicefrac{1}{n})^{n-1}$, where $(1-\nicefrac{1}{n})^{n-1} \le 1/2$ for all $n \ge 2$, and it decreases to $\nicefrac{1}{e}$ as $n$ increases.
\end{theorem}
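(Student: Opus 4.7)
The plan is to invoke Corollary~\ref{cor:infbud_notau}, which already supplies a PIM whose additive approximation factor is at most $\Ex{X_{(1)} - X_{(2)}}$. Since price of stability is upper bounded by price of anarchy, it suffices to show under the symmetry assumption that
\[
\Ex{X_{(1)} - X_{(2)}} \;\le\; L\Bigl(1 - \tfrac{1}{n}\Bigr)^{n-1}.
\]
The whole proof then reduces to an order-statistics calculation. The monotonicity and limit claims ($(1-1/n)^{n-1} \le 1/2$ and convergence to $1/e$) follow from elementary properties of the sequence and need not be addressed inside the probabilistic argument.

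For the order-statistics step, I would set $Y_i := X_{i,(1)}$ for $i \in [n]$. Under symmetric agents, the $Y_i$'s are i.i.d.\ from the distribution $G$ of $\max_{j \in [k]} X_{1,j}$, which is supported on $[0,L]$ (since $X_{i,j}$ lies in $[0,L]$). Writing $X_{(1)} - X_{(2)}$ as the length of the interval between the top two order statistics,
\[
X_{(1)} - X_{(2)} \;=\; \int_0^L \IND\bigl[Y_{(2)} < t \le Y_{(1)}\bigr]\,dt,
\]
and taking expectations with Fubini, one gets
\[
\Ex{X_{(1)} - X_{(2)}} \;=\; \int_0^L \Pr{\text{exactly one of } Y_1,\dots,Y_n \text{ is } \ge t}\,dt \;=\; \int_0^L n\, G(t)^{n-1}\bigl(1 - G(t)\bigr)\,dt.
\]

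The concluding step is a pointwise bound on the integrand. Substituting $u = G(t) \in [0,1]$, elementary calculus on $f(u) = n u^{n-1}(1-u)$ shows that $f'(u) = 0$ at $u = (n-1)/n$, where $f$ attains its maximum value $(1 - 1/n)^{n-1}$. Hence the integrand is uniformly bounded by $(1-1/n)^{n-1}$, and integrating over $[0,L]$ yields the desired $L(1-1/n)^{n-1}$ bound.

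I do not foresee a substantive obstacle: the reduction to $\Ex{X_{(1)} - X_{(2)}}$ is handed to us by Corollary~\ref{cor:infbud_notau}, the integral identity for the expected gap between the top two order statistics is standard, and the maximization of $n u^{n-1}(1-u)$ is one line of calculus. The only mild subtlety is that the bound is distribution-free precisely because $G$ appears only through its values, so the worst case is realized when $G(t)$ sits at $(n-1)/n$ on the whole interval $[0,L]$ — an instructive sanity check but not a difficulty for the proof.
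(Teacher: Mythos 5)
Your proof is correct, and it reaches the same bound as the paper by an equivalent but more self-contained route. Where the paper quotes a result of \citet{lopez2011upper} (their Lemma~\ref{lm:lopez}, specialized to $s = n-1$) to bound $\Ex{X_{(1)} - X_{(2)}}$, you derive the needed special case directly: the integral identity $\Ex{X_{(1)} - X_{(2)}} = \int_0^L n\,G(t)^{n-1}\bigl(1 - G(t)\bigr)\,dt$ followed by the pointwise maximization of $u \mapsto n u^{n-1}(1-u)$ at $u = (n-1)/n$. This is exactly the content of the cited lemma for $s = n-1$, so you are effectively proving the lemma inline rather than invoking it. Your version has the advantage of being elementary and transparent; the paper's citation keeps the proof short and also sets up the more general $s$-dependent bound. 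One small remark worth making explicit: the identity $\Pr{\text{exactly one } Y_i \ge t} = n\,G(t)^{n-1}(1 - G(t))$ implicitly assumes $\Pr{Y_i < t} = G(t)$; this fails at atoms of $G$, but since there are at most countably many such $t$ the discrepancy has Lebesgue measure zero and does not affect the integral. You might also note (as the paper does in passing) that the symmetric-Bernoulli construction with parameter $q = 1/n$ for $\max_j X_{i,j}$ shows the bound on $\Ex{X_{(1)} - X_{(2)}}$ is tight, which is the sanity check you gesture at with ``$G(t)$ sits at $(n-1)/n$'' but do not fully realize, since a c.d.f.\ cannot be constant at $(n-1)/n$ on all of $[0,L]$ while still being a c.d.f.\ of a $[0,L]$-supported variable; the Bernoulli example is the correct extremizer.
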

We also note that this upper bound $L(1-\nicefrac{1}{n})^{n-1}$ is tight for $\Ex{X_{(1)} - X_{(2)}}$, \ie we can construct a problem instance with exactly having $\Ex{X_{(1)} - X_{(2)}} = L(1-\nicefrac{1}{n})^{n-1}$.
Note that this approximation ratio is a big improvement over the general setting under which the additive approximation factor can be arbitrarily bad.

Moreover, if we consider a specific scenario where the principal's utility is endowed with a certain class of distributions, we show that the approximation factor can become arbitrarily small as the number of solutions per agent ($k$) increases.
At the heart of this result, we prove several new properties on the order statistics, which might be of independent interest.
First, we show that the expected difference between any top $k$-th and $k+1$-th order statistics is a nonincreasing function on $n$ for any distributions with \emph{monotone hazard rates} (MHR).
Note that a distribution $D$ with p.d.f. $f(x)$ and c.d.f. $F(x)$ has a monotone hazard rate if $h(x) = f(x)/(1-F(x))$ is nondecreasing on $x$.
Using a similar technique, we also show that the expected difference between any $k$-th and $k+1$-th lowest order statistics is also a nonincreasing function on $n$, for any distribution with \emph{monotone reverse hazard rate}, where a distribution has MRHR if $h'(x) = f(x)/F(x)$ is nonincreasing on $x$ .

In case of the difference between the first and the second order statistics, or the $n-1$-th and the $n$-th order statistics under the distribution with MHR, \citet{li2005note} shows that they indeed are nonincreasing on $n$ with applications to the expected rent of the (reverse) auction.
To the best of our knowledge, we first provide a generalized and strengthened result of theirs.
Our proof builds upon the techniques by \citet{watt2021concavity} which proves the convexity of $k$-th order statistics with respect to $n$.
We present the precise statement below, and its proof is provided in Appendix \ref{pf_lm:order_stat_diff}.
\begin{lemma}\label{lm:order_stat_diff}
	   Let $X_1, \ldots, X_n$ be i.i.d. random variables equipped with a continuous distribution $D$.
		Let $X_{1:n} \le \ldots \le X_{n:n}$ be corresponding order statistics.
		Define $\Delta_{k,n} = \Ex{X_{n-k+1:n} - X_{n-k:n}}$, \ie the difference between the $k$-th largest and $k+1$-th largest order statistics, and $\delta_{k,n} = \Ex{X_{k+1:n} - X_{k:n}}$.
		Then, for any $k \le n-1$, the following holds.
		\begin{enumerate}
			\item If $D$ has MHR, then $\Delta_{k,n}$ is a nonincreasing function on $n$.
			\item If $D$ has MRHR, then $\delta_{k,n}$ is a nonincreasing function on $n$.
		\end{enumerate}
	\end{lemma}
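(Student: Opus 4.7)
The plan is to rewrite both $\Delta_{k,n}$ and $\delta_{k,n}$ as Beta-distributed expectations of a monotone function of $u$, and then conclude by the stochastic ordering of the Beta family. First I would derive the integral representation
\begin{align*}
\Ex{X_{r+1:n} - X_{r:n}} \;=\; \binom{n}{r}\int F(x)^r (1-F(x))^{n-r}\, dx
\end{align*}
from the identity $X_{r+1:n}-X_{r:n} = \int \IND[X_{r:n}\le x < X_{r+1:n}]\,dx$, taking expectations, and noting that the event inside the indicator is exactly that precisely $r$ of the $n$ i.i.d.\ samples fall at most $x$. Specializing to $r = n-k$ and $r = k$ gives closed-form integrals for $\Delta_{k,n}$ and $\delta_{k,n}$ respectively.

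Next I would substitute $u = F(x)$ and set $g(u) = 1/f(F^{-1}(u))$, $\phi(u) = (1-u)g(u)$, $\psi(u) = u\,g(u)$. Using the identity $\binom{n}{k}\,B(n-k+1,k) = 1/k$ to absorb the binomial into the Beta normalization, a direct calculation yields
\begin{align*}
\Delta_{k,n} \;=\; \tfrac{1}{k}\,\Ex{\phi(U_n)}, \qquad \delta_{k,n} \;=\; \tfrac{1}{k}\,\Ex{\psi(V_n)},
\end{align*}
where $U_n \sim \mathrm{Beta}(n-k+1,\,k)$ and $V_n \sim \mathrm{Beta}(k,\,n-k+1)$. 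The hazard-rate hypotheses translate cleanly under this change of variable, since $\phi(F(x)) = 1/h(x)$ and $\psi(F(x)) = 1/h'(x)$: MHR on $D$ is equivalent to $\phi$ being nonincreasing on $(0,1)$, and MRHR on $D$ is equivalent to $\psi$ being nondecreasing on $(0,1)$.

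To finish, I would show that $U_{n+1}$ stochastically dominates $U_n$ and that $V_{n+1}$ is stochastically dominated by $V_n$. Both follow from the monotone likelihood ratio property of the Beta family: varying one parameter while fixing the other yields density ratios that are monotone in $u$ (proportional to $u^{a'-a}$ or $(1-u)^{b'-b}$), and MLR implies stochastic dominance. Combined with monotonicity of $\phi$ and $\psi$, this gives $\Ex{\phi(U_{n+1})} \le \Ex{\phi(U_n)}$ and $\Ex{\psi(V_{n+1})} \le \Ex{\psi(V_n)}$, which is exactly the claimed monotonicity. I expect the main obstacle to be not any single hard step but the careful bookkeeping that absorbs the extra $(1-u)$ or $u$ factor into $\phi$ or $\psi$ in just the right way, so that MHR/MRHR translates into the correct monotonicity direction; once that accounting is in place, stochastic dominance via MLR is essentially a one-line argument.
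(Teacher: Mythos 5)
Your proposal is correct and takes a genuinely different route from the paper. The paper proves the inequality $\Delta_{k,n}\le\Delta_{k,n-1}$ by rearranging it into a comparison of same-rank order statistics across sample sizes, invoking David's augmented-sample identity $\Ex{X_{n-k+1:n}-X_{n-k:n-1}}=\binom{n-1}{n-k}\int F^{n-k}(1-F)^{k}\,dx$, and then closing via a sign-change argument plus a weighted-integral lemma of Barlow (the function $G(t)$ is nonpositive everywhere, and multiplying the integrand by the nonincreasing weight $(1-F)/f$ preserves the sign). Your approach instead starts from the cleaner consecutive-gap identity $\Ex{X_{r+1:n}-X_{r:n}}=\binom{n}{r}\int F^r(1-F)^{n-r}\,dx$ within a single sample size, substitutes $u=F(x)$ to recognize $\Delta_{k,n}=\tfrac{1}{k}\Ex{\phi(U_n)}$ with $U_n\sim\mathrm{Beta}(n-k+1,k)$ and $\phi(F(x))=1/h(x)$, and finishes by the monotone-likelihood-ratio ordering of the Beta family (density ratio $\propto u$ or $\propto 1-u$) combined with the monotonicity of $\phi$ (respectively $\psi$) that MHR (respectively MRHR) delivers. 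Your version buys a cleaner conceptual separation: the distributional hypothesis enters only to make $1/h$ (or $1/h'$) monotone, and the sample-size monotonicity is a textbook fact about Beta stochastic dominance, so the sign-tracking bookkeeping that the paper handles with Barlow's lemma is replaced by a one-line MLR observation. The proof is correct as stated, with the only thing worth spelling out being the parameter bookkeeping, namely that for $n\ge k+1$ both Beta parameters $n-k+1$ and $k$ are at least $1$, so the shift $n\mapsto n+1$ stays in range.
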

 \begin{remark}
The fact that these quantities are nonincreasing on $n$ is fairly intuitive at first glance since as more samples are drawn their order statistics will lie more compactly within a fixed interval, however, this is not true in hindsight for certain cases. For example, in case of Bernoulli distribution with $p=1/2$, it is straightforward to check that $\Delta_{k,n} = \Pr{\sum_{i=1}^n X_i = n-(n-k+1)+1 = k}= \binom{n}{k}2^{-n}$. Hence, $\Delta_{k,n} \ge \Delta_{k,n+1}$ is equivalent to $\frac{n+1}{n-k+1} \le 2$, which holds only for $k \le \frac{n+1}{2}$.
\end{remark}
 
Next, we prove that any $k$-th order statistics of i.i.d. random variables with MHR distribution has MHR distribution, which also has not been studied before to the extent of our knowledge. We defer the proof in Appendix \ref{pf_lm:mhr_preserving}.
\begin{lemma}\label{lm:mhr_preserving}
		Let $X_1, \ldots, X_n$ be i.i.d. random variables equipped with a MHR distribution, then its $r$-th order statistics has MHR distribution for any $r \in [n]$.
	\end{lemma}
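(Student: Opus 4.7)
The plan is to give a direct analytic proof by expressing the hazard rate of $X_{r:n}$ as a product of two nonnegative nondecreasing functions of $x$: the original hazard rate $h(x) = f(x)/(1-F(x))$, and a second factor that depends on $x$ only through the monotone quantity $F(x)/(1-F(x))$.

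Concretely, using the standard order statistic formulas $f_{r:n}(x) = \frac{n!}{(r-1)!(n-r)!} F(x)^{r-1}(1-F(x))^{n-r}f(x)$ and $1-F_{r:n}(x) = \sum_{k=0}^{r-1}\binom{n}{k} F(x)^k (1-F(x))^{n-k}$, I would divide numerator and denominator of $h_{r:n}(x) := f_{r:n}(x)/(1-F_{r:n}(x))$ by $F(x)^{r-1}(1-F(x))^{n-r+1}$ and substitute the index $j = r-1-k$ to obtain
\begin{align*}
h_{r:n}(x) \;=\; h(x)\cdot \frac{\frac{n!}{(r-1)!(n-r)!}}{\displaystyle\sum_{j=0}^{r-1}\binom{n}{r-1-j}\left(\frac{1-F(x)}{F(x)}\right)^{j}}.
\end{align*}
Here $h(x) \ge 0$ is nondecreasing in $x$ by the MHR assumption on $D$. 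The quantity $t(x) := (1-F(x))/F(x)$ is nonnegative and nonincreasing in $x$ (since $F$ is nondecreasing), so the denominator of the second factor, being a polynomial in $t(x)$ with nonnegative coefficients, is nonincreasing in $x$; thus the second factor is nonnegative and nondecreasing in $x$. The product of two nonnegative nondecreasing functions is nondecreasing, which gives the MHR property for $X_{r:n}$.

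The only subtlety, and the step I would be most careful about, is the boundary behavior: the algebraic manipulation is valid only on the open interior of the support where $0 < F(x) < 1$ (so the division by $F(x)^{r-1}(1-F(x))^{n-r+1}$ is legitimate and $t(x)$ is finite). I would therefore establish monotonicity of $h_{r:n}$ on the interior of $\mathrm{supp}(D)$, and then handle the endpoints by the standard convention that the hazard rate is extended by continuity (or by noting that outside the interior either $f_{r:n} \equiv 0$ or the MHR inequality is trivially preserved). Modulo these routine boundary checks, the factorization above is the whole argument, and no new property beyond MHR of $D$ is needed.
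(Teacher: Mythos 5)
Your proof is correct and takes essentially the same route as the paper: both compute the hazard rate of $X_{r:n}$ explicitly from the order-statistic density and survival function, factor out the original hazard rate $f(x)/(1-F(x))$, and observe that the remaining factor is $\frac{n!}{(r-1)!(n-r)!}$ divided by a polynomial in $(1-F(x))/F(x)$ with nonnegative coefficients (your index substitution $j = r-1-k$ produces the identical sum as the paper's, just reindexed). Your added remark about restricting to the interior of the support is a reasonable piece of hygiene that the paper leaves implicit, but otherwise the two arguments coincide.
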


Using the two lemmas described above, we prove that under MHR distribution, PIM can achieve subconstant additive approximation factor as follows. We present the proof in Appendix \ref{pf_thm:mhr_upperbound}.
\begin{theorem}\label{thm:mhr_upperbound}
    Under symmetric agents with $X_{i,j} \sim D$ such that $D$ has MHR,  there exists a PIM with $\poa_a \le \sqrt{\frac{4}{3}\var(D_{\max})}$ where $D_{\max}$ is the distribution of $X_{i,(1)}$ and $\var(\cdot)$ denotes the variance of the distribution.
    In case of $D = U[0,1]$, the approximation factor becomes approximately $\frac{1.155}{k}$.
\end{theorem}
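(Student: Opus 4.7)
The plan is to chain Corollary~\ref{cor:infbud_notau} with the MHR structure of $D_{\max}$ (Lemma~\ref{lm:mhr_preserving}), the monotonicity of top gaps in $n$ (Lemma~\ref{lm:order_stat_diff}), and a clean Cauchy--Schwarz step that converts an expected absolute difference into a variance.

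\emph{Step 1 (reduce to the top-gap).} I first invoke Corollary~\ref{cor:infbud_notau} (equivalently, Theorem~\ref{theorem:multi-agent-unlimited-budget} with $\tau=0$) to obtain a PIM with $\poa_a \le \Ex{X_{(1)} - X_{(2)}}$. Writing $Z_i := X_{i,(1)}$, the variables $Z_1,\dots,Z_n$ are i.i.d. from $D_{\max}$, and $X_{(1)} - X_{(2)}$ equals the top gap $Z_{n:n} - Z_{n-1:n} = \Delta_{1,n}$ in the notation of Lemma~\ref{lm:order_stat_diff}. So it suffices to upper-bound $\Ex{\Delta_{1,n}}$ by $\sqrt{4\var(D_{\max})/3}$.

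\emph{Step 2 (reduce to $n=2$).} Since $D$ has MHR and $D_{\max}$ is the distribution of the $k$-th order statistic of $k$ i.i.d.\ samples from $D$, Lemma~\ref{lm:mhr_preserving} gives that $D_{\max}$ also has MHR. Applying Lemma~\ref{lm:order_stat_diff}(1) to the sequence $Z_1,\ldots,Z_n$ then yields $\Delta_{1,n}\le \Delta_{1,2} = \Ex{|Z_1-Z_2|}$ for every $n\ge 2$.

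\emph{Step 3 (covariance identity and Cauchy--Schwarz).} Let $F$ be the c.d.f.\ and $f$ the density of $D_{\max}$. Splitting the double integral and interchanging the order of integration gives the universal identity
\[
\Ex{|Z_1-Z_2|} \;=\; 2\iint_{z_1\le z_2}(z_2-z_1)f(z_1)f(z_2)\,dz_1\,dz_2 \;=\; 4\,\Ex{ZF(Z)} - 2\Ex{Z} \;=\; 4\cov(Z, F(Z)),
\]
where the last equality uses $\Ex{F(Z)}=1/2$ for continuous $F$. Because $F(Z)\sim U[0,1]$, we have $\var(F(Z))=1/12$, so Cauchy--Schwarz yields
\[
\Ex{|Z_1-Z_2|}^2 \;=\; 16\,\cov(Z, F(Z))^2 \;\le\; 16\,\var(Z)\,\var(F(Z)) \;=\; \tfrac{4}{3}\var(D_{\max}).
\]
Chaining the bounds gives $\poa_a \le \Ex{|Z_1-Z_2|} \le \sqrt{4\var(D_{\max})/3}$, as claimed.

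\emph{Step 4 (uniform case).} When $D=U[0,1]$, $D_{\max}$ has c.d.f.\ $x^k$ and p.d.f.\ $kx^{k-1}$, giving $\Ex{Z}=k/(k+1)$ and $\Ex{Z^2}=k/(k+2)$, hence $\var(D_{\max}) = k/[(k+1)^2(k+2)]$ by a routine calculation. Substituting yields $\sqrt{4k/[3(k+1)^2(k+2)]} \approx (2/\sqrt{3})/k \approx 1.155/k$ for large $k$.

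The only step with nontrivial content is Step~2, which leans on both the MHR-preservation under maximization (Lemma~\ref{lm:mhr_preserving}) and the monotonicity of $\Delta_{1,n}$ for MHR distributions (Lemma~\ref{lm:order_stat_diff}); the heavy lifting has already been done in those lemmas. Step~3 is a distribution-free inequality, and the uniform instantiation in Step~4 is a direct computation, so no further obstacles are expected.
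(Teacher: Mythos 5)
Your Steps 1, 2, and 4 follow the paper's proof exactly: reduce to $\Ex{X_{(1)}-X_{(2)}}$ via Corollary~\ref{cor:infbud_notau}, use Lemma~\ref{lm:mhr_preserving} to get that $D_{\max}$ is MHR, then use Lemma~\ref{lm:order_stat_diff}(1) iteratively to drop $\Delta_{1,n}$ down to $\Delta_{1,2}=\Ex{|A-B|}$ for two i.i.d.\ draws from $D_{\max}$, and finally compute the $\mathrm{Beta}(k,1)$ variance for the uniform case. Where you diverge is Step~3: the paper invokes Theorem~5 of~\citet{cerone2005bounds} (with $p=q=2$) to obtain $\Ex{|A-B|}\le\sqrt{\tfrac{4}{3}\var(A)}$, whereas you derive the same constant from scratch via the identity $\Ex{|Z_1-Z_2|}=4\cov(Z,F(Z))$, the observation that $F(Z)\sim U[0,1]$ so $\var(F(Z))=1/12$, and Cauchy--Schwarz. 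I checked the algebra and it is correct; this is exactly the $p=q=2$ case of Cerone's inequality, recovered elementarily. What your version buys is self-containedness and a transparent reason for the $\sqrt{4/3}$ constant (it is tight exactly when $Z$ is an affine function of $F(Z)$, i.e.\ $D_{\max}$ uniform); what the paper's citation buys is generality, since the $p,q$-parameterized version allows other exponents. For this theorem the two are interchangeable, and your argument would be a reasonable drop-in replacement for the citation.
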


We note that for uniform distribution, the subconstant approximation factor above guarantees the negative dependency on the number of solutions $k$.
Thus, it improves largely upon the constant factor approximation of Theorem~\ref{thm:comp_sym_general}, and the constant factor approximation ratio of $2$ for the Bayesian mechanism in \citet{kleinberg2018delegated}.

Furthermore, if we consider a more restrictive class of problem instances, we prove that one can obtain both negative dependencies on $n$ and $k$, \ie the $\poa$ converges to $1$ whenever $n$ or $k$ increases.
Similar to the MHR distribution case, our proof heavily relies on the following lemmas.
\begin{lemma}\label{lm:order_stat_diff_restricted}
        Define variables the same as Lemma~\ref{lm:order_stat_diff}.
	Then, for any $k \le n-1$, the following holds.
		\begin{enumerate}
			\item If $f(x)$ is nondecreasing on $x$, then $(n+1)\Delta_{k,n} \le n\Delta_{k,n-1}$.
			\item If $f(x)$ is nonincreasing on $x$, then $(n+1)\delta_{k,n} \le n\delta_{k,n-1}$.
		\end{enumerate}
\end{lemma}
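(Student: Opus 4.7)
The plan is to follow the same spacing-formula strategy used for Lemma~\ref{lm:order_stat_diff}, but to push it further so that the $(n+1)/n$ strengthening is extracted from stochastic dominance between Beta laws rather than from a coupling on the samples. First, I will write both spacings through the standard identity $\Ex{X_{r:n}-X_{r-1:n}} = \binom{n}{r-1}\int F(t)^{r-1}(1-F(t))^{n-r+1}\,dt$, which follows from $X_{r:n}-X_{r-1:n} = \int\IND[X_{r-1:n}\le t<X_{r:n}]\,dt$ combined with the fact that this indicator equals $1$ exactly when $r-1$ of the $n$ samples lie at or below $t$. Specializing to $r=n-k+1$ and $r=k+1$ gives
\begin{align*}
\Delta_{k,n} = \binom{n}{k}\!\int\! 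F(t)^{n-k}(1-F(t))^{k}\,dt, \qquad \delta_{k,n} = \binom{n}{k}\!\int\! F(t)^{k}(1-F(t))^{n-k}\,dt.
\end{align*}

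Next, I substitute $u=F(t)$, so $dt=du/g(u)$ with $g(u):=f(F^{-1}(u))$, and set $h(u):=1/g(u)$. A direct computation shows that the prefactors $(n+1)\binom{n}{k}B(n-k+1,k+1)$ and $n\binom{n-1}{k}B(n-k,k+1)$ are both equal to $1$, so the claim $(n+1)\Delta_{k,n}\le n\Delta_{k,n-1}$ reduces to
\begin{align*}
\Exu{U\sim\mathrm{Beta}(n-k+1,k+1)}{h(U)} \le \Exu{U\sim\mathrm{Beta}(n-k,k+1)}{h(U)},
\end{align*}
and the corresponding cancellation reduces $(n+1)\delta_{k,n}\le n\delta_{k,n-1}$ to
\begin{align*}
\Exu{U\sim\mathrm{Beta}(k+1,n-k+1)}{h(U)} \le \Exu{U\sim\mathrm{Beta}(k+1,n-k)}{h(U)}.
\end{align*}

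Finally I will close these two displays by combining two standard ingredients. On one hand, if $f$ is nondecreasing then $g=f\circ F^{-1}$ is nondecreasing on $[0,1]$ (since $F^{-1}$ is increasing), so $h$ is nonincreasing; symmetrically, if $f$ is nonincreasing then $h$ is nondecreasing. On the other hand, $\mathrm{Beta}(\alpha',\beta)$ stochastically dominates $\mathrm{Beta}(\alpha,\beta)$ whenever $\alpha'>\alpha$, and $\mathrm{Beta}(\alpha,\beta)$ stochastically dominates $\mathrm{Beta}(\alpha,\beta')$ whenever $\beta<\beta'$. Pairing a monotone $h$ with the correctly ordered Beta pair then yields each of the two displays in the claimed direction. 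I expect the main obstacle to be the factorial bookkeeping in the reduction step, since the exact cancellation to prefactor $1$ is precisely what upgrades the earlier bound $\Delta_{k,n}\le \Delta_{k,n-1}$ from Lemma~\ref{lm:order_stat_diff} to the strengthened $(n+1)\Delta_{k,n}\le n\Delta_{k,n-1}$ required here; once the two expected-value comparisons are in place, the stochastic dominance/monotonicity step is immediate.
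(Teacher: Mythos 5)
Your proposal is correct, and it takes a genuinely cleaner route than the paper's proof. The paper splits the target inequality as $(n+1)\Delta_{k,n}=\Delta_{k,n}+n\Delta_{k,n}\le n\Delta_{k,n-1}$, invokes David's augmented formula to express $\Ex{X_{n-k+1:n}-X_{n-k:n-1}}$ and $\Ex{X_{n-k:n}-X_{n-k-1:n-1}}$ (differences of order statistics across sample sizes $n$ and $n-1$), packages everything into a sign condition on a tail integral $G(t)$, and closes with the Barlow--Proschan integration lemma, using $h(x)=1/f(x)$ as the monotone weight. You instead represent each spacing directly via $\Ex{X_{r:n}-X_{r-1:n}}=\int\Pr{X_{r-1:n}\le t<X_{r:n}}\,dt$, push through $u=F(t)$ to land on exact Beta expectations, observe that the prefactors $(n+1)\binom{n}{k}B(n-k+1,k+1)$ and $n\binom{n-1}{k}B(n-k,k+1)$ both collapse to $1$ (and similarly for $\delta$), and then finish by pairing the monotonicity of $h=1/(f\circ F^{-1})$ with first-order stochastic dominance between $\mathrm{Beta}(n-k+1,k+1)$ and $\mathrm{Beta}(n-k,k+1)$ (and the mirrored pair for part 2). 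The two arguments are the same at bottom --- the Barlow--Proschan step is a stochastic-dominance statement in disguise, with $G(t)\le 0$ playing the role of the c.d.f.\ comparison --- but your version is tighter conceptually: the constants $n+1$ and $n$ are revealed as exactly the Beta normalizations needed to make the two sides comparable as expectations, there is no need for the cross-sample-size formula for $\Ex{X_{r:n}-X_{r:n-1}}$, and the final step is a one-line appeal to a standard monotone-likelihood-ratio fact about Beta laws. You should spell out the prefactor cancellation explicitly (it is indeed $\frac{(n+1)\cdot n!}{k!(n-k)!}\cdot\frac{(n-k)!\,k!}{(n+1)!}=1$ and $\frac{n\cdot(n-1)!}{k!(n-1-k)!}\cdot\frac{(n-k-1)!\,k!}{n!}=1$) and cite or briefly justify the Beta stochastic-dominance fact, but the argument is sound as laid out.
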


\begin{lemma}\label{lm:id_preserving}
        Let $X_1, \ldots, X_n$ be i.i.d. random variables  with p.d.f. $f$.
	Then, the following holds.
	\begin{enumerate}
		\item If $f(x)$ is nondecreasing, then the p.d.f. of the largest order statistics is  nondecreasing on $x$.
		\item If $f(x)$ is nonincreasing, then the p.d.f. of the smallest order statistics is  nonincreasing on $x$.
	\end{enumerate}
\end{lemma}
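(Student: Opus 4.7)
The plan is direct and essentially computational: invoke the standard closed-form p.d.f.\ of the extreme order statistics and then observe that the product of nonnegative functions that share the same direction of monotonicity remains monotone in that direction.

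\textbf{Step 1 (closed-form densities).} Let $F$ denote the c.d.f.\ associated with $f$. Since $X_1,\ldots,X_n$ are i.i.d., the p.d.f.\ of the largest order statistic $X_{n:n}$ is
\begin{align*}
g_{\max}(x) \;=\; n\, F(x)^{n-1} f(x),
\end{align*}
and the p.d.f.\ of the smallest order statistic $X_{1:n}$ is
\begin{align*}
g_{\min}(x) \;=\; n\, \bigl(1-F(x)\bigr)^{n-1} f(x).
\end{align*}

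\textbf{Step 2 (monotonicity of the first factor).} For part (1), observe that $F$ is always nondecreasing and nonnegative (it is a c.d.f.), so $F(x)^{n-1}$ is nondecreasing and nonnegative. For part (2), $1-F(x)$ is always nonincreasing and nonnegative, hence $(1-F(x))^{n-1}$ is nonincreasing and nonnegative.

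\textbf{Step 3 (product preserves monotonicity).} For any two nonnegative functions $u,v$ that are both nondecreasing, the product $uv$ is nondecreasing: if $x\le y$, then $u(x)v(x)\le u(y)v(x)\le u(y)v(y)$, using nonnegativity at each step. The analogous fact holds for products of nonnegative nonincreasing functions. Applying this to $u=F^{n-1}$, $v=f$ under the hypothesis of part (1) yields that $g_{\max}$ is nondecreasing; applying it to $u=(1-F)^{n-1}$, $v=f$ under the hypothesis of part (2) yields that $g_{\min}$ is nonincreasing.

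The only conceptual point worth double-checking—indeed the one quasi-obstacle—is that both factors are \emph{nonnegative}, which is what allows monotonicity to transfer through the product. Since p.d.f.s and c.d.f.s of real-valued random variables are nonnegative by definition, this is immediate, so no further care is required.
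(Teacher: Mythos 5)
Your proof is correct and follows essentially the same approach as the paper: write the extreme-order-statistic densities in closed form as $n F(x)^{n-1} f(x)$ and $n (1-F(x))^{n-1} f(x)$, then observe that each factor is nonnegative and monotone in the required direction. You are slightly more careful in spelling out that nonnegativity is what lets monotonicity pass through the product, which the paper leaves implicit, but the argument is the same.
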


We highlight that Lemma \ref{lm:order_stat_diff_restricted} implies that the expected difference between two consecutive order statistics is strictly decreasing over $n$ with the factor of $\nicefrac{n}{n+1}$.
Using the above lemmas, we obtain the following theorem, where the proof is presented in Appendix \ref{pf_thm:decaying_upperbound}.
\begin{theorem}\label{thm:decaying_upperbound}
        Under symmetric agents with $X_{i,j} \sim D$ such that $D$ has nondecreasing p.d.f. $f(x)$, there exists a PIM with $\poa_a \le \frac{1}{n+1}\sqrt{12\var(D_{\max})}$.
	In case of $D = U[0,1]$, the approximation factor becomes approximately $\frac{3.464}{k(n+1)}$.
\end{theorem}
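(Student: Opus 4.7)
The plan is to chain together Corollary~\ref{cor:infbud_notau}, Lemma~\ref{lm:id_preserving}, Lemma~\ref{lm:order_stat_diff_restricted}, and a simple variance bound. By Corollary~\ref{cor:infbud_notau}, the PIM that uses a trivial (always-accepting) eligible set already achieves $\poa_a \le \Ex{X_{(1)} - X_{(2)}}$, so the task reduces to bounding the expected gap between the top two order statistics of the $n$ i.i.d. random variables $X_{i,(1)} \sim D_{\max}$. The central observation is that $D_{\max}$ inherits the structural assumption on $D$: since $D$ has nondecreasing p.d.f., Lemma~\ref{lm:id_preserving}(1) implies that $D_{\max}$, as the distribution of the maximum of $k$ i.i.d.~samples from $D$, also has a nondecreasing p.d.f. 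Hence all subsequent lemmas that require nondecreasing density can be applied to $D_{\max}$ in lieu of $D$.

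Next, I would iterate Lemma~\ref{lm:order_stat_diff_restricted}(1) applied to $D_{\max}$. Writing $\Delta_{1,m}$ for the expected difference between the top and second-top order statistics of $m$ i.i.d.~samples from $D_{\max}$, the inequality $(m+1)\Delta_{1,m} \le m\, \Delta_{1,m-1}$ telescopes from $m = n$ down to $m = 2$, giving
\begin{equation*}
\Delta_{1,n} \;\le\; \frac{n}{n+1}\cdot \frac{n-1}{n}\cdots \frac{2}{3}\, \Delta_{1,2} \;=\; \frac{2}{n+1}\,\Delta_{1,2}.
\end{equation*}
Now $\Delta_{1,2} = \Ex{|Y_1 - Y_2|}$ for $Y_1, Y_2 \sim D_{\max}$ i.i.d.~, and a variance-based bound finishes the job: by Jensen's inequality, $\Ex{|Y_1-Y_2|} \le \sqrt{\Ex{(Y_1-Y_2)^2}} = \sqrt{2\var(D_{\max})}$, which combines with the telescoping estimate to yield an $O\!\big(\sqrt{\var(D_{\max})}/(n+1)\big)$ bound of the claimed form.

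The main obstacle is getting the constant $\sqrt{12}$ sharply. A plain Jensen step yields the constant $2\sqrt{2}$; to recover $\sqrt{12}$ one has to exploit the nondecreasing-density assumption on $D_{\max}$ more carefully when bounding $\Ex{|Y_1-Y_2|}$ in terms of $\sqrt{\var(D_{\max})}$ (note the bound becomes tight for $Y_1, Y_2 \sim U[0,1]$, where $\Ex{|Y_1-Y_2|} = 1/3 = \sqrt{12\var(U[0,1])}/3$, suggesting this tighter inequality is the intended route). Once the general bound $\poa_a \le \frac{\sqrt{12\var(D_{\max})}}{n+1}$ is in hand, the $U[0,1]$ specialization is a direct substitution: $D_{\max}$ is $\mathrm{Beta}(k,1)$ with $\var(D_{\max}) = k/((k+1)^2(k+2))$, giving $\sqrt{12\var(D_{\max})} \approx \sqrt{12}/k \approx 3.464/k$ for large $k$, and hence the stated $3.464/(k(n+1))$ approximation.
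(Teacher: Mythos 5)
Your overall strategy matches the paper's exactly: invoke Corollary~\ref{cor:infbud_notau} to reduce to $\Ex{X_{(1)}-X_{(2)}}$, use Lemma~\ref{lm:id_preserving}(1) to transfer the nondecreasing-density hypothesis from $D$ to $D_{\max}$, telescope Lemma~\ref{lm:order_stat_diff_restricted}(1), and finish with a variance bound on $\Ex{|A-B|}$. But there are two gaps.

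First, the telescoping arithmetic is off. The inequality $(m+1)\Delta_{1,m}\le m\,\Delta_{1,m-1}$ is only applicable while both sides are defined, i.e.\ for $m\ge 3$ (so that $\Delta_{1,m-1}=\Delta_{1,2}$ is the smallest reachable quantity). The valid chain is
\[
(n+1)\Delta_{1,n}\le n\,\Delta_{1,n-1}\le\cdots\le 4\,\Delta_{1,3}\le 3\,\Delta_{1,2},
\]
which gives $\Delta_{1,n}\le\frac{3}{n+1}\Delta_{1,2}$, not $\frac{2}{n+1}\Delta_{1,2}$. Your product ends at $\frac{2}{3}$, which would correspond to a step $3\Delta_{1,2}\le 2\Delta_{1,1}$ involving the undefined quantity $\Delta_{1,1}$.

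Second, to turn $\Delta_{1,2}=\Ex{|Y_1-Y_2|}$ into a bound compatible with $\sqrt{12\,\var(D_{\max})}/(n+1)$, you need $\Ex{|Y_1-Y_2|}\le\sqrt{\tfrac{4}{3}\var(D_{\max})}$, not the weaker $\sqrt{2\,\var(D_{\max})}$ from plain Jensen. You correctly sense that a sharper step is required here, but you conjecture it must come from further exploiting the monotone density of $D_{\max}$. In fact the paper uses a distribution-free inequality (Theorem~5 of Cerone, cited in the proof of Theorem~\ref{thm:mhr_upperbound}): for i.i.d.\ $A,B$ with finite variance and any $p,q>1$ with $1/p+1/q=1$, one has $\Ex{|A-B|}\le\frac{2}{(q+1)^{1/q}}\Ex{|A-\mu|^p}^{1/p}$, and $p=q=2$ yields $\Ex{|A-B|}\le\sqrt{\tfrac{4}{3}\var(A)}$. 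Combining this with the correct factor $\frac{3}{n+1}$ gives $\frac{3}{n+1}\sqrt{\tfrac{4}{3}\var(D_{\max})}=\frac{1}{n+1}\sqrt{12\,\var(D_{\max})}$. No further use of the monotone-density hypothesis is needed at this stage; it was only required so that Lemmas~\ref{lm:id_preserving} and~\ref{lm:order_stat_diff_restricted} could be applied.
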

This strictly improves upon the approximation factor of Theorem \ref{thm:mhr_upperbound}, since the approximation factor has positive effects on both $k$ and $n$.
This highlights that delegating to multiple agents indeed brings significant advantages over the single-agent case, under which the best one may hope for is to construct constant factor approximation ratio even with Bayesian mechanism.
Notably, the approximation factor decreases to zero as $n$ or $k$ increases, \ie the principal's utility converges to $\Ex{X_{\max}}$.
It remains a major open problem to see whether the guarantees above are tight or not.

\section{Incomplete information}\label{section:no_cost_incomp}
Recall that in the incomplete information setting, each agent $i$ decides his own action at the interim stage of the Bayesian game, \ie when he observes his own solutions.
More formally, agent $i$'s strategy $\sigma_i: \Omega^* \mapsto \Sigma_i$ is a function from the observed solutions to a signal, which is typically a subset of solutions.
We denote by $\sigma_i(\omb_i)$ the agent's action given the observed solution $\omb_i$.
Given $\omb_i$ and others' strategies $\sigma_{-i}$, agent $i$'s strategy $\sigma_i(\omb_i)$ induces the following expected utility.
\begin{align*}
	u_i(\sigma_i(\omb_i), \sigma_{-i})
	=
	\Ex{y_i(f_{M,(\sigma_i(\omb_i), \sigma_{-i})})}
	=
	\int_{\omb_{-i}}\Ex{y_i(f_{M,(\sigma_i(\omb_i), \sigma_{-i}(\omb_{-i}))})}f_{-i}(\omb_{-i})d\omb_{-i},
\end{align*}
where $f_{-i}(\omb_{-i})$ denotes the p.d.f. of observing $\omb_{-i}$.
Namely, each agent accounts for the probability that certain solutions are observed by the other agents, confirms their corresponding strategies, and computes his own utility based on them.
This process is captured by the integral in the last equation.
Note that the expectation inside the integral accounts for the random bits of the mechanism, which does not appear in deterministic mechanisms.

Then, given the others' strategies $\sigma_{-i}$, agent $i$'s expected utility of playing $\sigma_i$ can be written as
\begin{align*}
	u_i(\sigma_i, \sigma_{-i})
	=
	\Exu{\omb_i}{u_i(\sigma_i(\omb_i), \sigma_{-i})}	
	= \int_{\omb_i} u_i(\sigma_i(\omb_i), \sigma_{-i})f_{i}(\omb_i)d\omb_i.
\end{align*}

Finally, a Bayesian Nash equilibrium can be defined as follows.
\begin{definition}[Bayesian Nash Equilibrium]
	Under the incomplete information setting, given a mechanism $M$, a set of strategies $(\sigma_1,\sigma_2,\ldots, \sigma_n)$ is defined to be a Bayesian Nash equilibrium (BNE), if for any $i \in [n]$, $\sigma_i$ satisfies the following for any other strategy $\sigma'_i$ of agent $i$.
	\begin{align*}
		u_i(\sigma_i, \sigma_{-i}) \ge u_i(\sigma'_i, \sigma_{-i}).
	\end{align*}
\end{definition}
Our main objective is to construct a mechanism that induces a large principal's expected utility under its BNE.
In the case of incomplete information, the analysis of the approximation ratio becomes more involved as all the computations of the expected payoff of some strategies entail probabilistic arguments regarding the realization of the others' solutions.
Moreover, since the \emph{feasible} strategy depends on the realization of the solutions for each agent, its analysis significantly differs from the standard Bayesian games such as the first-price auction or the all-pay auction.
Due to this fact, even under the symmetric agents, the standard technique of computing symmetric BNE in the first-price auction, \eg which involves the use of the Envelope theorem \cite{milgrom2002envelope}, does not apply to our setting.
Despite these analytical challenges, we succeed in obtaining several positive results in terms of the PIM.

Before presenting our main result on PIM, we first study the class of the Bayesian mechanism.
In the case of BM, it is straightforward to see that all the results in the complete information setting hold in the incomplete information setting as well.
\begin{corollary}\label{thm:recover_bayes}
	Theorem~\ref{thm:recover}, \ref{thm:lowerbound_comp_worst}, and \ref{thm:informed} still holds for the incomplete information setting.
\end{corollary}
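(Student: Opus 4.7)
My plan is to handle each of the three theorems separately, guided by two observations: the upper bounds of Theorems~\ref{thm:recover} and~\ref{thm:informed} follow from an information-agnostic weakening of Theorem~\ref{theorem:multi-agent-unlimited-budget}, while the lower bound in Theorem~\ref{thm:lowerbound_comp_worst} rests on an instance that is intrinsically single-agent, so the complete-versus-incomplete distinction becomes irrelevant there.

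For the upper bounds, I would reuse the same MSPM as in the complete information proofs: a deterministic tie-breaking rule together with a homogeneous threshold-based eligible set $R = \{\omega : x(\omega) \ge \tau\}$. The central claim to verify is that in any BNE respecting Pareto-optimal play (Assumption~\ref{as:pareto}), whenever an agent holds at least one eligible solution he proposes one in equilibrium, because ``submit nothing'' is strictly dominated by submitting any eligible solution (the latter grants a positive probability of winning and hence a positive expected payoff, regardless of what the others do). Since the principal always picks the $x$-maximizing candidate, on the event $\{X_{\max} \ge \tau\}$ the winner $\omega^\star$ satisfies $x(\omega^\star) \ge \tau$, yielding the information-agnostic lower bound
\begin{align*}
\Ex{x(f_{M,\sigma}(\omb))} \;\ge\; \tau \Pr{X_{\max} \ge \tau}.
\end{align*}
For Theorem~\ref{thm:recover}, I would take $\tau$ to be the median of $X_{\max}$, at which the standard prophet-inequality median argument gives $\tau \Pr{X_{\max} \ge \tau} \ge \tfrac{1}{2}\Ex{X_{\max}}$, hence $\poa_m \le 2$. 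For Theorem~\ref{thm:informed}, the hypothesis $F(x) \le x^{\alpha}$ on $[0,1]$ gives $\Pr{X_{\max} \ge \tau} \ge 1 - \tau^{\alpha n k}$, and optimizing $\tau$ reproduces the same elementary calculation as in the complete information proof, yielding $\poa_m \le (\alpha nk/(\alpha nk+1))^{-1/(\alpha nk)}$.

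For Theorem~\ref{thm:lowerbound_comp_worst}, I would replay the complete-information construction verbatim: introduce a single ``super-agent'' whose $x$-values strictly dominate those of every other agent, so that regardless of the other agents' strategies the principal's winner functionally depends only on the super-agent's submission. The game then reduces to a single-agent instance, in which complete and incomplete information coincide trivially, and layering the standard worst-case prophet-inequality instance on the super-agent's solutions gives $\pos_m \ge 2 - \eps$.

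The step I expect to be the main obstacle is verifying that dropping the $X_{(2)}$ contribution from Theorem~\ref{theorem:multi-agent-unlimited-budget} does not degrade the multiplicative approximation guaranteed by Theorem~\ref{thm:informed}, since in incomplete information we lose the fine-grained ``first-best beats second-best'' equilibrium behavior. Fortunately, inspecting that proof shows the approximation factor is already obtained by optimizing the weaker quantity $\tau \Pr{X_{\max} \ge \tau}$ alone (the $X_{(2)}$ term only strengthens the complete-information bound), so the same threshold choice works in the Bayesian game as long as Pareto-optimal play is maintained.
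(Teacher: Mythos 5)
Your handling of Theorems \ref{thm:lowerbound_comp_worst} and \ref{thm:informed} is fine and matches what the paper intends, but there is a genuine gap in your step for Theorem \ref{thm:recover}.

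You assert that with $\tau$ equal to the median of $X_{\max}$, one has $\tau\Pr{X_{\max}\ge\tau}\ge\tfrac12\Ex{X_{\max}}$. This inequality is false: since $\Pr{X_{\max}\ge\tau}=\tfrac12$ at the median, it reduces to ``median $\ge$ mean,'' which fails for any right-skewed distribution (exponential has median $\ln 2\approx 0.69$ and mean $1$). The standard median prophet argument produces $\tfrac12\Ex{X_{\max}}$ only by combining $\tfrac{\tau}{2}$ with the overshoot term $\tfrac12\Ex{(X_{\max}-\tau)^+}$, and the latter is exactly what you discard when you keep only the information-agnostic $\tau\Pr{X_{\max}\ge\tau}$ piece. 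In fact, on an unbounded domain $\max_\tau\tau\Pr{X_{\max}\ge\tau}$ cannot yield \emph{any} constant-factor approximation to $\Ex{X_{\max}}$ (take $\Pr{X_{\max}\ge t}=t^{-(1+\eps)}$ on $[1,\infty)$; the ratio tends to $0$ as $\eps\to 0$). Your remark that ``inspecting that proof shows the approximation factor is already obtained by optimizing the weaker quantity $\tau\Pr{X_{\max}\ge\tau}$ alone'' is true of Theorem~\ref{thm:informed} (whose distributions are confined to $[0,1]$), but does not transfer to Theorem~\ref{thm:recover}.

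The paper's proof of Theorem \ref{thm:recover} (Appendix \ref{pf_thm:recover}) does not pass through Theorem \ref{theorem:multi-agent-unlimited-budget} at all. It constructs a fictitious single-agent instance in which one agent $s$ samples every solution, invokes \citet{kleinberg2018delegated}'s single-agent $2$-approximation SPM on that instance (where the full prophet machinery, overshoot included, lives), and then shows via a realization-by-realization coupling that the MSPM with the same eligible set weakly dominates the SPM. The only behavioral fact used in the coupling is Pareto-optimal play (Assumption \ref{as:pareto}): since the SPM winner $\om$ is the $y_i$-maximal eligible element of $\omb_i$, we have $y_i(\om)\ge y_i(\sigma_i)$, hence $x(\sigma_i)\ge x(\om)$. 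This step concerns only the agent's own observed multiset and never references what he knows about the others, so it is identical whether $\sigma_i$ is a best response in a complete-information Nash equilibrium or a Bayes--Nash equilibrium. That is the content of the corollary. To repair your proof, replace your median-threshold derivation for Theorem~\ref{thm:recover} with the observation that the reduction-plus-coupling argument is already information-agnostic.
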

We skip the proofs as exactly the same arguments of the corresponding proofs in the complete information setting directly carry over to the incomplete information setting.

In the case of PIM, similar to the complete information setting, we can directly argue that the additive approximation factor can be arbitrarily bad.
We omit the proof as it is exactly the same as the proof of Theorem \ref{thm:comp_obl_neg}.
\begin{corollary}\label{thm:pim_lower}
    Theorem \ref{thm:comp_obl_neg} still holds for the incomplete information setting.
\end{corollary}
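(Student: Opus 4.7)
The plan is to reuse the worst-case instance constructed in the proof of Theorem~\ref{thm:comp_obl_neg} verbatim and verify that its analysis does not rely on complete information. Recall that the original construction pairs a ``super-agent''—whose solutions deterministically dominate (in the principal's utility) every possible realization of the other agents' solutions—with the distributional adversary from Claim~\ref{cl:3}, which shows that a PIM cannot meaningfully restrict its eligible set without prior knowledge of the distributions. My proposal is to treat the same instance as a Bayesian game and show that each step of the original argument still applies.

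The crucial observation is that the super-agent's best response is identical under complete and incomplete information. Because, by construction, every solution of the super-agent dominates every solution of every other agent with probability one, the super-agent knows (from common knowledge of the distributions alone, without needing to observe rivals' realizations) that whichever solution he proposes from his eligible set will be selected as the winner. Hence his interim expected payoff reduces to his own realized utility on that single proposal, exactly as in the complete information setting, and his unique interim best response is to propose the solution in his eligible set that maximizes his own utility.

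The non-super-agents' strategies have no effect on the allocation, since their solutions are dominated almost surely. Consequently, any strategy profile for them is trivially a best response and part of some BNE. Combining these observations, the principal's expected utility at every BNE of the Bayesian instance coincides with the principal's utility at the corresponding Nash equilibrium of the complete information instance, so the bound $\pos_a \ge L - \eps$ transfers. The only step of the original argument that needs auditing is whether any agent's equilibrium analysis appeals to observation of others' realized solutions; inspection shows that it does not, since the super-agent's decision problem is already a function of his own realized solutions alone. Thus the main (and essentially only) obstacle is this auditing step, which is routine once the super-agent construction is in place.
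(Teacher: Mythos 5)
Your proposal is correct and takes essentially the same approach as the paper, which simply observes that the proof of Theorem~\ref{thm:comp_obl_neg} carries over verbatim: since the super-agent's solutions dominate all others with probability one, his interim best response depends only on his own realizations and hence is identical in both information settings, and the non-super-agents' play is irrelevant to the allocation. Your auditing step — checking that no part of the original argument appeals to observation of rivals' realized solutions — is exactly what makes the paper's one-line justification valid.
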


This again naturally leads us to focus on the symmetric agents.
Interestingly however, in the following theorem, we show that even under the symmetric agents, for any PIM, there exists a problem instance such that proposing a solution that minimizes the principal's utility constructs a BNE, and thus we can construct a problem instance such that the approximation factor can be arbitrarily bad.
We defer the proof in Appendix \ref{pf_thm:pim_neg_incomp}.
\begin{theorem}\label{thm:pim_neg_incomp}
        Under the symmetric agents, there exists no PIM such that $\poa_a < \Ex{X_{(1)} - \max_i X_{i,(k)}}$.
        For the uniform distribution $x(D_{i,j}) \sim U[0,1]$, there exists no PIM such that $\poa_a < \frac{k-1}{k+1} - \sqrt{\frac{\log n}{2k+3}}$,
        \ie the additive approximation factor of any PIM can be arbitrarily bad as $k$ increases given $n$.
\end{theorem}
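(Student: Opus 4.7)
The plan is to build an adversarial problem instance in which each agent's utility is sharply decreasing in $x$, identify a BNE where every agent proposes his worst-for-principal solution, and then bound the resulting gap via concentration.

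\textbf{Construction.} Take $\Omega=[0,1]$ with $x(\omega)=\omega$ and $X_{i,j}\sim U[0,1]$ i.i.d. Set $y_i(\omega)=g(\omega)$ where $g(\omega)=e^{-M\omega}$ for a large constant $M=M(n,k)$; this realizes the ``strongly negatively correlated utility'' described informally after the theorem. By Theorem~\ref{thm:multiagent_revel} any PIM reduces to an MSPM, and by Remark~\ref{rm:bm_pim_comp} together with Claim~\ref{cl:3} a PIM cannot commit to non-trivial eligible sets without risking arbitrarily bad approximation under an adversarial prior; hence we may assume the mechanism simply returns the submitted candidate maximizing $x$.

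\textbf{BNE verification.} Consider the symmetric profile $\sigma_i(\omb_i)=\omega_{i,(k)}$. Under this profile the winner is $\arg\max_{i'}X_{i',(k)}$, and conditional on the realization, agent $i$'s best response is the rank $j$ maximizing $g(X_{i,(j)})\cdot(1-(1-X_{i,(j)})^k)^{n-1}$. Because $g$ decays exponentially in $X_{i,(j)}$ while the selection factor $(1-(1-X_{i,(j)})^k)^{n-1}$ grows only polynomially, a Laplace-type calculation (comparing the integrals $k^n(n-1)!/M^n$ for the equilibrium payoff against $C_j k^{n-1}(n+k-j-1)!/M^{n+k-j}$ for $j<k$) shows that for $M$ sufficiently large, $g(X_{i,(k)})\cdot(1-(1-X_{i,(k)})^k)^{n-1}$ dominates $g(X_{i,(j)})\cdot(1-(1-X_{i,(j)})^k)^{n-1}$ pointwise almost surely for every $j<k$. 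Thus proposing $\omega_{i,(k)}$ is a best response for every realization, and any mixed or realization-dependent deviation reduces to these rank comparisons by linearity of expectation. Consequently, under this BNE the principal receives exactly $\Ex{\max_i X_{i,(k)}}$, establishing $\poa_a\ge\Ex{X_{(1)}-\max_i X_{i,(k)}}$, which is the general statement.

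\textbf{Concentration for $U[0,1]$.} One has $\Ex{X_{(1)}}=nk/(nk+1)\ge k/(k+1)$. To bound $\Ex{\max_i X_{i,(k)}}$, note $\Ex{X_{i,(k)}}=1/(k+1)$ and $X_{i,(k)}\in[0,1]$ with CDF $1-(1-t)^k$. Standard moment analysis of the Beta$(1,k)$ distribution, sharpening the naive $[0,1]$-variable proxy $1/4$, yields a sub-Gaussian parameter of order $1/(2k+3)$; combined with a Hoeffding-style bound on the maximum of $n$ i.i.d.\ variables this gives
\begin{align*}
\Ex{\max_i X_{i,(k)}}\le \frac{1}{k+1}+\sqrt{\frac{\log n}{2k+3}}.
\end{align*}
Subtracting from $\Ex{X_{(1)}}\ge k/(k+1)$ yields $\Ex{X_{(1)}-\max_i X_{i,(k)}}\ge\frac{k-1}{k+1}-\sqrt{\log n/(2k+3)}$, as claimed.

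\textbf{Main obstacle.} The hardest step is the pointwise BNE verification across all ranks and all realizations, because the agent's feasible action set depends on his own realization and the Envelope-theorem shortcut familiar from first-price auction analysis is unavailable. Concretely, one must show that the exponential gap $e^{M(X_{i,(j)}-X_{i,(k)})}$ dominates the polynomial selection-ratio $[(1-(1-X_{i,(j)})^k)/(1-(1-X_{i,(k)})^k)]^{n-1}$ uniformly in the realization and in $j<k$, which pins down the required scaling $M=M(n,k)$. The secondary technical wrinkle is obtaining the sharp sub-Gaussian constant $1/(2k+3)$ for the Beta$(1,k)$ distribution, since the naive bounded-variable proxy $1/4$ is too loose to yield the advertised bound.
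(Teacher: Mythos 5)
Your concentration step is essentially the paper's: both use the sub-Gaussian parameter $\tfrac{1}{4k+6}=\tfrac{1}{2(2k+3)}$ for $\mathrm{Beta}(1,k)$ and a maximal inequality to get $\Ex{\max_i X_{i,(k)}}\le\tfrac{1}{k+1}+\sqrt{\tfrac{\log n}{2k+3}}$, then compare against $\Ex{X_{(1)}}=\tfrac{nk}{nk+1}\ge\tfrac{k}{k+1}$.

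The BNE-verification step, however, has a genuine gap: the exponential profile $y(\omega)=e^{-M\omega}$ does \emph{not} produce the claimed equilibrium, for any finite $M$. Fix the putative profile in which every other agent proposes his smallest solution. Agent $i$'s interim payoff from proposing a solution with $x(\omega)=t$ is
$\;f(t) := (1-(1-t)^k)^{n-1}\,e^{-Mt}\,$,
and a BNE requires $f$ to be nonincreasing on $(0,1]$ so that the smallest realized $t$ is a best response for \emph{every} realization of $\omb_i$. But $f(0^+)=0$ and differentiating shows $f'(t)\le 0$ iff $(n-1)k(1-t)^{k-1}\le M\bigl(1-(1-t)^k\bigr)$; the left side tends to $(n-1)k>0$ while the right side tends to $0$ as $t\to 0^+$, so $f$ is strictly increasing near $0$. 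Thus whenever $X_{i,(k)}$ (and possibly $X_{i,(k-1)}$) fall in this region, deviating to a larger proposal is strictly profitable, and the profile is not a BNE. The ``pointwise almost-sure domination'' claim cannot be rescued by taking $M$ large: making $M$ larger only shrinks but never removes the bad neighborhood of $0$, and, separately, the Laplace-type comparison of ex-ante integrals you invoke is in any case the wrong quantity — BNE is an interim notion, so one must compare rank payoffs conditionally on each realization of $\omb_i$, not in aggregate.

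The paper avoids this by choosing $y$ to blow up near $t=0$ exactly fast enough to cancel the vanishing winning probability: $y(\omega)=\bigl(1-(1-F(x(\omega)))^k\bigr)^{-2(n-1)}$, so the interim payoff becomes $(1-(1-t)^k)^{-(n-1)}$, which is globally decreasing. Any construction in which $y$ is bounded near $0$ (as $e^{-M\omega}\le 1$ is) will run into the same obstruction, since the selection probability $(1-(1-t)^k)^{n-1}$ vanishes polynomially at the origin. To repair your argument you would need $y(t)\to\infty$ at least as fast as $(1-(1-t)^k)^{-(n-1)}$ as $t\to 0^+$.
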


Compared to the subconstant approximation ratio of Theorem~\ref{thm:mhr_upperbound} and~\ref{thm:decaying_upperbound} in the complete information setting, Theorem~\ref{thm:pim_neg_incomp} implies rather pessimistic consequence such that its worst-case guarantees may become arbitrarily bad, alike to the efficiency of prior-independent mechanism in the single-agent setting.
Thus, the gain from multi-agent delegation becomes highly restrictive.

This negative result mainly arises from the fact that each agent has a utility function which is negatively correlated with the principal's utility, and they have limited information regarding each other's solutions.
This induces the agents to propose more in a selfish manner, compared to the complete information setting.

This phenomenon resembles the \emph{bid shading} effect in the BNEs of the first-price auction, in which the bidders tend to bid less than their valuations in equilibriums.
Due to the lack of information on the other's strategy, each agent may tend to submit a candidate which brings more ex-post utility even though it has a smaller chance of being selected.
Similarly from the first-price auction, this is mainly to prevent the case in which he submits more in favor of the principal  when the others' solutions were revealed to be less competitive in terms of being selected.
Combined with this effect and the fact that the agent's utility is strongly negatively correlated with the principal's utility, each agent tends to submit a solution that minimizes the principal's utility.

In this context, one natural question is whether we can recover a better approximation ratio in more restricted problem instances.
To this end, we consider the following assumption.
\begin{definition}[Independent utility]
    \label{def:indep-utility}
    A problem instance satisfies independence if $D_i$ is a product measure over $D_{i,x}$ and $D_{i,y_i}$, \ie $D_i = D_{i,x} \times D_{i,y_i}$, such that each of $x(\cdot)$ and $y_i(\cdot)$ are associated with the distributions $D_{i,x}$ and $D_{i,y_i}$ respectively, and $D_i$ is independent for $i \in [n]$.
\end{definition}
Note that with independent utility assumption, we can overcome the bad problem instance presented in the proof of Theorem \ref{thm:pim_neg_incomp}, as the principal and the agents do not longer have negatively correlated utility, but are rather not correlated at all.
Although we only obtain a positive result under the independent utility assumption due to the analytical tractability, we hope that one may extend our analysis into the case where the principal and the agents have positively correlated utility (or mild amount of negative correlation), and we leave it as a major open problem.

Due to the intrinsic difficulties in analyzing BNE, we consider a more general solution concept of approximate equilibrium defined as follows.
\begin{definition}[Approximate BNE]\label{def:apxbne}
	Given a mechanism $M$, a set of strategies $\sigma = (\sigma_1, \ldots, \sigma_n)$ is $\eps$-approximate BNE if for any $i \in [n]$, $\sigma_i$ satisfies the following for any other strategy $\sigma'_i$ of agent $i$.
	\begin{align*}
		u_i(\sigma_i, \sigma_{-i}) \ge u_i(\sigma'_i, \sigma_{-i}) - \eps,
	\end{align*}
\end{definition}
Our definition of $\pos$ and $\poa$ naturally carries over to the approximate equilibrium.
We eventually present our main result in the incomplete information setting.
Our main result in the incomplete information setting is as follows.
\begin{theorem}\label{thm:bayes_oblivious}
        Under the symmetric agents and independent utility with $X_{i,j} = X \sim U[0,1]$, there exists a PIM such that there exists $\eps$-approximate BNE that results in the optimal utility, \ie $\Ex{X_{\max}}$,  where $\eps = 1-e^{-\frac{n^2}{2(n-1)^2}}$.
        Note that $1 - e^{-\frac{n^2}{2(n-1)^2}}$ converges to $1-\nicefrac{1}{\sqrt{e}} \cong 0.3935$ as $n$ increases.
\end{theorem}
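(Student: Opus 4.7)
The plan is to instantiate a PIM in the MSPM family (Definition~\ref{def:mspm}) with homogeneous eligible set $R_i = \Omega$ and a deterministic tie-breaking rule that always selects the proposed solution of maximum $x$-value. As the candidate equilibrium profile I would take $\sigma_i^*(\omb_i) = \argmax_{\om \in \omb_i} x(\om)$: each agent submits his first-best (in $x$) solution. Under $\sigma^*$, every agent proposes his max-$x$ solution, so the winner's $x$-value equals $\max_i X_{i,(1)} = X_{\max}$; hence the principal's expected utility is exactly $\opt = \Ex{X_{\max}}$, settling the optimality half of the claim.

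It remains to verify that $\sigma^*$ is an $\eps$-approximate BNE with $\eps = 1 - e^{-n^2/2(n-1)^2}$. Fixing agent $i$ and conditioning on $\sigma_{-i}^*$, if agent $i$ proposes a solution with $x$-value $x$ and $y$-value $y$, he wins iff $x \geq \max_{j \neq i} X_{j,(1)}$. Since each $X_{j,(1)}$ has c.d.f.\ $u^k$ and the $n-1$ maxima are independent, the winning probability is $x^{r}$ with $r := k(n-1)$, so the interim expected utility from that proposal is $yx^r$. Writing
\begin{align*}
u_i^* = \Ex{Y_{i,(1)} X_{i,(1)}^r}, \qquad u_i^{BR} = \Ex{\max_{j \in [k]} Y_{i,j} X_{i,j}^r},
\end{align*}
the independent utility assumption yields $u_i^* = \Ex{Y_{i,(1)}}\Ex{X_{i,(1)}^r} = \tfrac{1}{2}\cdot\tfrac{1}{n}$, so the task reduces to proving $u_i^{BR} - u_i^* \leq \eps$.

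To bound $u_i^{BR}$, I would observe that a deviation is profitable only when some $j \neq \argmax_\ell X_{i,\ell}$ satisfies $Y_{i,j} X_{i,j}^r > Y_{i,(1)} X_{i,(1)}^r$; since $X_{i,j} < X_{i,(1)}$ and $r = k(n-1)$ grows with $n$, this essentially requires $X_{i,j}$ to lie close to $X_{i,(1)}$, a regime controlled by the joint density of the top two $x$-order statistics $f_{X_{i,k-1:k}, X_{i,k:k}}(u, v) = k(k-1)u^{k-2}$ on $0 \leq u \leq v \leq 1$. The plan is to express the deviation gain as a double integral against this joint density, use independence of $Y$ from $X$ to separate out the $y$-factors, and then apply the FKG inequality (Lemma~\ref{lm:FKG}) to the remaining coordinatewise monotone factors to factor the integral into tractable one-dimensional pieces. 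Invoking the elementary bound $1 - x \leq e^{-x}$ then converts the polynomial-in-$n$ expression that emerges into the exponential form $1 - e^{-n^2/2(n-1)^2}$.

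The main obstacle will be extracting precisely the constant $n^2/2(n-1)^2$ in the exponent. This demands tracking the $n$-dependence through three interacting sources---the winning exponent $k(n-1)$, the gap between the top two $X$-order statistics, and the FKG factorization---and performing the algebraic manipulations carefully enough to preserve the exponential form while respecting the integration region $\{0 \leq u \leq v \leq 1\}$ imposed by the order-statistic geometry; any slackness in the FKG step would either weaken the constant or destroy the exponential shape.
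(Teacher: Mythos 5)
Your high-level skeleton matches the paper's: both instantiate an MSPM with trivial eligible sets ($R_i = \Omega$) and propose the profile $\sigma_i^*(\omb_i) = \argmax_{\om\in\omb_i} x(\om)$, both observe that under this profile the principal gets exactly $\Ex{X_{\max}}$, and both reduce the problem to bounding interim regret against unilateral deviation. You also correctly derive that the interim payoff of proposing $\om$ against $\sigma_{-i}^*$ is $y(\om)x(\om)^{r}$ with $r = k(n-1)$, and you correctly identify the complementary event driving positive regret, namely that some $j$ has $Y_{i,j}X_{i,j}^r > Y_{i,(1)}X_{i,(1)}^r$.

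However, from there your plan diverges in a way that would not close. The paper's crucial simplification is \emph{not} to compute $u_i^{BR}$: since interim regret is exactly $0$ on the per-agent alignment event $E_i = \{\argmax_j Y_{i,j}X_{i,j}^r = \argmax_j X_{i,j}\}$ and bounded by $1$ off it, one immediately gets $u_i^{BR} - u_i^* \le \Pr{E_i^c}$ (the paper in fact uses the weaker $\Pr{E^c}$ with $E = \cap_i E_i$, which is where the $n^2$ in the exponent comes from). Your plan of directly evaluating $u_i^{BR} = \Ex{\max_j Y_{i,j}X_{i,j}^r}$ via an integral is substantially harder and unnecessary; more importantly, your plan to do it against the joint density $f_{X_{(k-1)},X_{(k)}}(u,v)$ of the \emph{top two} $x$-order statistics only accounts for the $j=2$ deviation. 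The probability of a profitable deviation must union over \emph{every} $j \in \{2,\dots,k\}$: the paper's bound on $\Pr{E_i}$ is a product $\prod_{j=2}^k\bigl(1 - \tfrac12 A_{i,j}\bigr)$ with $A_{i,j} = (X_{i,(j)}/X_{i,(1)})^{r}$, FKG is then used to factor that $k{-}1$-fold product, and each factor $g_j$ is evaluated against the joint density of $(X_{i,(j)}, X_{i,(1)})$, not just for $j=2$. Dropping the $j \ge 3$ factors changes the geometric-like sum $\sum_{j\ge 2} n^{-(j-1)}$ that produces the $\tfrac{n}{2(n-1)^2}$ exponent, so even the leading constant would come out wrong. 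Finally, the inequality you plan to invoke, $1-x \le e^{-x}$, points the wrong way: to convert the product lower bound on $\Pr{E_i}$ into exponential form one needs $1-x \ge e^{-x/(1-x)}$, since you are lower-bounding $\Pr{E}$ (and hence upper-bounding $\eps = \Pr{E^c}$), not upper-bounding it.
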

The proof can be found in Appendix \ref{pf_thm:bayes_oblivious}.
Remarkably, this implies that the principal can recover the optimal utility if the agents agree to sacrifice only a constant portion of their utility.
This is indeed possible due to the independence assumption and the fact that the agent does not have the exact information on the others' solutions.

Our proof is based on analyzing the probability of the event that the principal's and the agent's utility is somehow aligned, \ie an agent maximizes his expected utility by proposing his best solution in terms of principal. We show that this probability is lower bounded by $\exp \parans{\frac{-n^2}{2(n-1)^2}}$,\footnote{In Appendix \ref{app:c}, we numerically verify that our lower bound is almost tight.} which converges to $\nicefrac{1}{\sqrt{e}}$ as $n$ increases.
The proof heavily relies on the FKG inequality (Lemma \ref{lm:FKG}) and some properties on the joint density function of two order statistics.
\begin{remark}
    In appendix \ref{sec:correlated_equi}, we also prove that using a similar but more involved technique, we can derive that there exists a PIM such that under some \emph{Bayes correlated equilibrium}, it achieves $\parans{\exp\parans{\frac{n^2}{2(n-1)^2}},0}$-approximation, \ie constant approximation ratio for sufficiently large $n$.
\end{remark}

\begin{remark}
Note that in some cases it might be complicated or even computationally intractable for each agent to compute a best-response given the others' strategies since he needs to account for every possible realization of the solutions for the others.
In this context, it would be reasonable for the agents to follow the mechanism's recommendation to propose a solution that maximizes the principal's utility, even though it may slightly decrease their utility up to some constant factors.
\end{remark}

\section{Examination cost}\label{sec:cost}
Since we allow multiple agents to propose a solution, on one hand, it helps the principal find a better solution because she can compare more solutions as observed in Section~\ref{section:no_cost} and \ref{section:no_cost_incomp}.
On the other hand, since the principal needs to select which ones to choose among the multiple candidates,
it entails an additional burden to \emph{evaluate} the candidates, compared to
the single-agent setting.

\begin{example}
Suppose that a project manager delegates the task of finding a solution for a project
to multiple teams.
Each team proposes a candidate and then
the manager needs to find the best candidate by
sequentially examining the candidates.
This procedure can be costly for the manager and even
cannot be parallelized since the manager may want to examine the candidates by herself.
\end{example}

To effectively study the trade-off between such a burden in evaluating the candidates and the principal's utility,
we introduce the following notion of the examination cost of a mechanism.

\begin{definition}[Examination]\label{def:exam}
Given a set of solutions $\omb \in \Omega^*$ and corresponding strategies $\sigma$,
suppose that mechanism $M$ can eventually choose a winner among a subset $p \subset \omb$.
The mechanism cannot directly access the subset $p$, and does not have information on their values,
but only knows from whom each candidate in $p$ is proposed.
In order to select the winner,
the mechanism should examine the solution by accessing each candidate in $p$, and this incurs an \emph{examination cost} of $1$.
We call this procedure an examination process.
\end{definition}

It is important to note that the principal may randomize the examination process.
We further denote that a mechanism has \emph{(examination) budget} $B$, if its total cost of examination does not exceed $B$ for any realized set of solutions and any corresponding strategies.

We aim to study the trade-off between the examination budget $B$ and the approximation factor of a mechanism.
Unlike the previous results in which the MSPM can examine all the candidates proposed by the agents and commit to the best one, if the examination budget is finite, the principal should selectively examine them.
In this context, we present the following variant of MSPM.
\begin{definition}[Randomized single-proposal mechanism]
In a \emph{randomized single-proposal mechanism (RSPM)} with budget $B$, a tie-breaking rule $\rho$, an eligible set $R_i \subset \Omega$ are announced by the principal for each agent $i$.
Each agent proposes up to one solution, and the principal selects $B$ candidates uniformly at random and adopts the one of them to maximize her utility.
\end{definition}
Note that RSPM with budget $B$ always examine at most $B$ candidates.
As it can be observed in the definition of the RSPM, once a candidate $\om \notin R_i$ is submitted by agent $i$, there is no chance of getting accepted by the mechanism.
This implies that there is no incentive for the agent to submit a candidate beyond the eligible set.
In this context, we impose the following mild assumption.
\begin{assumption}[Rationality]\label{as:rationality}
Given a RSPM $M$ with arbitrary tie-breaking rule $\rho$ and eligible sets $R_1,\ldots, R_n$, each agent $i$ for $i \in [n]$ does not play strategy $\sigma_i = \om$ if $\om \notin R_i$.
\end{assumption}

In Section ~\ref{section:no_cost} and \ref{section:no_cost_incomp}, we assume that the principal can examine  all candidates (at most $n$) and takes the best one as the winner, \ie the principal has a sufficient budget $B \ge n$.
To study the nontrivial effect of examination cost, we assume that the budget is \emph{not sufficient} to examine all the candidates, \ie $B < n$.
We analyze RPSM in the complete information setting by generalizing Theorem~\ref{theorem:multi-agent-unlimited-budget} in Section~\ref{section:no_cost} to the limited budget case.

\begin{theorem} \label{theorem:multi-agent-limited-budget}
Given any $\tau \ge 0$, let $R$ be an eligible set with threshold $\tau$ such that $R:= \{\omega: x(\omega) \ge \tau, \omega \in \Omega\}$ and $\rho$ be an arbitrary deterministic tie-breaking rule.
Let $X_{(n + 1)} = -\infty$.
Given $2\leq B < n$, RSPM with $\rho$ and
a~homogeneous eligible set $R$ has $\poa_a$ of at most
\begin{align*}
\Ex{X_{(1)}
-
X_{(2)}\IND\left[X_{(2)} \geq \tau > X_{(B+1)}\right] -  \tau\IND\left[X_{(1)} \geq \tau > X_{(2)}\right]
-
\sum_{e=B+1}^{n} X_{(e-B+2)}\IND\left[X_{(e)} \geq \tau > X_{(e+1)}\right]}.
\end{align*}
\end{theorem}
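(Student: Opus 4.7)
The plan is to extend the pointwise case analysis used in the proof of Theorem~\ref{theorem:multi-agent-unlimited-budget} by conditioning, for each realization of solutions, on the random variable $E$ counting the number of agents whose first-best $x$-value clears the threshold $\tau$. The four ranges $E=0$, $E=1$, $2 \le E \le B$, and $E \ge B+1$ correspond exactly to the four indicator contributions in the stated bound, and summing the resulting pointwise lower bounds on the winner's $x$-value and taking expectation over the random solution samples and over the mechanism's uniform random subset will yield the claimed upper bound on $\poa_a$.

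The easy regimes are handled by direct reduction. For $E=0$ no eligible submission exists and the winner contributes $0$, producing no subtracted term. For $E=1$, strict positivity of $y_i$ together with Assumption~\ref{as:pareto} forces the unique eligible agent to submit a solution of $x$-value at least $\tau$, matching the $-\tau\IND[X_{(1)} \ge \tau > X_{(2)}]$ term. For $2 \le E \le B$, the random $B$-subset contains every submitter with probability one, so the mechanism coincides with the unlimited-budget MSPM on the eligible agents and Theorem~\ref{theorem:multi-agent-unlimited-budget} directly delivers winner $\ge X_{(2)}$, matching $-X_{(2)}\IND[X_{(2)} \ge \tau > X_{(B+1)}]$.

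The substantive regime is $E = e \ge B+1$, where random examination first bites. Relabel the eligible agents by decreasing first-best so that agent $i$ has first-best $X_{(i)}$ for $i \in [e]$. The structural claim is that in any Nash equilibrium the expected winner $x$-value is at least $X_{(e-B+2)}$. I establish this via the contrapositive competition argument of Theorem~\ref{theorem:multi-agent-unlimited-budget}, adapted to the random examination: if the top $e-B+1$ agents were to leave strictly fewer than $e-B+1$ submissions with $x$-value above $X_{(e-B+2)}$, then agent $e-B+2$ has a strictly profitable deviation to submit his own first-best $X_{(e-B+2)}$. Indeed, conditioned on being in the examined set (which occurs with probability $B/e$), the event that all other examined agents' submissions fall strictly below $X_{(e-B+2)}$ has positive probability, since the subset pairing agent $e-B+2$ with the offending low submitter together with any $B-2$ of the ranks $e-B+3, \ldots, e$ (whose submissions are automatically bounded by their first-bests $\le X_{(e-B+2)}$) is selected with positive probability; strict positivity of $y_{e-B+2}(X_{(e-B+2)})$, plus Pareto-optimal play to resolve any deterministic ties of $\rho$ in favor of agent $e-B+2$, makes the deviation strictly beat his equilibrium payoff, a contradiction. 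Once enough top-ranked submissions lie at $x$-value $\ge X_{(e-B+2)}$, a simple pigeonhole on $B$-subsets of $e$ agents shows that every examined subset meets one of them, so the winner's $x$-value is at least $X_{(e-B+2)}$, producing the $-X_{(e-B+2)}\IND[X_{(e)} \ge \tau > X_{(e+1)}]$ summand.

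The main obstacle is turning the above competition step into a clean argument under arbitrary (possibly mixed) equilibria, because unlike the unlimited-budget case the deviating agent only realizes the benefit of his deviation with probability $B/e$, and adversarial $y_i$ values can tilt the comparison in subtle ways. The argument must carefully compare the deviation payoff against the equilibrium payoff \emph{in expectation}, rather than just pointwise, and this requires explicitly tracking the joint distribution of the co-examined agents' submissions in equilibrium, invoking strict positivity of $y_i$ and the Pareto refinement to preclude cancellations. Modulo this step, summing the four cases and taking expectations is routine, and plugging $B=n$ eliminates the sum and recovers Theorem~\ref{theorem:multi-agent-unlimited-budget} as a sanity check.
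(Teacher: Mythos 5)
Your case decomposition ($E=0$, $E=1$, $2\le E\le B$, $E\ge B+1$) and the handling of the easy regimes match the paper's structure, and the pigeonhole observation in the last case is sound \emph{if} the intermediate structural claim were established. But that claim --- that in every Nash equilibrium each of the $e-B+1$ eligible agents with the highest first-bests must submit a candidate of $x$-value at least $X_{(e-B+2)}$ --- is both stronger than what is needed and not actually proven by the deviation argument you sketch. The deviation you describe for agent $e-B+2$ only yields a contradiction when that agent's equilibrium utility is zero; if agent $e-B+2$ is already winning with positive probability, having him win in one additional examined subset is not automatically an improvement over what he currently plays (the $y$-values of the solutions he currently wins with could be larger). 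You acknowledge this obstacle at the end but leave it unresolved, and resolving it by ``tracking the joint distribution of co-examined submissions'' is not just a technicality --- it is the crux of the case.

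The paper sidesteps this entirely by arguing about the \emph{right} set of agents. It lets $A$ be the set of agents with strictly positive equilibrium utility, proves $|A|\le e-B+1$ (a counting argument: if $|A|$ were larger, the least-preferred member of $A$ could never win, since every $B$-subset would meet a dominating submitter), and then picks $t'$, the agent in $E\setminus A$ with the highest first-best. Because $t'$ has \emph{exactly zero} utility, the deviation argument is airtight: if the worst candidate in $A$'s support were below $X_{t',(1)}$, then $t'$ proposing his first-best would win in a positively-weighted $B$-subset, a contradiction. Finally, the winner is necessarily owned by an agent in $A$ (anyone who wins has positive utility), so its $x$-value is at least $x(b_t)\ge X_{t',(1)}\ge X_{(e-B+2)}$. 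This route never needs anything about ``top-ranked'' agents and avoids the pigeonhole step: the fact that the winner belongs to $A$ is automatic. You should restructure your argument around the set of positive-utility agents rather than the top-$(e-B+1)$-by-first-best agents; as written, the claim you rely on does not hold for the reasons above, and even where it happens to be true it is not established.
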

Its proof structure follows that of Theorem~\ref{theorem:multi-agent-unlimited-budget}, but is more involved due to the randomness in the mechanism.
Formal proof is deferred to Appendix~\ref{pf_theorem:multi-agent-limited-budget}.
This result depicts the trade-off between the budget and the the efficiency of a mechanism.
Indeed, the upper bound on $\poa$ presented above decreases as $B$ increases, and by plugging $B=n$ we can exactly recover the guarantee provided in Theorem~\ref{theorem:multi-agent-unlimited-budget}.
This is fairly intuitive as if the principal can evaluate more candidates, then she can commit to more efficient solutions.
Similar to Remark~\ref{rm:bm_pim_comp}, RSPM can be either BM or PIM depending on the choice of the eligible sets.
For PIM, we can directly obtain the following result.
\begin{corollary}\label{cor:limbud_notau}
    There exists a PIM having budget $B$ with $\poa_{a}$ of $\Ex{X_{(1)}} - \Ex{X_{(n-B+2)}}$.
\end{corollary}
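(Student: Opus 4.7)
The plan is to derive this corollary as a direct specialization of Theorem~\ref{theorem:multi-agent-limited-budget} with the threshold set to $\tau = 0$, mirroring the way Corollary~\ref{cor:infbud_notau} was obtained from Theorem~\ref{theorem:multi-agent-unlimited-budget}. First I would justify the PIM claim: with $\tau = 0$, the eligible set $R = \{\omega : x(\omega) \ge 0, \omega \in \Omega\}$ accepts every solution in $\Omega$, so the mechanism's specification uses no information about the underlying distributions $D_{i,j}$. Hence the RSPM with $\tau = 0$ is indeed prior-independent. It already has examination budget $B$ by construction.

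Next I would plug $\tau = 0$ into the bound from Theorem~\ref{theorem:multi-agent-limited-budget} and simplify each indicator term. Since utilities are nonnegative (the principal's utility function maps into $\R_{>0}$, and the null outcome contributes $x(\perp)=0$), we have $X_{(e)} \ge 0$ for every $e \in [n]$, and thus the event $\{0 > X_{(e+1)}\}$ fails for each $e \le n-1$ because $X_{(e+1)} \ge 0$. Under the convention $X_{(n+1)} = -\infty$, however, the event $\{0 > X_{(n+1)}\}$ holds deterministically. Therefore:
\begin{itemize}
\item The term $X_{(2)}\IND[X_{(2)} \ge 0 > X_{(B+1)}]$ vanishes because $B+1 \le n$ forces $X_{(B+1)} \ge 0$.
\item The term $\tau \IND[X_{(1)} \ge \tau > X_{(2)}] = 0$ trivially since $\tau = 0$.
\item In the sum $\sum_{e=B+1}^{n} X_{(e-B+2)}\IND[X_{(e)} \ge 0 > X_{(e+1)}]$, every summand with $e < n$ vanishes, and only the $e = n$ summand survives, contributing $X_{(n-B+2)}$ (the indicator there equals $1$ deterministically).
\end{itemize}

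Combining these simplifications, the bound from Theorem~\ref{theorem:multi-agent-limited-budget} collapses to $\Ex{X_{(1)} - X_{(n-B+2)}} = \Ex{X_{(1)}} - \Ex{X_{(n-B+2)}}$ by linearity of expectation, which is exactly the claimed $\poa_a$. Since no step relies on any subtle argument beyond the nonnegativity of $X_{(\cdot)}$ and the boundary convention $X_{(n+1)} = -\infty$, I do not anticipate a real obstacle; the only care needed is the case analysis of the indicators to verify that all but one summand in the tail sum vanish, and to confirm that setting $\tau = 0$ is compatible with prior-independence.
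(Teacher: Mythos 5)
Your proof is correct and is exactly the intended derivation: the paper introduces Corollary~\ref{cor:limbud_notau} with the phrase "For PIM, we can directly obtain the following result," treating it as the $\tau = 0$ specialization of Theorem~\ref{theorem:multi-agent-limited-budget}, just as Corollary~\ref{cor:infbud_notau} specializes Theorem~\ref{theorem:multi-agent-unlimited-budget}. Your careful case analysis of the indicators (using $X_{(e)} > 0$ for $e \le n$ and the convention $X_{(n+1)} = -\infty$ to isolate the $e = n$ summand) and the observation that $\tau = 0$ makes the eligible set all of $\Omega$, hence prior-independent, are precisely the points that need checking.
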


Importantly, quantifying the additive approximation factor in Corollary~\ref{cor:limbud_notau} is intrinsically more challenging than the previous result
without the examination cost,
since we need to deal with the expected difference of two order statistics which are not consecutive.
One straightforward approach would be to simply sum up the bound for two consecutive ones, but we found this bound to be cumbersome and couldn't find any take-away.
Analyzing or simplifying this gap for arbitrary budget $B$ and deriving analogous results as in Section~\ref{section:no_cost} remains a major open problem.

Although RSPM has the benefit of its simplicity, however, it is not a clever choice as it examines the candidates in uniformly at random. A better approach might be to exploit the prior information on the distributions regarding the quality of candidates, which might be an interesting topic.

\section{Conclusion}
In this paper, we present a thorough study on efficient delegation mechanisms against multiple agents, spanning the analysis of Bayesian/prior-independent mechanisms in complete/incomplete information settings 
%
%
We mainly reveal that the competitive tension arising from the multiple agents enables us to obtain efficient prior-independent mechanisms, in stark contrast to the Bayesian single-agent mechanism by \citet{kleinberg2018delegated}.
Furthermore, the gain from competition largely depends on the information available to the agents and the correlation between the agent's and the principal's utility, \eg it significantly degrades in the incomplete information setting with strongly negatively correlated utility.
Closing the gap between the approximation factors in various regimes and analyzing exact Bayes Nash equilibriums along with its efficiency in the incomplete information setting remains as major open problems.
In perspective of model, exploring the impact of examination costs more in details, endogenizing agents' costs to search/choose solutions, and studying the strategic tension arise from a repeated interaction would be interesting directions as well.

\begin{acks}
The work is partially supported by DARPA QuICC, NSF AF:Small \#2218678, and NSF AF:Small \# 2114269. 
\end{acks}

\bibliographystyle{ACM-Reference-Format}
\bibliography{ref}
\clearpage

\appendix

\section{Numerical Experiments on the Tightness of Theorem \ref{thm:bayes_oblivious}}\label{app:c}
We here numerically verify that the lower bound on $\Pr{E}$ obtained in the proof of Theorem~\ref{thm:bayes_oblivious}, \ie $\exp\parans{-\frac{n^2}{2(n-1)^2}}$, is almost tight.
Figure~\ref{fig:num-exp-pim} represents $\Pr{E}$ with respect to the value of $n$ for some choices of $k$.
To obtain the true value of probability of the event $E$, we simulate the principal and agents and compute the frequency of the event that each agent's utility is aligned, \ie proposing a solution that maximizes the principal's utility also maximizes the agent's utility, for every agent.
To present some details in the experiment, we consider $n = \{5, 10, 20, 50, 100, 200, 300, 400\}$ and $k = \{10, 20, 100, 150, 1000\}$.
For each pair of $n$ and $k$, we run $30000$ times to compute the probability that the event $E$ happens.
We repeat this procedure $60$ times to obtain the variance for each setting, which is plotted as the shade in the figure.
The main take-away here is that the lower bound and the average value of the quantity $\Pr{E}$ is indeed becoming tight as $n$ increases.
As a side note, one can observe that as $n$ increases the precision error appears in each experiment (shade) becomes wider.
This is because large $n$ introduces more randomness in each experiment which requires more samples to stabilize its average, and thus we believe it not to be a meaningful phenomenon.
\begin{figure}[tbp]
\centering
    \hspace{-6mm}
    \includegraphics[width=12cm]{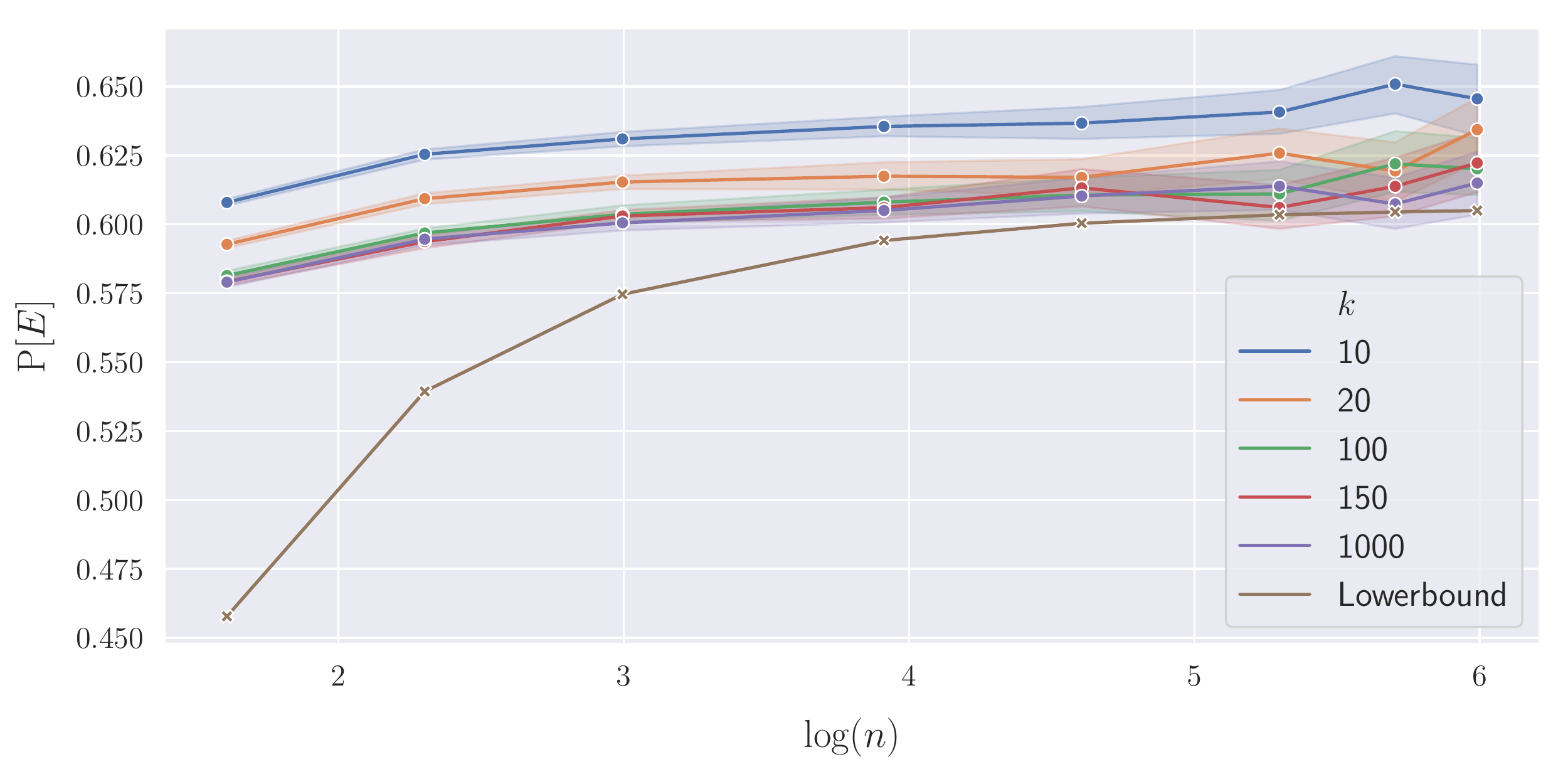} 
    \caption{
    Probability of the event $E$ defined in~\eqref{ineq:01231307}.
    Lowerbound refers to the lower bound of the event $E$ that we obtained in \eqref{ineq:01062049}, and $n$ and $k$ denotes the number of agents and the number of solutions per agent. We note that $x$-axis is presented in the base of the natural number $e$.
    }
    \label{fig:num-exp-pim}
\end{figure}
  
\section{Further discussion on feasible mechanisms}\label{sec:fur_fea_mec}
Another reasonable choice of mechanism will be to require each agent submitting certain number, say two, of solutions, \ie multi-proposal mechanism.
    Under this mechanism, each agent might select a pair of aggressive solution with large ex-post utility but smaller chance to be selected, and conservative solution with small ex-post utility but larger probability of winning, in order to balance these features of two solutions.
    More abstractly, one may consider a mechanism that requires each agent to signal some statistics of the observed solutions, \eg sample average or the number of eligible solutions, upon the solutions themselves.
    As astute readers may have noticed, in the complete information setting, however, the revelation principle style of result allows us to reduce any mechanism to single-proposal mechanism, so such generalization does not help. This is formalized in Theorem~\ref{thm:multiagent_revel}.
    Note that this reduction does not preclude any randomized mechanism.

\section{Bayes Correlated equilibrium}\label{sec:correlated_equi}
In this section, we discuss that the mechanism presented in Theorem~\ref{thm:bayes_oblivious} also has an efficient Bayes correlated equilibrium.
To be more specific, we are interested in the notion of Bayes correlated equilibrium \cite{bergemann2016bayes}.
Suppose that there exists a mediator who is able to access the realization of all the solutions observed by the agents.
Given the realization of all the solutions $\omb = (\omb_i)_{i \in [n]}$ observed by the agents, the mediator recommends a strategy $\psi_i: \Om^* \mapsto \Sigma_i$ for each agent $i \in [n]$.
It is important to note that the mediator recommends the strategy based on observing all the agents' solutions.
Given the recommended strategy $\psi_i$ and observed solution $\omb_i$, agent $i$ plays a strategy $\sigma_i(\omb_i, \psi_i): \Omega^* \times \Sigma_i \mapsto \Omega_i$.
The procedure of recommending strategies is specified as a part of the mechanism $M$, and we denote by $f_{M, \sigma, \psi}(\omb)$ the interim allocation function, \ie the selected winner, of the mechanism $M$ given $\psi(\omb)$ and $\sigma(\omb, \psi).$

Then, given $\psi_i(\omb)$ and $\omb_i$, agent $i$'s expected utility of playing $\sigma_i$ is
\begin{align*}
	u_i(\sigma_i(\psi_i, \omb_i), \sigma_{-i})
	=
	\Ex{\int_{\omb_{-i}} y_i(f_{M,(\sigma_i(\psi_i(\omb), \omb_i), \sigma_{-i})})f_{-i}(\omb_{-i})d\omb_{-i}},
\end{align*}
and agent $i$'s total expected utility of playing $\sigma_i$ is 
\begin{align*}
	u_i (\sigma_i, \sigma_{-i}) = \Exu{\omb_i}{u_i(\sigma_i(\psi_i, \omb_i), \sigma_{-i})}.
\end{align*}
We now define a Bayes correlated equilibrium as follows.
\begin{definition}[Bayes correlated equilibrium]
	We say that $\psi_i$ induces a \emph{Bayes correlated equilibrium} (BCE) if for any $i \in [n]$,
	\begin{align*}
		u_i(\psi_i, \sigma_{-i}) \ge u_i(\sigma'_i, \sigma_{-i}),	
	\end{align*}
	for any other strategy $\sigma'_i$, \ie there exists no incentive to deviate from the mediator's recommendation.
\end{definition}
Compared to BNE, due to the existence of the additional signals provided by the mediator who can observe all the solutions, both the principal and agents may enjoy the increased utility.
In the following theorem, we prove that there exists a PIM such that it achieves asymptotically constant multiplicative approximation factor for some BCEs.
We provide the proof in Appendix~\ref{pf_thm:bce}
\begin{theorem}\label{thm:bce}
	Consider a symmetric and independent setting such that the distributions of the both principal's and agent's utility follow $U[0,1]$.
	Then, there exists an $\parans{\exp\parans{\frac{n^2}{2(n-1)^2}},0}$-approximate PIM and corresponding BCE.
\end{theorem}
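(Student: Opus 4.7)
The plan is to re-use the analysis machinery of Theorem~\ref{thm:bayes_oblivious} and upgrade its $\eps$-approximate BNE into an exact BCE by giving the mediator the additional coordination power of observing $\omb$ in full.  I would work with the same PIM: an MSPM in the sense of Definition~\ref{def:mspm} with the homogeneous eligible set $R_i=\Omega$ (no threshold) and some fixed deterministic tie-breaking rule, so that each agent can always propose something.  The only change relative to Theorem~\ref{thm:bayes_oblivious} is that the mediator will issue a correlated recommendation $\psi=(\psi_i)_{i\in[n]}$ conditional on the realized $\omb$, instead of asking each agent to act in isolation.

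Let $J_i^\star=\arg\max_j X_{i,j}$ denote the index of agent $i$'s best-for-principal solution.  Let $E$ be the event identified in the proof of Theorem~\ref{thm:bayes_oblivious}: the event on which, against the symmetric strategy profile ``propose $\om_{i,J_i^\star}$'', playing $\om_{i,J_i^\star}$ is an exact expected-utility maximizer for every agent $i$.  That proof establishes the bound $\Pr{E}\ge \exp\!\left(-n^2/(2(n-1)^2)\right)$.  My mediator will recommend $\psi_i(\omb)=\om_{i,J_i^\star}$ on the event $E$, and on $E^c$ recommend a fallback that is an agent best response (for instance, the symmetric best response induced by the agents' own expected-utility maximization restricted to $E^c$).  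To verify that $\psi$ is a BCE, on $E$ one appeals directly to the defining property of $E$, and on $E^c$ one uses that the fallback is by construction a best response.  Once the BCE is in place, $f_{M,\sigma,\psi}(\omb)=\om_{i,J_i^\star}$ with $X_{i,J_i^\star}=X_{\max}$ on $E$, so $\Ex{f_{M,\sigma,\psi}(\omb)}\ge \Ex{X_{\max}\IND[E]}$.

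To convert the last display into the multiplicative guarantee, I would exploit the independent utility assumption together with the i.i.d.\ $U[0,1]$ structure.  The event $E$ is determined by comparisons among the $Y_{i,j}$'s (given the within-agent $X$-rankings) and among the across-agent best $X$-values, and under the product structure of $D=D_x\times D_y$ with $D_x,D_y=U[0,1]$ it is easy to argue by symmetry that $E$ is independent of the numerical value of $X_{\max}$; alternatively, one can invoke Lemma~\ref{lm:FKG} to obtain $\Ex{X_{\max}\IND[E]}\ge \Pr{E}\cdot\Ex{X_{\max}}$.  Combining with the probability bound yields $\Ex{f_{M,\sigma,\psi}(\omb)}\ge \exp\!\left(-n^2/(2(n-1)^2)\right)\Ex{X_{\max}}$, which is exactly the claimed $(\exp(n^2/(2(n-1)^2)),0)$-approximation.

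The main obstacle is the information leakage inherent to any BCE: the recommendation $\psi_i$ that an agent receives itself reveals information about whether the realized $\omb$ lies in $E$ or $E^c$, and the agent's posterior must be incorporated into her best-response check.  The subtle work is to design the fallback on $E^c$ so that, even after the Bayesian update that ``I was recommended $\psi_i=\om_{i,J_i^\star}$ means $E$ is more likely to hold,'' every deviation is still dominated.  A clean way to do this is to let the mediator, on $E^c$, also play the symmetric ``propose $\om_{i,J_i^\star}$'' recommendation but verify that any ex-interim deviation gain is nonpositive using the same algebraic identities about joint order statistics and the FKG inequality that power Theorem~\ref{thm:bayes_oblivious}; if that direct route fails, a fixed-point argument analogous to the BNE construction of Theorem~\ref{thm:bayes_oblivious}, applied conditional on $E^c$, would provide a valid fallback and complete the BCE verification.
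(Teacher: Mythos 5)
Your route is essentially the paper's: same PIM (an MSPM with the all-of-$\Omega$ eligible set and a deterministic tie-break), same event $E$ inherited from Theorem~\ref{thm:bayes_oblivious}, same ``recommend the $x$-maximizing solution on $E$'' idea, and the same eventual use of the FKG-type decoupling plus the bound $\Pr{E}\ge \exp(-n^2/(2(n-1)^2))$ from \eqref{ineq:01062049}. The one place you leave a genuine hole is exactly the one you flag yourself: the fallback recommendation on $E^c$. ``The symmetric best response induced by the agents' own expected-utility maximization restricted to $E^c$'' is not a defined object, and deferring to ``a fixed-point argument'' does not close the obedience check. The paper's resolution is concrete and you should adopt it: invoke Nash's existence theorem to fix some BNE $\sigma^o$ of the unmediated game, let $\psi_i(\omb)=\sigma_i^x(\omb_i)$ on $E$ and $\psi_i(\omb)=\sigma_i^o(\omb_i)$ on $E^c$, and verify obedience by the decomposition $u_i(\psi_i,\psi_{-i})=u_i(\sigma_i^x,\sigma_{-i}^x\mid E)\Pr{E}+u_i(\sigma_i^o,\sigma_{-i}^o\mid\neg E)\Pr{\neg E}$, dominating the first term by the defining property of $E$ and the second by the equilibrium property of $\sigma^o$. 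This also resolves your ``information leakage'' worry the right way: you check obedience conditional on the mediator's private partition $\{E,\neg E\}$, which is strictly finer than what agent $i$ can infer from $(\omb_i,\psi_i)$, so the unconditional obedience inequality follows by averaging.

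On the multiplicative step, your ``direct route'' is in fact cleaner than the paper's for $U[0,1]$. Since $\Pr{E\mid x(\omb)}=\prod_{i}\prod_{j\ge 2}\bigl(1-\tfrac{1}{2}(X_{i,(j)}/X_{i,(1)})^{k(n-1)}\bigr)$ depends on $x(\omb)$ only through within-agent ratios, and for uniforms those ratios are independent of the tops $X_{i,(1)}$ and hence of $X_{\max}$, you get $\Ex{X_{\max}\IND[E]}=\Pr{E}\,\Ex{X_{\max}}$ with equality — no FKG needed. The paper instead proves monotonicity of $t\mapsto \Ex{\prod_{i,j}(1-A_{i,j}/2)\mid X_{\max}=t}$ and applies Lemma~\ref{lm:FKG} to get the one-sided inequality; that argument is not tied to the uniform's scaling structure, which is what it buys relative to your shortcut. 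Either way the claimed $\bigl(\exp(n^2/(2(n-1)^2)),0\bigr)$-approximation follows.
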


\section{Bulow-Klemperer-style result for prior-independent mechanism}\label{sec:BKS}
In constructing efficient PIM, one standard way would be to construct PIM from BM by applying the Bulow-Klemperer-style result by \citet{bulow1994auctions}, which considers a fictitious BM with $n+1$ agents to find a good upper bound on the efficiency of PIM with $n$ agents.
More precisely, they first show the existence of PIM (auction in their case) which always allocates the item to some bidders and then prove that such PIM is optimal among the mechanism that always allocates the item.
Then, for any Bayesian mechanism $M$ with an arbitrary reserve price, one can consider a fictitious auction with $n+1$ agents such that it runs $M$ with the first $n$ agents, and if it does not allocate the item, then it allocates the item to $n+1$-th agent.
Given the observations above, one can conclude that the PIM with $n$ agents defined above outperforms arbitrary BM with $n+1$ agents, and thus we can derive the approximation factor of PIM based on BM.
This technique, however, does not work in our case as we cannot easily argue that MSPM that admits all the solutions induces the optimal utility for any mechanism that always adopts the winner.
This is because we can consider a PIM that always admits all the solutions for some agents, but enforce some nontrivial (which does not always accept any solution) eligible sets for other agents. 
Indeed, this mechanism may outperform MSPM that admits all solutions, since it may motivate the other agents with nontrivial eligible sets to propose more in favor of the principal, while maintaining the mechanism to always delegate to some agents.
\section{Comparing multi-agent and single-agent delegation}\label{sec:compare_mul_sin}
To give a brief intuition in how one can reasonably compares the multi-agent and single-agent setting in terms of the principal, we start with an example.
\begin{example}\label{ex:cs}
    Consider a task requester in the online labor market, \eg crowdsourcing platform.
    The task requester wants to solve a task by finding a single best solution for the task, and she is capable of recruiting five online workers.
    Due to the intrinsic volatility of the online platform, each worker is equipped with a similar level of skill, and is expected to exert a similar level of effort per some unit of payment.
    As such, each worker has the same \emph{certainty equivalent} in coming up with a single solution, saying $\$10$.
    Because of the anonymity of the online labor market, each worker does not exactly obey the instruction of the task therein to answer the task, but instead propose a solution of his own interest, \ie misaligned payoffs.
    The task requester has a budget of $\$100$, and faces two options of (i) delegating to a single agent and (ii) delegating to five agents.
    The first option is to solely depend on a single worker, and the worker will sample $10$ solutions and commit to one of them.
    By exploiting the second option, five workers will observe two solutions each and commit to one of them.
    Given the assumptions above, it is likely that the expected quality of a single solution will be similar, i.e. workers are symmetric.
    Which option should the principal take?
    As observed in our previous results, delegating to multiple workers is expected to be generally better, but to any extent?
\end{example}
In what follows, we answer this question in  quantitative manner.
More formally, we provide comparative statics given the reasonable comparison between the single-agent and multi-agent setting, wherein our derivation of reasonable comparison follows from the intuition described in the example above.
Note that we obtain the result of similar flavor in Theorem~\ref{thm:recover}, but it only accommodates a reduction from a multi-agent instance to single-agent instance entirely to provide an approximation ratio of the multi-agent setting.
In this section however, we cement this result by showing that given any single-agent instance with arbitrary mechanism, we can construct a reasonable multi-agent instance such that there exists a multi-agent mechanism with principal's utility at least that under the original single-agent instance.
This result implies given any single-agent delegation problem instance, the principal can instead delegate to a reasonable set of multiple agents and obtain a superior utility. 
To this end, we first introduce how we construct a multi-agent instance from a single-agent instance.
\begin{definition}[Multi-agent correspondence]
Consider a single-agent problem with a distribution $D$, a number of solutions $k$, the agent's utility function $y$, and the principal's utility function $x$.
Consider a multi-agent problem with $n \ge 2$ agents in which the principal has the same utility function $x$ and each agent $i$ has a distribution $D_{i,j} = D$ for $j \in [k_i]$, a number of solutions $k_i$ such that $\sum_{i=1}^nk_i = k$, and the utility function $y_i = y$. We denote this multi-agent problem by multi-agent correspondence of the single-agent problem.
\end{definition}
The above construction of multi-agent instance is reasonable.
This is because to see whether delegating to the multi-agent indeed brings advantages, it is fair to compare under the assumption that the agents exert the same total amount of effort in the task in both the single-agent and the multi-agent problem.
Since the number of solutions can be translated as the amount of effort exerted in the task, our definition of the multi-agent correspondence would play a reasonable role in comparing the single-agent and the multi-agent problem.
We present the proof in Appendix~\ref{pf_theorem:multi-agent-correspondence}

\begin{theorem}\label{theorem:multi-agent-correspondence}
Given a single-agent problem $P$, denote its multi-agent correspondence $P'$.
For any mechanism $M$ under $P$, there exists an MSPM under $P'$ such that the principal's utility of the MSPM under $P'$ is at least that of  $M$ under $P$.
\end{theorem}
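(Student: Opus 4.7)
The plan is to invoke the reduction of Theorem~\ref{thm:multiagent_revel} (specialized to the single-agent case, where it recovers the standard revelation-style result of \citet{kleinberg2018delegated}) to assume without loss of generality that the given mechanism $M$ is a single-proposal mechanism with some eligible set $R \subset \Omega$, and that in equilibrium the agent proposes the solution in $\omb \cap R$ that maximizes $y(\cdot)$ (with ties broken by $x$ under Assumption~\ref{as:pareto}), or proposes $\perp$ if $\omb \cap R = \emptyset$. Denote this solution by $\omega^\star(\omb)$, and note that the principal's expected utility under $M$ equals $\Ex{x(\omega^\star(\omb))}$.

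I would then build the MSPM $M'$ under the multi-agent correspondence $P'$ by taking the homogeneous eligible set $R_i = R$ for every $i\in[n]$, with a deterministic tie-breaking rule among agents that is consistent with Assumption~\ref{as:pareto} at the agent level. Under this MSPM, Assumption~\ref{as:pareto} implies that in any equilibrium each agent $i$ proposes the solution in $\omb_i \cap R$ that maximizes $y(\cdot)$, and $\perp$ if $\omb_i \cap R = \emptyset$; denote agent $i$'s proposal by $\omega_i^\star(\omb_i)$. The principal then selects $\hat\omega(\omb) \in \argmax_{i}\, x(\omega_i^\star(\omb_i))$, yielding expected principal utility $\Ex{x(\hat\omega(\omb))}$.

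The key coupling is now immediate: since the $k$ solutions in $P$ are i.i.d.\ from $D$ and the $\sum_i k_i = k$ solutions in $P'$ are likewise i.i.d.\ from $D$, we may couple the two problems so that the multiset $\omb$ of observed solutions in $P$ coincides with $\cup_i \omb_i$ in $P'$. Under this coupling, $\omega^\star(\omb)$ lies in $\omb_{i^\star} \cap R$ for the unique agent $i^\star$ who ``owns'' it in $P'$, and since $\omega^\star(\omb)$ is the $y$-maximizer of $\omb \cap R \supseteq \omb_{i^\star} \cap R$, it is also the $y$-maximizer of $\omb_{i^\star}\cap R$; hence $\omega^\star_{i^\star}(\omb_{i^\star}) = \omega^\star(\omb)$ (modulo ties, handled by Pareto-optimal play, which ensures the alternative tied solution has no smaller $x$-value). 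Consequently $\omega^\star(\omb)$ is among the proposals the principal observes in $M'$, so $x(\hat\omega(\omb)) \ge x(\omega^\star(\omb))$ pointwise, and taking expectations yields the claim.

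The main obstacle, and the part requiring care, is the equilibrium selection in $M'$: even though the ``honest'' strategy profile $\sigma_i(\omb_i) = \omega^\star_i(\omb_i)$ is a natural Nash equilibrium under the chosen deterministic tie-breaking (no agent has a profitable unilateral deviation because proposing any other eligible solution weakly decreases his own $y$-utility, and proposing an ineligible one yields zero), one must verify that this is consistent with the requirements of an MSPM equilibrium in the sense used throughout the paper. Once this equilibrium is exhibited, the coupling argument above gives the pointwise (and hence in-expectation) dominance, completing the proof; randomization in $M$ is handled by conditioning on the random bits and invoking the same argument fiberwise.
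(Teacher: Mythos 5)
Your coupling of the realizations in $P$ with those in $P'$ and your choice of the homogeneous eligible set $R_i = R$ both match the paper's argument, but the equilibrium analysis has a genuine gap. You assert that "in any equilibrium each agent $i$ proposes the solution in $\omb_i\cap R$ that maximizes $y(\cdot)$," and later that the "honest" profile $\sigma_i(\omb_i)=\omega_i^\star(\omb_i)$ is a Nash equilibrium because "proposing any other eligible solution weakly decreases his own $y$-utility." Both claims are false: an agent's payoff is $y(\omega)\cdot\IND[\omega\text{ wins}]$, not $y(\omega)$. Consider agent $1$ with eligible solutions $(x,y)=(1,1)$ and $(2,0.5)$, and agent $2$ with a single eligible solution $(1.5,1)$. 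Under the "honest" profile agent $1$ proposes $(1,1)$, agent $2$ proposes $(1.5,1)$, agent $2$ wins, and agent $1$ gets $0$; deviating to $(2,0.5)$ makes agent $1$ the winner with utility $0.5$. So the honest profile is not an equilibrium, Assumption~\ref{as:pareto} (which only rules out \emph{dominated} solutions, and $(2,0.5)$ is not dominated by $(1,1)$) does not pin down the $y$-maximizer, and your pointwise dominance is established only under a profile that agents would not play.

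The paper's proof avoids this by making no claim about which equilibrium is played: it fixes an \emph{arbitrary} equilibrium $\sigma'$ of the MSPM, lets $\omega_2^\star$ denote its winner, assumes for contradiction $x(\omega_2^\star)<x(\omega_1^\star)$ (where $\omega_1^\star$ is the SPM winner in $P$), and case-splits on whether $\omega_1^\star$ and $\omega_2^\star$ belong to the same or different agents. In the same-agent case, that agent could switch to $\omega_1^\star$, which by definition has weakly higher $y$ and strictly higher $x$, contradicting either equilibrium play or Pareto-optimal play. In the different-agent case, the agent $i^\star$ owning $\omega_1^\star$ could propose $\omega_1^\star$ and win (since it dominates the current winner), so equilibrium forces $x(\sigma'_{i^\star})\ge x(\omega_1^\star)$, and hence $x(\omega_2^\star)\ge x(\sigma'_{i^\star})\ge x(\omega_1^\star)$. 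To repair your argument you would need to replace the specific honest-profile construction with an argument over all equilibria of the MSPM, as the paper does.
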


The main take-away is that the principal can actually increase (at least not decrease) her utility by recruiting more agents with similar utilities to delegate the task, although this does not quantify whether the multi-agent delegation largely improves upon the single-agent setting.
\begin{remark}
    As we discussed in Section~\ref{sec:cost}, delegating to multiple agents induce a trade-off in examining the candidates in terms of the principal's resource, \eg time.
    Meanwhile, delegating to multiple agents already has its advantage in time consumption in the sense that it would take much smaller time for the entire process to be finished in the multi-agent delegation as each agent's task solving process is independent from each other and thus can be parallelized.
    We discuss more in the following example.
\end{remark}
\begin{example}
    Back to Example~\ref{ex:cs}, it is indeed the case the task requester can simultaneously request five workers to solve the task and save her time.
    Assuming that obtaining a single solution takes around five minutes.
    Roughly speaking, single-agent delegation with $10$ solutions requires $50$ minutes, whereas delegating to five agents with two solutions would take $10$ minutes.
    Afterward, the principal will examine the task.
    Suppose that the principal requires $t$ minutes to examine a single solution.
    In the single-agent scenario, overall time consumption is $50+t$.
    Note that $t$ is added back as the principal indeed has to examine the solution even though it is only one as the use of Bayesian mechanism (and thus decide whether or not to accept it) is essential in the single-agent setting.
    Meanwhile, the five agents scenario require $10+5t$.
    For any $t \le 10 = 5\cdot 2$, \ie when the principal can examine the solutions faster than twice of the time that each agent needs to find one solution, delegating to five agents still carries its own benefit even under the total examination.
    This will be even more true when the production function of the worker is concave, \ie, when they need more time to find a solution as they've already found some solutions.
    In short, multi-agent delegation may have both time and efficiency gains to some extents. 
\end{example}



\section{Proofs}\label{sec:rem_proofs}
\subsection{Proof of Theorem \ref{thm:multiagent_revel}}\label{pf_thm:multiagent_revel}
\begin{proof}
Let $R_i$ be the set of all the solutions belonging to agent $i$ that can be selected by the mechanism $M$, over any feasible $\omb \in \Omega^*$ and corresponding equilibrium strategies, and any random bits of the mechanism.
Given a set of observed solutions by the agents $\omb = \cup_{i \in [n]} \omb_i \in \Omega^*$ and corresponding equilibrium strategy $\sigma$, consider any  $\omega_M \in F_{M,\sigma}(\omb)$.

Suppose that $\om_M$ belongs to agent $i$.
Obviously, $\om_M \in R_i$ holds due to our construction.
Furthermore, in terms of agent $j\neq i$, there exists no solution $\om' \in \omb_j$ such that $x(\om) > x(\om_M)$ since if so, agent $j$ will be selected in $M$ by playing $\om$, \ie for any $\om' \in \omb \setminus \omb_i$,
\begin{align}
x(\om_M) \ge x(\om').	\label{ineq:12051649}
\end{align}
Furthermore, if there exists another $\om \in \omb_i$ such that $x(\om) \ge x(\om')$ for all $\om' \in \omb \setminus \omb_i$ and $y_i(\om) > y_i(\om_M)$, then agent $i$'s utility becomes larger by playing $\om$, which contradicts the assumption that $\sigma_i = \om_M$ is an equilibrium strategy.
Hence, we conclude that whenever such $\om$ exists,
\begin{align}
y_i(\om_M) \ge y_i(\om). \label{ineq:12051650}
\end{align}

Consider a MSPM with an arbitrary tie-breaking rule and eligible sets $R_i$ for each agent $i$.
Let this mechanism $M'$.
Let $\sigma'$ be arbitrary equilibrium strategies under $M'$.
Let $\om_{M'}$ be the winner of $M'$.
We now consider two cases (i) when $\om_{M'} \in \omb_i$ and (ii) $\om_{M'} \in \omb_j$ for some $j \neq i$.

Firstly, assume that $\om_{M'}$ belongs to agent $i$.
This means that $\sigma'_i = \om_{M'}$.
Since $\sigma'_i$ is an equilibrium strategy, there exists no solution $\om' \in \omb \setminus \omb_i$ such that $x(\om')>x(\sigma'_i)$ since if so, the agent who owns $\om$ will be selected in $M'$ by playing $\om'$ which contradicts that $\sigma'_i$ is an equilibrium strategy.
Hence, we have $x(\sigma'_i) \ge x(\om')$ for any $\om' \in \omb \setminus \omb_i$.
By~\eqref{ineq:12051650}, this implies that $y_i(\om_M) \ge y_i(\sigma'_i) = y_i(\om_{M'})$.
By our assumption of Pareto optimal play, we further have
\begin{align*}
	x(\sigma'_i) = x(\om_{M'})  \ge x(\om_M).
\end{align*}

Otherwise, suppose that $\om_{M'}$ belongs to agent $j \neq i$.
In this case, suppose that $x(\om_{M'}) < x(\om_{M})$.
Then, agent $i$ is strictly better off by playing $\om_M$.
Hence, this is not equilibrium, and it completes the proof.
\end{proof}

\subsection{Proof of Theorem \ref{theorem:multi-agent-unlimited-budget}}\label{pf_theorem:multi-agent-unlimited-budget}

\begin{proof}
We mainly prove that
the expected payoff of randomized single-proposal mechanism $M$
is at least
$$\Ex{X_{(2)}\IND[X_{(2)} \ge \tau]} + \Ex{\tau\IND\left[X_{(1)} \geq \tau > X_{(2)}\right]}.$$

We find a lower bound on the principal's utility in an arbitrary Nash equilibrium. 
Let $\omb = \cup_{i \in [n]}\omb_i$ be the entire set of solutions observed by the agents and $\sigma = (\sigma_1,\ldots, \sigma_n)$ be arbitrary Nash equilibrium.
We separately consider two cases (i) for any $\om \in \omb$, $\om \notin R$ and (ii) there exists $\om \in \omb$ such that $\om \in R$.

{\em Case (i).} 
In this case, the utility of all agents is $0$,
and there does not exist a winner. 
Recall that for any Nash equilibrium,
there exists no incentive for any agent to deviate from equilibrium strategies.
We claim that $X_{(1)} < \tau$, where it concludes that the principal's utility is simply zero in this case.
Suppose not, \ie $X_{(1)} \geq \tau$. 
Denote by $\om^*$ one of the candidates which
corresponds to the utility of $X_{(1)}$, and observed by agent $t$.
If agent $t$ proposes $\om^*$, his utility strictly increases from $0$ to $y(\om^*)$,
which contradicts that the agents have played equilibrium strategies.
Hence we conclude that $X_{(1)} < \tau$, and the principal's utility is $0$.

{\em Case (ii).} 
We assume that agents play mixed strategies, \ie
they decide on a probability distribution over the observed solutions.
Formally, let $p_{i, j}$ be the probability that agent $i$ proposes $\omega_{i, j}$.
These probability distributions form a mixed Nash equilibrium.
As agents play mixed strategies, for each $i \in \left[n\right]$ we have $\sum_{j=1}^{k_i} p_{i, j} = 1$.
Define $S_i = \{\omega_{i, j}: p_{i, j} > 0\}$.
For each agent $i$, let $b_i$ be his worst candidate in terms of principal utility, \ie
$b_i = \argmin_{\omega \in S_i}x(\omega)$.
If there are multiple candidates with the worst utility, arbitrarily choose one.
We now proceed with the following claim.
\begin{claim}
    Denote the set of agents who have non-zero utility by $A$, then $|A| = 1$.
\end{claim}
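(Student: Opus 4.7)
The plan is to prove $|A|=1$ by establishing the two inequalities $|A|\ge 1$ and $|A|\le 1$ separately.

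For $|A|\ge 1$, I would argue by contradiction. Suppose no agent had positive utility, so no agent's proposal wins in any realization of the mixed strategy profile. This forces $S_i\cap R=\emptyset$ for every $i$, because otherwise the mechanism would accept the highest-$x$ proposal in $R$ and the corresponding agent would realize $y_i>0$. But Case (ii) gives some $\omega^*\in\omb\cap R$ owned by an agent $i^*$, who can unilaterally deviate to the pure strategy $\sigma_{i^*}=\omega^*$; this deviation would be the unique eligible proposal and win outright, yielding positive utility and contradicting the equilibrium assumption.

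For $|A|\le 1$, I would suppose for contradiction that distinct $i,j\in A$ exist. By the standard mixed-NE indifference property, every pure strategy in agent $i$'s support yields the same expected utility as his mixed strategy, which is positive since $i\in A$ and $y_i>0$; applied to $b_i\in S_i$ this forces $\Pr{b_i \text{ wins}}>0$, and symmetrically $\Pr{b_j \text{ wins}}>0$. For $b_i$ to win in some realization, it must beat $\sigma_j\in S_j$ under $\rho$, so $x(b_i)\ge x(\sigma_j)\ge x(b_j)$; by symmetry $x(b_j)\ge x(b_i)$, hence $x(b_i)=x(b_j)=:\alpha$. Consequently $\sigma_j$ lies in $M_j:=\{\omega\in S_j:x(\omega)=\alpha\}$ and the tie between $b_i$ and $\sigma_j$ is resolved in favor of $i$; symmetrically, some $\tilde\sigma_i\in M_i:=\{\omega\in S_i:x(\omega)=\alpha\}$ ties to $b_j$ in favor of $j$.

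To conclude, I would exploit the positive-probability realization $(\sigma_i,\sigma_j)=(b_i,b_j)$: the deterministic rule $\rho$ picks exactly one of $\{b_i,b_j\}$ as winner in this realization, preventing one of them from winning. Combined with the freedom to re-choose $b_i$ or $b_j$ among argmin representatives in $M_i,M_j$, this produces contradictory tie-breaking outcomes on the same pair of solutions under $\rho$. The main obstacle will be handling this final step rigorously when $M_i$ or $M_j$ is not a singleton, since $\rho$ may treat different argmin representatives inconsistently; the cleanest resolution leverages the Pareto-optimal play assumption (Assumption~\ref{as:pareto}), which forces every $\omega\in M_k$ to share the same $y_k$-value so that the indifference condition reduces to equating winning probabilities, yielding a consistent dominance direction inside $M_i\cup M_j$ under $\rho$ that is incompatible with both $b_i$ and $b_j$ winning with positive probability.
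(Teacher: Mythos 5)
Your argument is correct in its essentials but takes a genuinely different route from the paper's. The paper uses an extremal argument: it selects the agent $t\in A$ whose worst supported candidate $b_t$ has the smallest $x$-value and, in case of ties, is least preferred by $\rho$; it then shows directly that $b_t$ is dominated by every candidate $\omega_{i,j}\in S_i$ for every $i\in A\setminus\{t\}$, so $b_t$ never wins, contradicting $t\in A$. You instead take two arbitrary $i,j\in A$, invoke the mixed-NE indifference property to conclude $\Pr{b_i\text{ wins}}>0$ and $\Pr{b_j\text{ wins}}>0$, squeeze $x(b_i)=x(b_j)$ from the fact that a winner must $\rho$-dominate the other's realized proposal (which has $x$-value $\geq$ the owner's minimum), and then observe that the resulting two tie-breaks go in opposite directions. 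Both proofs rest on the same implicit modeling assumption—used equally by the paper when it speaks of an agent being ``least preferred according to the tie-breaking rule''—that the deterministic $\rho$ induces a fixed priority order over agents rather than over (agent, solution) pairs. Under this reading your contradiction is already immediate: the tie between $b_i$ and $\sigma_j$ forces $\rho$ to rank $i$ above $j$, while the tie between $\tilde\sigma_i$ and $b_j$ forces $\rho$ to rank $j$ above $i$. The final paragraph of your proposal, worrying about non-singleton $M_i,M_j$ and attempting a patch via the Pareto-optimal play assumption, is therefore unnecessary and also not carried through rigorously—``yielding a consistent dominance direction'' is asserted rather than derived—so I would cut it and instead state explicitly that $\rho$ is a priority ordering over agents, matching the paper's usage. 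On the plus side, you explicitly establish $|A|\geq 1$ (via a unilateral deviation to the eligible $\omega^*$ in Case (ii)), a step the paper leaves implicit; and your use of the indifference property to get $\Pr{b_i\text{ wins}}>0$ is the same key ingredient the paper invokes when it argues $b_t$ must give nonzero utility, so the two proofs share that primitive even though they deploy it differently.
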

\begin{proof}[Proof of the claim]
    Suppose not.
Let $t$ be an agent $t \in A$ such that for each agent $i \in A-\{t\}$, either
\begin{itemize}
	\item $x(b_t) < x(b_i)$
	\item $x(b_t) = x(b_i)$, and principal prefers agent $i$ over agent $t$ using tie-breaking rule $\rho$,
\end{itemize}
\ie $t$ is the agent with the worst $b_t$ who is least preferred according to the tie-breaking rule.
Since agent $t \in A$ has non-zero utility and $b_t \in S_t$, proposing $b_t$ also gives a non-zero utility to agent $t$.
Let $\omega_{i, j} \in S_i$ be the proposed candidate by arbitrary agent $i \in A-\{t\}$.
This is because if there exists a solution with zero utility in $S_t$, then just removing $b_t$ from $S_t$ and distributing its assigned probability arbitrarily to other solutions will give a strictly higher utility for agent $t$.
For any $i \in A \setminus \{t\}$ and $\om_{i,j} \in S_i$, we observe that $x(\omega_{i, j}) \geq x(b_i)$.
In addition, we have either $x(b_t) < x(\omega_{i, j})$,
or $x(b_t) = x(b_i) = x(\omega_{i, j})$.
Since the principal prefers $i$ over $t$ according to the tie-breaking rule $\rho$, the candidate proposed by agent $t$, \ie $b_t$, is dominated by the candidate proposed by agent $i$, \ie $\omega_{i, j}$.
This implies that the principal never selects $b_t$ as the winner, and it contradicts the fact that agent $t$ has non-zero utility when proposing $b_t$.
Therefore, we can conclude that $|A| = 1$.
\end{proof}

Suppose that agent $i^* \in A$ is the only agent with non-zero utility.
In this case, all agents except agent $i^*$ have a utility of zero. 
Due to the definition of Nash equilibrium, for any $\sigma_i = \om$ for some $\om \in \omb_i$, $y_i(f_{M,(\sigma_i,\sigma_{-i})}) = 0$.
This means that the principal never selects $\om \in \omb_i$ for agent $i \neq i^*$.
Hence, for all $i \neq i^*, \omega_{i^*, j} \in S_{i^*}$,  we have
\begin{align*}
    x(\omega_{i^*, j}) \geq \max_{\om \in \omb_i} x(\om) = X_{i, (1)}
\end{align*}
%
Furthermore, for all $\omega_{i^*, j} \in S_{i^*}$ we have $x(\omega_{i^*, j}) \leq X_{i^*, (1)}$.
Since for $n-1$ agents $i$ ($i \neq i^*$) we have $X_{i, (1)} \leq x(\omega_{i^*, j})$, and $x(\omega_{i^*, j}) \leq X_{i^*, (1)}$
using the definition of $X_{(r)}$, for all $\omega_{i^*, j} \in S_{i^*}$ we get
$$
X_{(n)} \leq X_{(n-1)} \leq ... \leq X_{(2)} \leq x(\omega_{i^*, j}) \leq X_{(1)}.
$$
Also, in this case, all candidates in $S_{i^*}$ should be eligible as they give non-zero utility to agent $i^*$.
This further implies that for all $\omega_{i^*, j} \in S_{i^*}$, $\tau \leq x(\omega_{i^*, j}) \leq X_{(1)}$, and we have
$$\max\left(X_{(2)}, \tau\right) \leq x(\omega_{i^*, j}) \leq X_{(1)}.$$
This inequality shows that
\begin{itemize}
    \item if $X_{(2)} \geq \tau$ then $x(\omega_{i^*, j}) \geq X_{(2)}$.
	Hence, in this case, the principal's expected utility is at least $X_{(2)}$ as
	$\Ex{f_{M, \sigma}(\omb)} = \sum_{j\in [k_{i^*}], \om_{i^*,j} \in S_{i^*}} p_{i^*, j} x(\omega_{i^*, j}) \geq \sum_{j\in [k_{i^*}], \om_{i^*,j} \in S_{i^*}} p_{i^*, j} X_{(2)} = X_{(2)}$.
    \item if $X_{(1)} \geq \tau \geq X_{(2)}$ then $x(\omega_{i^*, j}) \geq \tau$.
	Hence, in this case, the principal's expected utility is at least $\tau$ as
	$\Ex{f_{M, \sigma}(\omb)} = \sum_{j\in [k_{i^*}], \om_{i^*,j} \in S_{i^*}} p_{i^*, j} x(\omega_{i^*, j}) \geq \sum_{j\in [k_{i^*}], \om_{i^*,j} \in S_{i^*}} p_{i^*, j} \tau = \tau$.
\end{itemize}

Therefore, by combining the results of case (i) and case (ii) the expected utility of the principal is at least
\begin{align}
	\Ex{X_{(2)}\IND\left[X_{(2)} \geq \tau\right]} + 
	\Ex{\tau\IND\left[X_{(1)} \geq \tau > X_{(2)}\right]}.\label{ineq:infbud_tau}
\end{align}

\end{proof}

\subsection{Proof of Theorem \ref{thm:recover}}\label{pf_thm:recover}
\begin{proof}
	Given the multi-agent delegation problem, we construct a single-agent instance and
    then use guarantees on the single-agent setting to prove our theorem. 
    Consider a problem instance of multi-agent delegation where $n$ agents propose solutions.
	In order to find solutions, agent $i$ samples a solution from each of his $k_i$ distributions $\{D_{i, j}\}_{j=1}^{k_i}$.
    Note that all solutions lie in $\Omega$.
	We construct a single-agent problem instance
	where we denote the only agent by $s$.
    In the single-agent instance, for all $i \in [n]$, agent $s$ samples $k_i$ solutions,
    one from each of distributions $\{D_{i, j}\}_{j=1}^{k_i}$,
    hence $\sum_{i=1}^{n} k_i$ samples on aggregate.
    
    Let $\om$ be a solution that agent $s$ observes from the distributions that belong to agent $i$.
    We represent this solution in single-agent instance with a pair $\left(\om, i\right)$.
    Importantly, we define the solution space $\Omega'$ in single-agent instance by $\Omega' := \Omega \times \left[n\right]$.
    In the constructed instance, the principal's utility of the solutions, \ie the function $x(.)$, does not change.
    Precisely, $x(\left(\omega, i)\right) := x(\omega)$.
    However, we define $y(\left(\omega, i)\right)$, \ie
    agent $s$ utility for $\om$ when observed by distributions of agent $i$, to be $y(\left(\omega, i)\right) := y_i(\om)$.
 
	According to \citet{kleinberg2018delegated}, there exists an SPM with proper eligible set $R$ under the constructed instance such that
	principal's expected utility is at least $\frac{1}{2}\Ex{X_{(1)}}$.
	We use eligible sets $R_i = R$ for all agents in the multi-agent instance.
	We claim that for any realization of solutions $\omb = \cup_{i=1}^{n} \omb_i$,
	the principal's utility in the multi-agent setting is at least that of the single-agent setting.
    This further implies that
	using eligible sets $R_i = R$, principal's expected utility in multi-agent settings is at least $\frac{1}{2}\Ex{X_{(1)}}$.
	
	Consider a realization of solutions $\omb$. Let $(\omega, i)$ be the winner in the single-agent setting.
	If $(\omega, i) = \perp$, then all observed solutions are not eligible.
	Formally, for all $i$ in $\left[n\right]$, and $j$ in $\left[k_{i}\right]$:
	$\omega_{i, j} \notin R$. In this case, there is no eligible solution in the multi-agent setting which
	indicates that the principal's utility in both settings is zero.
	
    On the other hand, if $(\omega, i) \neq \perp$, let $\omega^*$ be the winner in the multi-agent instance, we claim that $x(\omega^*) \geq x(\omega)$.
	Note that $(\omega, i)$ is the winner of the single-agent instance so $\om$ is the eligible solution that
    maximizes agent $s$ utility by giving her the utility of $y\left((\om, i)\right) = y_i(\om)$.
 
    Denote by $\sigma_i$ the solution proposed by agent $i$ in the multi-agent instance. 
    As both $\sigma_i, \omega \in R$, and $\omega$ is the eligible solution that maximizes $y_i$, then 
	$y_i(\omega) \geq y_i(\sigma_i)$.
    As a result, according to Assumption~\ref{as:pareto}, $x(\sigma_i) \geq x(\omega)$.
	There are two cases

	\begin{itemize}
		\item
		$\omega^*$ is sampled by $i$, \ie $\omega^* = \sigma_i$. As shown above $x(\omega^*) = x(\sigma_i) \geq x(\omega)$.
		\item
		$\omega^*$ is sampled by another agent $j \neq i$.
		As the principal selects the best candidates among all agents, $x(\omega^*) \geq x(\sigma_i)$.
		Using the inequality $x(\sigma_i) \geq x(\omega)$ we further conclude that $x(\omega^*) \geq x(\omega)$.
	\end{itemize}

	This finishes the proof.

        Finally, consider a correspondence between the multi-agent and single-agent setting such that the entire solutions in the multi-agent setting is sampled by the agent in the single-agent setting (as we did in the proof).
        Then, our proof in fact shows that the multi-agent problem can always bring higher (or equal) principal's utility than the single-agent problem equipped with any possible mechanism without money.
        This is because any mechanism in the single-agent setting can be reduced to the single-proposal mechanism by Lemma 1 in \citet{kleinberg2018delegated}, and we prove that our construction of the mechanism above yields a higher (or equal) utility for the principal than the single-proposal mechanism in the single-agent setting.

\end{proof}

\subsection{Proof of Theorem \ref{thm:lowerbound_comp_worst}}\label{pf_thm:lowerbound_comp_worst}
\begin{proof}
	Consider a multi-agent instance where $n$ agents propose solutions.
	If there exists an agent whose sampled solutions are always better than others agents' solutions,
	we call that agent a~\emph{super}~agent. Formally, we say that agent $i$ is super~agent
	if for each $j \in \left[k_i\right]$, each $i' \in \left[n\right] \setminus \{i\}$,
	and each $j' \in \left[k_{i'}\right]$ when $\omega_{i, j} \sim D_{i, j}, \omega_{i', j'} \sim D_{i', j'}$
	we have $x(\omega_{i, j}) \geq x(\omega_{i', j'})$.

	When a multi-agent delegation problem instance has a super agent,
	then whatever the super agent proposes is taken as the winner as all other agents' proposed solutions
    are dominated by any solution of the super agent.
	As a result, the multi-agent instance reduces to a single-agent instance where
	only agent $i$ exists as the single agent.

	We now show that $(2,0)$-approximation is tight for Bayesian mechanisms.
	We use the notion of a super agent and construct a multi-agent delegation problem where
	there exists a super agent so that only his sampled solutions are important.
	We note that $X_{(1)}$ here corresponds to the solution observed by the super agent as other agents will never sample the best solution.

	Assume this agent samples two solutions from distributions $D_1$, and $D_2$.
    \begin{itemize}
        \item 
        $D_1$ always gives the same solution $\omega_1$ to the agent such that $x(\omega_1) = 1, y(\omega_1) = 10$.
        \item 
        $D_2$ gives the agent solution $\omega_{21}$ such that $x(\omega_{21}) = \frac{1}{\alpha}, y(\omega_{21}) = 5$
    	with probability of $\alpha$ while with probability of $1-\alpha$,
    	it gives solution $\omega_{22}$ to the agent such that $x(\omega_{22}) = \frac{\alpha}{1-\alpha}, y(\omega_{22}) = 5$.
        Note that $\alpha$ is a very small positive number.
    \end{itemize}
	There are two possibilities for sampled solutions by this agent
	either it is $\{\omega_{1}, \omega_{21}\}$, or it is $\{\omega_{1}, \omega_{22}\}$.
	We first consider $\Ex{X_{(1)}}$
	\begin{align*}
		\Ex{X_{(1)}} = \alpha \parans{\frac{1}{\alpha}} + \parans{1 - \alpha}\parans{1} = 2-\alpha.
	\end{align*}
	
	Now we consider the principal's policy. There are two cases
	\begin{itemize}
		\item
		principal accepts $\omega_{1}$, \ie $\omega_{1} \in R$ then as $y(\omega_{1}) > y(\omega_{21}) = y(\omega_{22})$,
		agent proposes $\omega_{1}$ no matter what he samples from $D_2$. In this case, the principal's utility is $1$.
		\item
		principal rejects $\omega_{1}$, \ie $\omega_{1} \notin R$. In this case, the principal accepts the other solution as it gives a non-zero utility to her.
		With probability of $\alpha$, $\omega_{21}$ is sampled by the agent which gives the utility of $\frac{1}{\alpha}$ to the principal,
		however, $\omega_{22}$ is sampled with probability of $1-\alpha$ which gives the utility of $\frac{\alpha}{1-\alpha}$ to the principal.
		Therefore, prinicpal's expected utility using this policy is $\alpha \parans{\frac{1}{\alpha}} + \parans{1-\alpha}\parans{\frac{\alpha}{1-\alpha}}= 1 + \alpha$.
	\end{itemize}
	Denote by $ALG$ the principal's utility using her policy. In both cases, the principal's expected utility $\Ex{ALG} \le 1 + \alpha$. By putting $\alpha := \frac{\eps}{3-\eps}$, we have
	\begin{align*}
		(2-\eps)\Ex{ALG} \le (2-\eps)(1 + \alpha) =
        (2-\eps)\left(1 + \frac{\epsilon}{3-\epsilon}\right) = 2 - \frac{\eps}{3-\eps} = 2 - \alpha = \Ex{X_{(1)}}.
	\end{align*}
    Therefore, $(2-\eps)\Ex{ALG} \le \Ex{X_{(1)}}$. This indicates that the multiplicative approximation factor is at least $2-\eps$, which finishes the proof.
\end{proof}

\subsection{Proof of Theorem \ref{thm:informed}}\label{pf_thm:informed}
\begin{proof}
	Define $R_i = R = \{\om : x(\om) \ge r(n,k)\}$.
	Suppose that we run MSPM with the eligible set $R$.
	Due to Theorem~\ref{theorem:multi-agent-unlimited-budget}, the principal's expected utility is at least
	\begin{align}
		\Ex{X_{(2)}\IND[X_{(2)} \ge \tau]} + \Ex{\tau \IND[X_{(1)} \ge \tau > X_{(2)}]}
		&\ge
		\Ex{\tau \IND[X_{(1)} \ge \tau]}
		\\
		&\ge
		\Pr{\exists i : X_{i,(1)} \ge r(n,k)}r(n,k)
		\nonumber
		\\
		&=
		\parans{1 - F(r(n,k))^{nk}}r(n,k)
		\nonumber
		\\
		&\ge
		\parans{1 - \parans{r(n,k)^{\alpha}}^{nk}}r(n,k)
		\nonumber
		\\
		&=
		r(n,k) - r(n,k)^{\alpha nk+1},
		\label{ineq:01041212}
	\end{align}
	where the second inequality comes from our assumption on $F(x) \le x^{\alpha}$.
	Plugging $r(n,k) = \parans{\frac{1}{\alpha nk + 1}}^{\frac{1}{\alpha nk}}$,\footnote{Note that this choice of $r(n,k)$ comes from derivating \eqref{ineq:01041212} w.r.t. $r(n,k)$ by considering it as a function of $\alpha nk$.} we can further obtain
	\begin{align*}
		\parans{\frac{\alpha nk}{\alpha nk +1}}^{\frac{1}{\alpha nk}},
	\end{align*}
	which yields the multiplicative approximation ratio of $\parans{\frac{\alpha nk}{\alpha nk +1}}^{\frac{-1}{\alpha nk}}$.
\end{proof}

\subsection{Proof of Theorem \ref{thm:comp_obl_neg}}\label{pf_thm:comp_obl_neg}
\begin{proof}
        In order to prove this theorem,
	we reuse the idea of super agent defined in the proof of   Theorem~\ref{thm:lowerbound_comp_worst}.
        Before that, we claim that the principal cannot announce an eligible set other than $\Omega$.
        The proof is based on straightforward arguments but we provide it to make it self-contained.
        \begin{claim}\label{cl:3}
            If $R_i \neq \Omega$ for some $i \in [n]$, then the additive price of anarchy can be arbitrarily close to $L$, \ie for any $\eps>0$ there exists a problem instance such that $\pos_a \ge L - \eps$.
        \end{claim}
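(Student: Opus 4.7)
The plan is to exploit the fact that a PIM must commit to its eligible sets $\{R_j\}_{j \in [n]}$ before observing any distribution. Once some $R_i \subsetneq \Omega$ is fixed, I would pick, adversarially, a solution $\omega^* \in \Omega \setminus R_i$ and concentrate agent $i$'s distributions on $\omega^*$ with nearly maximal principal value, while forcing all other agents to only produce solutions of negligible principal value. This makes agent $i$ essentially a ``super agent'' (in the sense of the proof of Theorem~\ref{thm:lowerbound_comp_worst}) whose solution is the unique large contributor to $\opt$ but is systematically filtered out by the mechanism.

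Concretely, given $\eps > 0$, I would set $\delta = \eps/2$, choose any $\omega^* \in \Omega \setminus R_i$, and make every one of agent $i$'s $k_i$ distributions a point mass on $\omega^*$ with $x(\omega^*) = L - \delta$ and $y_i(\omega^*) = 1$. For every $j \neq i$ and every $\ell \in [k_j]$, I would take $D_{j,\ell}$ to be a point mass on some solution $\omega'_j$ with $x(\omega'_j) = \delta$ and $y_j(\omega'_j) = 1$. Under this construction, $X_{\max} = x(\omega^*) = L - \delta$ with probability one, so $\opt = L - \delta$.

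Next I would verify that every Nash equilibrium yields principal utility at most $\delta$. By Theorem~\ref{thm:multiagent_revel} it suffices to analyse MSPM, so agent $i$'s only observed solution $\omega^*$ is always filtered out by $R_i$; his realised utility is identically $0$, hence every action is a best response for him. Each other agent $j$ has only $\omega'_j$ to propose, which yields principal value $\delta$ if adopted as the winner and $0$ otherwise. The profile where $j \neq i$ plays $\omega'_j$ and $i$ plays $\omega^*$ is therefore a Nash equilibrium (no deviation is available), and the principal's realised utility under any equilibrium is at most $\delta$. Combining with $\opt = L - \delta$ gives $\opt - \Ex{f_{M,\sigma}} \ge (L-\delta) - \delta = L - \eps$, so $\pos_a \ge L - \eps$ as required.

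The argument is largely mechanical; the only substantive checks are equilibrium existence (immediate from the point-mass construction) and the positivity constraints $x,y_i > 0$ (which hold for $\delta \in (0,L)$). In particular, no subtle strategic reasoning is needed because agent $i$ is indifferent among all actions and every other agent has a unique admissible proposal, so the ``hard part'' amounts only to choosing the values of $x(\omega^*)$ and $x(\omega'_j)$ so that the super-agent reduction drives the additive gap all the way up to $L - \eps$.
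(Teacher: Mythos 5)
Your proof is correct and follows essentially the same approach as the paper's: pick $\omega^* \in \Omega \setminus R_i$ with $x(\omega^*) = L - \eps/2$, make agent $i$ a point mass on $\omega^*$, give the remaining agents solutions of principal value at most $\eps/2$, and observe that the filtered mechanism caps the principal's utility at $\eps/2$ while $\opt = L - \eps/2$. Your version is slightly more explicit (specifying the agents' utility values, invoking Theorem~\ref{thm:multiagent_revel} to reduce to MSPMs, and checking equilibrium existence), but these are elaborations of the same construction, not a different route.
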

        \begin{proof}[Proof of the claim]
            Since $R_i \neq \Omega$, there must be an element $\om \in \Omega$ such that the principal does not admit.
            Assume that $x(\om) = L-\eps/2$, and agent $i$'s distribution is a point mass on $\om$.
            Further, assume that other agents $j \neq i$ have distributions such that the principal's utility is at most $\eps/2$ for any point in their support.
            Then, our MSPM with the given eligible sets only accepts some solutions from $j \neq i$, which results in the principal's utility of at most $\eps/2$.
            Meanwhile, the optimal utility will be to select agent $i$'s solution $\om$ under which the principal can obtain $L-\eps$.
            Hence, the additive approximation factor is $L-\eps/2 - \eps/2 = L-\eps$, and it completes the proof.
        \end{proof}
        Due to the claim above, the principal should announce no eligible set in order to derive a nontrivial approximation ratio.
	Consider a multi-agent instance such that there exists a super agent. 
	Because the principal does not have any information regarding the distribution of agents,
	she cannot announce any eligible sets
	which implies that she accepts all solutions
	and takes the best one among the proposed solutions. 
	As a result, whatever the super agent proposes is selected by the principal.
	Assume the super agent samples two solutions from distributions $D_1$ and $D_2$.
	$D_1$ always gives the same solution $\omega_1$ to the agent such that $x(\omega_1) = L - \eps/2, y(\omega_1) = 5$.
	$D_2$ always gives the agent solution $\omega_{2}$ such that $x(\omega_{2}) = \eps/2, y(\omega_{2}) = 10$.
	We obtain
	\begin{align*}
		\Ex{X_{(1)}} = \Ex{\max\{\eps/2, L - \eps/2\}} = L -\eps/2.
	\end{align*}
	The super agent proposes $\omega_2$ as it will not be rejected by the principal
	and gives a higher utility to him, which means that the principal's utility is $x(\omega_2) = \eps/2$.
	Therefore, the additive approximation factor of this mechanism is $\Ex{X_{(1)}} - x(\omega_2) = L-\eps$. This finishes the proof.
\end{proof}

\subsection{Proof of Theorem \ref{thm:comp_sym_general}}\label{pf_thm:comp_sym_general}
\begin{proof}
        For ease of presentation, assume that $L=1$.
	By our assumption on symmetry and since each agent independently samples solutions, $X_{i,(1)}$ for $i \in [n]$ are identical and independent.
	Then, the result directly follows from the following theorem.
	\begin{lemma}[\citet{lopez2011upper}]\label{lm:lopez}
		Let $X_1,\ldots, X_n$ be i.i.d. random variables supported on $[0,1]$. Given an integer $s$ such that $1 \le s \le n-1$, let $Y_s$ and $Y_{s+1}$ be $s$-th and $s+1$-th order statistics of $X_1,\ldots, X_n$. Then, the following holds.
		\begin{align*}
			\Ex{Y_{s+1} - Y_s} \le \binom{n}{s}\left(\frac{s}{n}\right)^s \left(1-\frac{s}{n}\right)^{n-s}.
		\end{align*}
	\end{lemma}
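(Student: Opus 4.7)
The plan is to prove the inequality by a direct representation of the gap $Y_{s+1}-Y_s$ as an integral of an indicator, followed by a Fubini swap and a pointwise optimization of a Bernstein-type bump. Concretely, since all $X_i$ are supported on $[0,1]$ (and hence so are $Y_s, Y_{s+1}$), I would start from the identity
\begin{align*}
Y_{s+1} - Y_s \;=\; \int_0^1 \IND[\,Y_s \le t < Y_{s+1}\,]\,dt.
\end{align*}
Taking expectation and applying Tonelli (the integrand is nonnegative), this converts the problem into bounding an integral of a probability over $t\in[0,1]$.

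Next, I would observe that the event $\{Y_s\le t<Y_{s+1}\}$ is precisely the event that exactly $s$ of the $n$ i.i.d.\ variables $X_1,\dots,X_n$ are at most $t$. Writing $F$ for the common c.d.f., this gives the closed form
\begin{align*}
\Pr{Y_s \le t < Y_{s+1}} \;=\; \binom{n}{s} F(t)^s \bigl(1-F(t)\bigr)^{n-s},
\end{align*}
so that
\begin{align*}
\Ex{Y_{s+1}-Y_s} \;=\; \binom{n}{s}\int_0^1 F(t)^s\bigl(1-F(t)\bigr)^{n-s}\,dt.
\end{align*}
(A small remark: if $F$ has atoms, the event above may need to be written with $\le$ vs.\ $<$ care, but the binomial identity still holds almost surely in $t$, which suffices.)

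Finally, I would bound the integrand pointwise in $t$ by the maximum of the function $p\mapsto p^s(1-p)^{n-s}$ over $p\in[0,1]$. Standard calculus (setting the derivative to zero yields $s(1-p)=(n-s)p$) shows that this maximum is attained at $p=s/n$, with value $(s/n)^s(1-s/n)^{n-s}$. Plugging this bound in and using that the length of $[0,1]$ is $1$ yields
\begin{align*}
\Ex{Y_{s+1}-Y_s} \;\le\; \binom{n}{s}\Bigl(\tfrac{s}{n}\Bigr)^s\Bigl(1-\tfrac{s}{n}\Bigr)^{n-s},
\end{align*}
as required. There is no serious obstacle here: the only subtlety is making sure the Fubini swap and the binomial identification are justified even when $F$ has atoms, which is handled by noting that the integrand equals the stated binomial expression for all but countably many $t$, and the length-$1$ support of $F$ caps the integration range at unit length (this is where the assumption $X_i\in[0,1]$ enters).
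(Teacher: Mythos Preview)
Your proof is correct. The paper does not give its own argument for this lemma; it simply quotes the result from \citet{lopez2011upper} and invokes it directly in the proof of Theorem~\ref{thm:comp_sym_general}. Your argument is the standard elementary route: the integral representation $Y_{s+1}-Y_s=\int_0^1\IND[Y_s\le t<Y_{s+1}]\,dt$, the identification of that event with ``exactly $s$ of the $X_i$ lie in $[0,t]$'' (which gives the binomial expression $\binom{n}{s}F(t)^s(1-F(t))^{n-s}$), and the pointwise maximization of $p\mapsto p^s(1-p)^{n-s}$ at $p=s/n$. One minor remark: your caveat about atoms is more cautious than necessary---the equivalence between $\{Y_s\le t<Y_{s+1}\}$ and ``exactly $s$ of the $X_i$ are $\le t$'' holds for \emph{every} $t$, atoms or not, so the binomial formula for the probability is exact everywhere and no almost-everywhere qualification is needed.
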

	Plugging $s = n-1$ yields the desired result for $L=1$.
    By scaling the random variables, we can easily obtain the general of upper bound $L(1-\nicefrac{1}{n})^{n-1}$.

    To prove that this upper bound is tight upper bound for $\Ex{X_{(1)} - X_{(2)}}$, consider a symmetric problem instance equipped with a Bernoulli distribution with parameter $p$, \ie $X_{i,j} \sim \text{Bernoulli}(p)$.
	In this case, $X_{i,(1)}$ is also a Bernoulli distribution with parameter $q = 1-(1-p)^k$.
	Furthermore, we have
	\begin{align*}
		\Ex{X_{(1)} - X_{(2)}}
		&=
            \binom{n}{n-1}(1-q)^{n-1}q.
	\end{align*}
	If we set $q = 1-(1-p)^k = \frac{1}{n}$, which is equivalent to $p = 1 - (1-1/n)^{1/k}$, then we have the tight bound of $(1-1/n)^{n-1}$.
    By scaling this problem instance, we finish the proof.

\end{proof}

\subsection{Proof of Lemma \ref{lm:order_stat_diff}}\label{pf_lm:order_stat_diff}
\begin{proof}
		We separately prove two statements using similar techniques.
		Let $F$ and $f$ be the c.d.f. and p.d.f. of $D$, respectively.

		\paragraph{Part (1)}
		Rearranging $\Delta_{k,n} \le \Delta_{k,n-1}$, we need to show that
		\begin{align*}
			\Ex{X_{n-k+1:n} - X_{n-k:n-1}} \le \Ex{X_{n-k:n} - X_{n-k-1:n-1}}.
		\end{align*}
		Due to \citet{david1997augmented}, LHS and RHS can be represented as follows.
		\begin{align*}
			\Ex{X_{n-k+1:n} - X_{n-k:n-1}}
			&= 
			\binom{n-1}{n-k}\int_{-\infty}^{\infty}F^{n-k}(x)(1-F(x))^{k}dx.
			\\
			\Ex{X_{n-k:n} - X_{n-k-1:n-1}}
			&= 
			\binom{n-1}{n-k-1}\int_{-\infty}^{\infty}F^{n-k-1}(x)(1-F(x))^{k+1}dx.
		\end{align*}
		Plugging into the inequality and rearranging it, we need to show that
  
		\begin{align}
			\binom{n-1}{n-k-1}
			\parans{
			\int_{-\infty}^{\infty} F^{n-k-1}(x)(1-F(x))^k \left(\frac{n}{n-k}F(x) - 1\right)dx} \le 0.\label{ineq:12232248_new}
		\end{align}
		Define $G(t) = \int_{-\infty}^{t} F^{n-k-1}(x)(1-F(x))^{k-1} \left( \frac{n-1}{n-k}F(x) - 1\right)f(x)dx$.
		Obviously, $\frac{n-1}{n-k}F(x) - 1$ inside the integrand becomes positive for $x > x_0$ and negative otherwise for some $x_0 \in \R$.
		This implies that $G(t) \le G(\infty)$ for any $t \in \R$.
		Let $u = F(x)$ and by change of variables, we have
		\begin{align*}
			G(\infty)
			&=
			\int_{-\infty}^{\infty} u^{n-k-1}(1-u)^{k-1} \left( \frac{n}{n-k}u - 1\right)du
			\\
			&= 
			\frac{n}{n-k}B(n-k+1,k) - B(n-k,k)
			\\
			&=
			\frac{n}{n-k}\frac{(n-k)!(k-1)!}{n!} - \frac{(n-k-1)!(k-1)!}{(n-1)!}
			=0
		\end{align*}
		where $B(\cdot,\cdot)$ denotes the beta function.
		Hence, we have $G(t) \le 0$ for any $t \in \R$.
		Now we use the following lemma.
		\begin{lemma}[\citet{barlow1975statistical}]\label{lm:barlow}
			Given two functions $g:\R \mapsto [a,b]$ and $h:\R \mapsto [a,b]$ and $h$ is nonnegative, then the following holds.
			\begin{enumerate}
				\item If $\int_{-\infty}^t g(x)dx \le 0$ for any $t \in \R$, then for any nonincreasing function $h:\R \mapsto \R$, we have $\int_{-\infty}^t g(x)h(x)dx \le 0$ for any $t \in \R$.
				\item If $\int_{-\infty}^t g(x)dx \ge 0$ for any $t \in \R$, then for any nondecreasing function $h:\R \mapsto \R$, we have $\int_{-\infty}^t g(x)h(x)dx \ge 0$ for any $t \in \R$.
			\end{enumerate}
		\end{lemma}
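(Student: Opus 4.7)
The plan is to prove this via integration by parts (or more precisely, Riemann--Stieltjes integration), which turns the pointwise sign information on $G(t) := \int_{-\infty}^t g(x)\,dx$ into the desired sign of the weighted integral. The essential intuition is that multiplying $g$ by a nonnegative monotone weight $h$ redistributes mass but, together with the sign of the running integral $G$, cannot flip the inequality. I will present case (1); case (2) is symmetric after swapping signs.

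First I would set $G(t) = \int_{-\infty}^t g(x)\,dx$, so that the hypothesis of case (1) reads $G(t) \le 0$ for all $t$, with the implicit convention $\lim_{s\to -\infty} G(s) = 0$ (needed for the integrals $\int_{-\infty}^t g$ to be well defined). Assuming for the moment that $h$ is absolutely continuous, integration by parts gives
\begin{align*}
\int_{-\infty}^t g(x) h(x)\,dx \;=\; h(t) G(t) \;-\; \int_{-\infty}^t G(x) h'(x)\,dx \;-\; \lim_{s\to -\infty} h(s) G(s),
\end{align*}
and the boundary term at $-\infty$ vanishes because $h$ is bounded (its range lies in $[a,b]$) while $G(s) \to 0$. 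Under case (1)'s hypotheses, $h(t) \ge 0$ and $G(t) \le 0$ imply $h(t) G(t) \le 0$; and $h$ nonincreasing gives $h'(x) \le 0$ a.e., which combined with $G(x) \le 0$ yields $G(x) h'(x) \ge 0$, so the subtracted integral is nonnegative. Hence the whole expression is $\le 0$, as desired.

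The main obstacle is that a general monotone $h$ need not be absolutely continuous (it may have jumps or a singular part). I would handle this in one of two equivalent ways. The cleaner option is to replace ordinary integration by parts with the Riemann--Stieltjes identity
\begin{align*}
\int_{-\infty}^t g(x) h(x)\,dx \;=\; h(t)G(t) \;-\; \int_{-\infty}^t G(x)\,dh(x),
\end{align*}
valid for bounded $G$ of bounded variation on compacta and $h$ of bounded variation (both hold here). Since $h$ is nonincreasing, $-dh$ is a nonnegative Borel measure, and $G \le 0$ gives $\int G\,dh \ge 0$, so subtracting yields something $\le 0$ as before, combined with $h(t)G(t) \le 0$. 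Alternatively, one can mollify $h$ by convolution with a nonnegative approximate identity to obtain smooth monotone bounded $h_\epsilon \to h$ pointwise a.e., apply the absolutely-continuous argument to $h_\epsilon$, and pass to the limit via dominated convergence (legal because $|g h_\epsilon| \le b\,|g|$ and $g$ is integrable on $(-\infty,t)$ by hypothesis). Either route completes case (1); case (2) is identical after changing all signs, finishing the proof.
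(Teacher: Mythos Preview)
The paper does not supply its own proof of this lemma; it is quoted from Barlow and Proschan and used as a black box. Your argument for part (1) via integration by parts (in Riemann--Stieltjes form, to accommodate general monotone $h$) is correct and is the standard way to establish this type of inequality.

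Your treatment of part (2), however, is too quick. Simply ``changing all signs'' in part (1) (i.e., replacing $g$ by $-g$) yields: if $\int_{-\infty}^t g \ge 0$ for all $t$ and $h$ is nonnegative and \emph{nonincreasing}, then $\int_{-\infty}^t gh \ge 0$. That is not part (2) as stated, which requires $h$ nondecreasing. In fact, part (2) as written in the paper is false: take $g(x) = 1$ on $[0,1]$, $g(x) = -1$ on $[1,2]$, and $g = 0$ elsewhere (so that $\int_{-\infty}^t g \ge 0$ for every $t$), and take $h(x) = 0$ for $x < 1$ and $h(x) = 1$ for $x \ge 1$ (nondecreasing, nonnegative, bounded). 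Then $\int_{-\infty}^{2} g h = -1 < 0$.

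This is a misstatement in the paper rather than a flaw in your method. If you inspect how the paper actually invokes part (2) --- in the proofs of its order-statistics lemmas it works with $G'(t) = \int_t^{\infty} g(x)\,dx \ge 0$ --- the intended hypothesis is evidently $\int_t^{\infty} g \ge 0$ for all $t$, with conclusion $\int_t^{\infty} g h \ge 0$ for nondecreasing nonnegative $h$. That corrected version \emph{is} the mirror image of part (1) under the reflection $x \mapsto -x$, and your integration-by-parts argument then adapts verbatim.
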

		Plugging $g(x)$ to be the integrand of $G(x)$ and $h(x) = (\frac{f(x)}{1-F(x)})^{-1}$ to the first part of the above lemma, we conclude that ~\eqref{ineq:12232248_new} holds since $h(t)$ is nonincreasing due to the assumption on MHR.
		This completes the first part of the lemma.
		
		\paragraph{Part (2)}
		Now consider the MRHR distribution case.
		Rearranging $\delta_{k,n} \le \delta_{k,n-1}$, we need to show that
		\begin{align*}
			\Ex{X_{k+1:n} - X_{k+1:n-1}} \le \Ex{X_{k:n} - X_{k:n-1}}.	
		\end{align*}
		Again due to \cite{david1997augmented}, we have
		\begin{align*}
			\Ex{X_{k+1:n} - X_{k+1:n-1}}
			&=
			-\binom{n-1}{k}\int_{-\infty}^{\infty}F^{k+1}(x)(1-F(x))^{n-k-1}dx.
			\\
			\Ex{X_{k:n} - X_{k:n-1}}
			&=
			-\binom{n-1}{k-1}\int_{-\infty}^{\infty}F^{k}(x)(1-F(x))^{n-k}dx.	
		\end{align*}
		Plugging into the inequality and rearranging it, it suffices to show that
		\begin{align*}
			\frac{(n-1)!}{(k-1)!(n-k)!}\int_{-\infty}^{\infty}F^k(x)(1-F(x))^{n-k-1}\parans{\frac{n}{k}F(x) - 1}dx \ge 0.
		\end{align*}
		Define $G'(t) = \int_{t}^{\infty} F^{k-1}(x)(1-F(x))^{n-k-1}\parans{\frac{n}{k}F(x) - 1} f(x) dx$.
		Observe that $\parans{\frac{n}{k}}F(x) - 1$ becomes positive for $x > x_0$ and negative otherwise for some $x_0 \in \R$.
		Hence, $G'(t) \ge G'(-\infty)$.
		Now we show that $G'(-\infty) \ge 0$.
		Using change of variable with $F(x) = u$, we have
		\begin{align*}
			G'(-\infty)
			&=
			\int_{-\infty}^{\infty}u^{k-1}(1-u)^{n-k-1}\parans{\frac{n}{k}u - 1}du
			\\
			&=
			\frac{n}{k}B(k+1, n-k) - B(k, n-k)
			\\
			&=
			\frac{n}{k}\frac{k!(n-k-1)!}{n!} - \frac{(k-1)!(n-k-1)!}{(n-1)!}= 0,
		\end{align*}
		and it concludes that $G'(t) \ge 0$ for any $t \in \R$.
		Now, plugging $g(x)$ to be the integrand of $G'(x)$ and $h(x) = (\frac{f(x)}{F(x)})^{-1}$ to the second part of Lemma \ref{lm:barlow}, we conclude that ~\eqref{ineq:12232248_new} holds since $h(t)$ is nondecreasing due to the assumption on MRHR.
		This completes the first part of the lemma.

\end{proof}

\subsection{Proof of Lemma \ref{lm:mhr_preserving}}\label{pf_lm:mhr_preserving}
\begin{proof}
		Let $f(x)$ and $F(x)$ be p.d.f. and c.d.f. of the distribution $X_i \sim D$, and let $g(x)$ and $G(x)$ be that of $r$-th order statistics, respectively.
		Then, we have
		\begin{align*}
			1-G(x)
			&=
			\sum_{j=0}^{r-1}\binom{n}{j}F(x)^j(1-F(x))^{n-j}
			\\
			g(x)
			&=
			\frac{n!}{(r-1)!(n-r)!}f(x)F(x)^{r-1}(1-F(x))^{n-r}.
		\end{align*}
		Hence, its hazard rate can be obtained as
		\begin{align*}
			h(x)
			&=
			\frac{\frac{n!}{(r-1)!(n-r)!}f(x)F(x)^{r-1}(1-F(x))^{n-r}}{\sum_{j=0}^{r-1}\binom{n}{j}F(x)^j(1-F(x))^{n-j}}
			\\
			&= \frac{n!}{(r-1)!(n-r)!} \frac{f(x)}{1-F(x)}
			\frac{F(x)^{r-1}(1-F(x))^{n-r+1}}{\sum_{j=0}^{r-1}\binom{n}{j}F(x)^j(1-F(x))^{n-j}}
			\\
			&= \frac{n!}{(r-1)!(n-r)!} \frac{f(x)}{1-F(x)}
			\frac{1}{\sum_{j=0}^{r-1}\binom{n}{j}\left(\frac{1-F(x)}{F(x)}\right)^{r-j-1}}.
		\end{align*}
		Since $f(x)/(1-F(x))$ is nondecreasing by MHR property of $D$ and $1-F(x)/F(x)$ is nonincreasing function of $x$, we conclude that $h(x)$ is nondecreasing function of $x$.
\end{proof}

\subsection{Proof of Theorem \ref{thm:mhr_upperbound}}\label{pf_thm:mhr_upperbound}
\begin{proof}
	Recall that the additive approximation factor is upper bounded by $\Ex{X_{(1)} - X_{(2)}}$ by Corollary \ref{cor:infbud_notau}.
	As we discussed, the proof mainly follows from Lemma \ref{lm:order_stat_diff} on the expected difference between two consecutive order statistics.	
	First, by Lemma \ref{lm:mhr_preserving},  $X_{i,(1)}$ has MHR distribution for $i \in [n]$.
	Now, denote by $\Ex{X_{(1):n} - X_{(2):n}}$ the additive approximation factor in the problem setting when there exists $n$ agents.
	By repeating Lemma \ref{lm:order_stat_diff} with $k=1$, we can obtain the following inequalities.
	\begin{align*}
		\Ex{X_{(1):n} - X_{(2):n}} 
		\le
		\Ex{X_{(1):2} - X_{(2):2}}
		=
		\Ex{|A-B|},
	\end{align*}
	where $A$ and $B$ are two i.i.d. random variables of which the endowed distribution is equivalent to that of $X_{i, (1)}$.
	We now use the following inequality.\footnote{Weaker bound of $\Ex{|A-B|} \le \sqrt{\var(a)}$ can be obtained by simply rewriting it as $\Ex{\sqrt{(A-B)^2}}$ and applying Jensen's inequality.}
	\begin{lemma}[Theorem 5 in \cite{cerone2005bounds}]
		Let $A,B$ be i.i.d. random variables with mean $\mu$ and variance $\sigma^2$.
		For any $p,q$ such that $p>1$ and $\frac{1}{p} + \frac{1}{q} = 1$, suppose that $\Ex{|A - \mu|^{p}} < \infty$.
		Then, the following inequality holds.
		\begin{align*}
			\Ex{|A-B|} \le \frac{2}{(q+1)^{1/q}}\Ex{|A - \mu|^{p}}^{1/p}.
		\end{align*}
	\end{lemma}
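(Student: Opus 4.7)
The plan is to reduce the claim to a one-dimensional Hölder inequality via the quantile-function representation. Concretely, let $F$ be the common c.d.f. of $A,B$ and let $F^{-1}$ denote its quantile function. Then $(A,B) \stackrel{d}{=} (F^{-1}(U), F^{-1}(V))$ for $U,V$ independent $\mathrm{Uniform}[0,1]$. Define $g(u) = F^{-1}(u) - \mu$; since $F^{-1}$ is non-decreasing, so is $g$, and since $\mu = \Ex{A} = \int_0^1 F^{-1}(u)\,du$ we have $\int_0^1 g(u)\,du = 0$. The key identity is
\begin{align*}
    \Ex{|A-B|}
    &= \int_0^1\!\!\int_0^1 \bigl(F^{-1}(\max(u,v)) - F^{-1}(\min(u,v))\bigr)\,du\,dv,
\end{align*}
which, after splitting the domain into $\{u\ge v\}$ and $\{u < v\}$ and applying Fubini, simplifies to $2\int_0^1 (2u-1) F^{-1}(u)\,du$. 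Because $\int_0^1(2u-1)\,du = 0$, the constant $\mu$ drops and this equals $4\int_0^1 (u - \tfrac12)\,g(u)\,du$.

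Next I apply Hölder's inequality on $[0,1]$ with exponents $(q,p)$ satisfying $1/p+1/q = 1$:
\begin{align*}
    \left|\int_0^1 (u-\tfrac12)\,g(u)\,du\right|
    \le
    \left(\int_0^1 |u-\tfrac12|^q\,du\right)^{1/q}
    \left(\int_0^1 |g(u)|^p\,du\right)^{1/p}.
\end{align*}
The first factor evaluates explicitly: by symmetry about $1/2$,
\begin{align*}
    \int_0^1 |u-\tfrac12|^q\,du = 2\int_0^{1/2} (\tfrac12-u)^q\,du = \frac{2\cdot (1/2)^{q+1}}{q+1} = \frac{1}{2^q(q+1)},
\end{align*}
so its $1/q$-th power is $\tfrac{1}{2(q+1)^{1/q}}$. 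The second factor is $\bigl(\int_0^1 |F^{-1}(u)-\mu|^p\,du\bigr)^{1/p} = \Ex{|A-\mu|^p}^{1/p}$ by the change of variables $a = F^{-1}(u)$. Combining gives
\begin{align*}
    \Ex{|A-B|} \le 4 \cdot \frac{1}{2(q+1)^{1/q}} \cdot \Ex{|A-\mu|^p}^{1/p} = \frac{2}{(q+1)^{1/q}}\,\Ex{|A-\mu|^p}^{1/p},
\end{align*}
which is exactly the claimed inequality.

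The main obstacle is really bookkeeping rather than a conceptual hurdle: (i) justifying the quantile coupling when $F$ has flat parts or jumps (using the generalized inverse $F^{-1}(u) = \inf\{x : F(x)\ge u\}$ and the well-known fact that $F^{-1}(U) \stackrel{d}{=} A$ for $U\sim\mathrm{Uniform}[0,1]$), and (ii) verifying the identity $\Ex{|A-B|} = 4\int_0^1 (u-\tfrac12)g(u)\,du$, which is where the careful Fubini split happens. As a sanity check, the bound is tight for $A\sim\mathrm{Uniform}[0,1]$ with $p=q=2$: there $g(u) = u - 1/2$, Hölder becomes equality, and one recovers $\Ex{|A-B|} = \tfrac13 = \tfrac{2}{\sqrt{3}}\cdot\sigma$, matching the right-hand side precisely.
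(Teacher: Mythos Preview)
Your proof is correct. The paper does not actually prove this lemma---it is quoted verbatim as Theorem~5 of \citet{cerone2005bounds} and used as a black box inside the proof of Theorem~\ref{thm:mhr_upperbound}---so there is no in-paper argument to compare against. Your quantile-representation approach (rewriting $\Ex{|A-B|}$ as $4\int_0^1 (u-\tfrac12)g(u)\,du$ and then applying H\"older with the explicit evaluation $\int_0^1 |u-\tfrac12|^q\,du = 1/(2^q(q+1))$) is clean, and the tightness check for $U[0,1]$ at $p=q=2$ is a nice touch.
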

	Plugging $p=q=2$ into the lemma, we obtain
	\begin{align*}
		\Ex{|A-B|} \le \sqrt{\frac{4\var(A)}{3}}.
	\end{align*}
	In case of uniform distribution $U[0,1]$, it is widely known that $X_{i,(1)} \sim Beta(k,1)$.
	Hence, we conclude that the additive approximation factor is bounded above by
	\begin{align*}
		\sqrt{\frac{4k}{3(k+1)^2(k+2)}} \approx \frac{1.155}{k}.
	\end{align*}
\end{proof}

\subsection{Proof of Lemma \ref{lm:order_stat_diff_restricted}}
\begin{proof}
		We separately prove two cases.
		
		\paragraph{Part (1)}
		Rearrange $(n+1)\Delta_{k,n} = \Delta_{k,n}  + n\Delta_{k,n} \le n\Delta_{k,n-1}$ and divide it by $n-1$,, we need to show that
		\begin{align*}
			\frac{\Delta_{k,n}}{n}  + \Ex{X_{n-k+1:n} - X_{n-k:n-1}} \le \Ex{X_{n-k:n} - X_{n-k-1:n-1}}.
		\end{align*}
		Again to \citet{david1997augmented}, each term can be represented as follows.
		\begin{align*}
			\frac{\Delta_{k,n}}{n}
			&=
			\frac{1}{n}\binom{n}{k}\int_{-\infty}^{\infty}F^{n-k}(x)(1-F(x))^{k} dx
			\\
			\Ex{X_{n-k+1:n} - X_{n-k:n-1}}
			&= 
			\binom{n-1}{n-k}\int_{-\infty}^{\infty}F^{n-k}(x)(1-F(x))^{k}dx.
			\\
			\Ex{X_{n-k:n} - X_{n-k-1:n-1}}
			&= 
			\binom{n-1}{n-k-1}\int_{-\infty}^{\infty}F^{n-k-1}(x)(1-F(x))^{k+1}dx.
		\end{align*}
		Plugging into the inequality and rearranging it, we need to show that
  
		\begin{align*}
			\binom{n-1}{n-k-1}
			\parans{
			\frac{1}{n-k}\int_{-\infty}^{\infty}F^{n-k}(x)(1-F(x))^{k}dx  + 
			\int_{-\infty}^{\infty} F^{n-k-1}(x)(1-F(x))^k \left( \frac{n}{n-k}F(x) - 1\right)dx} \le 0.
		\end{align*}
		We can further expand it to be
		\begin{align}
			\binom{n-1}{n-k-1}
			\parans{
			\int_{-\infty}^{\infty} F^{n-k-1}(x)(1-F(x))^k \left( \frac{n+1}{n-k}F(x) - 1\right)dx} \le 0.
			\label{ineq:01011741}
		\end{align}
		Define 
		\begin{align*}
			G(t)
			= 
			\int_{-\infty}^{t} F^{n-k-1}(x)(1-F(x))^k \left( \frac{n+1}{n-k}F(x) - 1\right)f(x) dx
		\end{align*}
		Obviously, $\frac{n+1}{n-k}F(x) - 1$ inside the integrand becomes positive for $x > x_0$ and negative otherwise for some $x_0 \in \R$.
		This implies that $G(t) \le G(\infty)$ for any $t \in \R$.
		Let $u = F(x)$ and by change of variables, we have
		\begin{align*}
			G(\infty)
			&=
			\int_{-\infty}^{\infty} u^{n-k-1}(1-u)^{k} \left( \frac{n+1}{n-k}u - 1\right)du
			\\
			&= 
			\frac{n+1}{n-k}B(n-k+1,k+1) - B(n-k,k+1)
			\\
			&=
			\frac{n+1}{n-k}\frac{(n-k)!k!}{(n+1)!} - \frac{(n-k-1)!k!}{n!} = 0
		\end{align*}
		where $B(\cdot,\cdot)$ denotes the beta function.
		Hence, we have $G(t) \le 0$ for any $t \in \R$.
		Plugging $g(x)$ to be the integrand of $G(x)$ and $h(x) = 1/f(x)$ to the first part of Lemma~\ref{lm:barlow}, we conclude that ~\eqref{ineq:01011741} holds since $h(t)$ is nonincreasing due to the assumption on $f(x)$.		
		This completes the first part of the lemma.
		
		\paragraph{Part (2)}
		Rearranging $(n+1)\delta_{k,n} \le n\delta_{k,n-1}$, we need to show that
		\begin{align*}
			\frac{1}{n}\Ex{X_{k+1:n} - X_{k:n}} + \Ex{X_{k+1:n} - X_{k+1:n-1}} \le \Ex{X_{k:n} - X_{k:n-1}}.	
		\end{align*}
		Due to \citet{david1997augmented}, we obtain the following equations.
		\begin{align*}
			\frac{1}{n}\Ex{X_{k+1:n} - X_{k:n}}
			&=
			\frac{1}{n}\binom{n}{n-k}\int_{-\infty}^{\infty}F^{k}(x)(1-F(x))^{n-k}dx
			\\
			\Ex{X_{k+1:n} - X_{k+1:n-1}}
			&=
			-\binom{n-1}{k}\int_{-\infty}^{\infty}F^{k+1}(x)(1-F(x))^{n-k-1}dx
			\\
			\Ex{X_{k:n} - X_{k:n-1}}
			&=
			-\binom{n-1}{k-1}\int_{-\infty}^{\infty}F^{k}(x)(1-F(x))^{n-k}dx
		\end{align*}
		Plugging into the inequality and rearranging it, we need to show that
		\begin{align*}
			\frac{(n-1)!}{(k-1)!(n-k)!}\parans{-\frac{1}{k}\int_{-\infty}^{\infty}F^k(x)(1-F(x))^{n-k}dx + \int_{-\infty}^{\infty}F^k(x)(1-F(x))^{n-k-1}\parans{\frac{n}{k}F(x) - 1}} \ge 0,
		\end{align*}
		which is equivalent to
		\begin{align*}
			\frac{(n-1)!}{(k-1)!(n-k)!}\parans{\int_{-\infty}^{\infty}F^k(x)(1-F(x))^{n-k-1}\parans{\frac{n+1}{k}F(x) - \frac{k+1}{k}}} \ge 0.
		\end{align*}
		Define $G'(t) = \int_{t}^{\infty}F^{k-1}(x)(1-F(x))^{n-k-1}\parans{\frac{n+1}{k}F(x) - \frac{k+1}{k}}f(x)dx$.
		Observe that $\frac{n+1}{k}F(x) - \frac{k+1}{k}$ becomes positive for $x > x_0$ and nonpositive otherwise for some $x_0 \in \R$.
		Hence, $G'(t) \ge G'(-\infty)$.
		Now we show that $G'(-\infty) \ge 0$.
		Using change of variable with $F(x) = u$, we have
		\begin{align*}
			G'(-\infty)
			&=
			\int_{-\infty}^{\infty}u^{k-1}(1-u)^{n-k-1}\parans{\frac{n+1}{k}u  - \frac{k+1}{k}}	du
			\\
			&=
			\frac{n+1}{k}B(k+1,n-k) - \frac{k+1}{k}B(
			k,n-k)
			\\
			&=
			\frac{n+1}{k}\frac{k!(n-k-1)!}{n!} - \frac{k+1}{k}\frac{(k-1)!(n-k-1)!}{(n-1)!}
			\\
			&=
			\frac{(k-1)!(n-k-1)!}{(n-1)!}\parans{\frac{n+1}{n} - \frac{k+1}{k}} \le 0,
		\end{align*}
		where the last inequality follows from $k \le n-1$.
		Hence we conclude that $G'(t) \ge 0$ for any $t \in \R$.
		Now, plugging in $g(x)$ to be the integrand of $G'(x)$ and $h(x) = f(x)^{-1}$ to the second part of Lemma \ref{lm:barlow}, we finish the proof.
\end{proof}

\subsection{Proof of Lemma \ref{lm:id_preserving}}\label{pf_lm:id_preserving}
\begin{proof}
		We separately prove the following two cases.
  
		\paragraph{Part (1)}
		Let $f(x)$ and $F(x)$ be p.d.f. and c.d.f. of the distribution $X_i \sim D$, and let $g(x)$ be p.d.f. of $n$-th order statistics, respectively.
		Then, we have
		\begin{align*}
			g(x)
			&=
			\frac{n!}{(n-1)!(n-n)!}f(x)F^{n-1}(x)(1-F(x))^{n-n} = nf(x)F(x)^{n-1}.
		\end{align*}
		Since both $f(x)$ and $F(x)$ are nondecreasing on $x$, we finish the proof.
		
		\paragraph{Part (2)}
		Let $g(x)$ be p.d.f. of first order statistics.
		Then, we have
		\begin{align*}
			g(x) = \frac{n!}{(1-1)!(n-1)!}f(x)F^{1-1}(x)(1-F(x))^{n-1} = nf(x)(1-F(x))^{n-1}.
		\end{align*}
		Since both $1-F(x)$ and $f(x)$ are nonincreasing on $x$, we finish the proof.
\end{proof}

\subsection{Proof of Theorem \ref{thm:decaying_upperbound}}\label{pf_thm:decaying_upperbound}
\begin{proof}
Recall that the additive approximation factor is upper bounded by $\Ex{X_{(1)} - X_{(2)}}$ by Corollary \ref{cor:infbud_notau}.
	As we discussed above, the proof mainly follows from Lemma \ref{lm:order_stat_diff_restricted} on the expected difference between two consecutive order statistics.
	By Lemma \ref{lm:id_preserving}, we know that $X_{i,(1)}$ has p.d.f. which is nondecreasing on $x$.
	Now, denote by $\Ex{X_{(1):n} - X_{(2):n}}$ the additive approximation factor in the problem setting when there exists $n$ agents.
	By repeating Lemma \ref{lm:order_stat_diff_restricted} with $k=1$, we have the following series of inequalities.
	\begin{align*}
		(n+1)\Ex{X_{(1):n} - X_{(2):n}} 
		&\le
		n \Ex{X_{(1):n-1} - X_{(2):n-1}}
		\\
		n\Ex{X_{(1):n-1} - X_{(2):n-1}} 
		&\le
		(n-1)\Ex{X_{(1):n-2} - X_{(2):n-2}}
		\\
		&\vdots
		\\
		4\Ex{X_{(1):3} - X_{(2):3}} 
		&\le
		3\Ex{X_{(1):2} - X_{(2):2}}.
	\end{align*}
	By multiplying the inequalities, we obtain
	\begin{align*}
		\Ex{X_{(1):n} - X_{(2):n}} \le \frac{3}{n+1}\Ex{|A-B|},
	\end{align*}
	where $A$ and $B$ are two i.i.d. random variables of which the endowed distribution is equivalent to that of $X_{i, (1)}$.
	Similarly from the proof of Theorem~\ref{thm:mhr_upperbound}, we can further obtain
	\begin{align*}
		\Ex{|A-B|} 
		\le
		\sqrt{\frac{4}{3}\var(A)},
	\end{align*}
	and it completes the proof.
	In case of uniform distribution $U[0,1]$, it is widely known that $X_{i,(1)} \sim Beta(k, 1)$.
	Hence, we conclude that the additive approximation factor is upper bounded by
	\begin{align*}
		\frac{3}{n+1}\sqrt{\frac{4k}{3(k+1)^2(k+2)}} \approx \frac{3.464}{k(n+1)}.
	\end{align*}

\end{proof}

\subsection{Proof of Theorem \ref{thm:pim_neg_incomp}}\label{pf_thm:pim_neg_incomp}
\begin{proof}
The proof consists of two parts.
First, we prove that there exists a problem instance such that proposing a solution that minimizes the principal's utility constructs an equilibrium, which leads to the principal's utility of $\Ex{\max_{i \in [n]}X_{i,(k)}}$.
Next, we analyze the bound on this quantity under the uniform distribution.

\paragraph{Worst-case instance}
Consider $n$ symmetric agents that find solutions $\omega$ such that
	$x(\omega) = x_0$ is sampled from a continuous distribution with c.d.f. of $F(x_0)$.
    Each agent samples $k$ solutions.
    For example, agent $i$ samples $\omega_{i, 1}, \omega_{i, 2}, ..., \omega_{i, k}$.
For an observed solution $\om_{i, j}$, define agents' utility function $y$ as a function of $x(\om_{i,j})$ as follows
\begin{align*}
		y(\omega_{i, j}) = \parans{1 - \parans{\parans{1 - F(x(\omega_{i, j}))}^k}^{n-1}}^{-2}.
\end{align*}
Under this setting, we claim that there exists a BNE such that each agent,
among all of his sampled solutions, proposes the solution which minimizes the principal's utility.
In order to show this claim, we prove that assuming other agents propose solutions with the lowest principal's utility,
best-response for agent $i$ is also to submit the solution which minimizes the principal's utility. Denote others' strategy by $\sigma_{-i}$.
\begin{align*}
    \Ex{u_i(\omega, \sigma_{-i}) | x(\omega) = x_0} = \Pr{\omega \text{ is winner} | \sigma_{-i}, x(\omega)=x_0} y(\omega)
\end{align*}
As all other agents propose solutions with the lowest principal's utility, the probability that $\omega$ is chosen as the winner is the probability that all agents observe at least one solution $\omega'$ such that $x(\omega') \le x(\omega) = x_0$.
Note that tie does not happen as $F$ is a continuous function.
Formally, 
\begin{align*}
	\Pr{\omega \text{ is winner} | \sigma_{-i}, x(\omega)=x_0} = \parans{1 - \parans{1 - F(x_0)}^k}^{n-1}.
\end{align*}
Plugging in the above equation, we have
\begin{align*}
	\Ex{u_i(\omega, \sigma_{-i}) | x(\omega) = x_0} = \parans{1 - \parans{1 - F(x_0)}^k}^{n-1} y(\omega) = \frac{1}{\parans{1 - \parans{1 - F(x_0)}^k}^{n-1}}
\end{align*}
This indicates that agent utility is a decreasing function of $x_0$ which further implies that agent $i$ best response is to propose solution $\omega$ such that $\omega = \argmin_{\omega \in \omb_i} x(\omega)$.
We prove that agent $i$ best-response is to propose the solution with the lowest principal's utility, which concludes that this set of strategies forms a BNE.
Hence in this case, the expected utility of the principal is $\Ex{\max_{i \in [n]} X_{i,(k)}}$.

\paragraph{Computing lower bound of $\poa_a$ for uniform distribution}
Now we derive an upper bound for this quantity on the uniform distribution $U[0,1]$ setting, \ie
$X_{i, j} \sim U[0, 1]$.
Note that $X_{i,(k)}$ follows the beta distribution with parameter $(1,k)$.
By Jensen's inequality,
\begin{align}
    e^{t\parans{\Ex{\max_{i \in [n]} X_{i,(k)}} - 1/(k+1)}}
    \le
    \Ex{e^{t \cdot \parans{\max_{i \in [n]} X_{i,(k)} -1/(k+1)}}}
    \le
    \sum_{i=1}^n \Ex{e^{t(X_{i,(k)} - 1/(k+1))}}.\label{ineq:01152042}
\end{align}
Now we use the fact that beta distribution is subgaussian with certain choices of parameters.
\begin{theorem}[Theorem 4 in \cite{elder2016bayesian}]
    The beta distribution with parameter $\alpha, \beta$ is $\frac{1}{4(\alpha+\beta) + 2}$-subgaussian.
\end{theorem}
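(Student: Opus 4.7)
The goal is to prove that for $X \sim \text{Beta}(\alpha, \beta)$ with mean $\mu = \alpha/(\alpha+\beta)$, we have $\Ex{e^{t(X-\mu)}} \le e^{t^2/(8(\alpha+\beta)+4)}$ for every $t \in \R$. My plan is to reduce this to a uniform bound on the second derivative of the cumulant generating function $\psi(t) = \log \Ex{e^{t(X-\mu)}}$. Since $\psi(0) = \psi'(0) = 0$, Taylor's theorem with integral remainder reduces the desired MGF inequality to the pointwise bound $\psi''(t) \le 1/(4(\alpha+\beta)+2)$, which must be established uniformly in $t$.

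The key identity here is $\psi''(t) = \var_{P_t}(X)$, where $P_t$ denotes the exponentially tilted law with density $p_t(x) \propto e^{tx}\, x^{\alpha-1}(1-x)^{\beta-1}$ on $[0,1]$. Since $X$ lives on $[0,1]$, Popoviciu's inequality already yields the weak bound $\var_{P_t}(X) \le 1/4$, recovering Hoeffding's crude subgaussian constant. Sharpening to $1/(4(\alpha+\beta)+2)$ requires exploiting the Beta structure: when $\alpha, \beta \ge 1$ the tilted density is log-concave on $[0,1]$, and the Brascamp--Lieb variance inequality gives $\var_{P_t}(X) \le \Ex{[(\alpha-1)/X^2 + (\beta-1)/(1-X)^2]^{-1}}$. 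Combining this with convexity/AM--GM estimates on the integrand produces an upper bound scaling as $1/(\alpha+\beta)$, and careful accounting should recover the exact constant.

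The main obstacle is pinning down that exact constant $1/(4(\alpha+\beta)+2)$ rather than a looser $O(1/(\alpha+\beta))$ expression, since Brascamp--Lieb typically leaks constants and the worst-case tilt need not occur at $t=0$. A cleaner alternative, following Marchal and Arbel, is to work directly with the Beta moment generating function, which satisfies a Kummer-type ODE, and then compare $\psi''$ with the Gaussian second derivative via a monotonicity argument on the ratio $\psi(t) - \sigma^2 t^2/2$. A second subtlety is that when $\alpha$ or $\beta$ is below $1$ the density fails to be log-concave; one handles this regime separately, either by a direct moment calculation using $\var(X) = \alpha\beta/[(\alpha+\beta)^2(\alpha+\beta+1)] \le 1/[4(\alpha+\beta+1)]$ (already tighter than the target at $t=0$) or by a limiting argument from the log-concave case. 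Once $\psi''(t) \le 1/(4(\alpha+\beta)+2)$ is in hand uniformly, integrating twice from the origin immediately closes out the subgaussian bound.
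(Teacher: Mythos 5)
The paper does not prove this statement at all --- it is imported verbatim as an external fact, cited to Elder (2016) and used as a black box inside the proof of Theorem~\ref{thm:pim_neg_incomp}. So there is no in-paper argument to compare against; what can be assessed is whether your sketch would close on its own.

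Your reduction --- write $\psi(t) = \log\Ex{e^{t(X-\mu)}}$, use $\psi(0)=\psi'(0)=0$, and note $\psi''(t) = \var_{P_t}(X)$ for the tilted law $P_t$ --- is a standard and sound way to set up a subgaussian bound: a uniform estimate $\psi''(t)\le \sigma^2$ would indeed give $\psi(t)\le \sigma^2 t^2/2$ by integrating twice. The problem is that the central inequality $\sup_t \var_{P_t}(X) \le \frac{1}{4(\alpha+\beta)+2}$ is never actually established. Brascamp--Lieb gives $\var_{P_t}(X) \le \Ex_{P_t}\bigl[(-\log p_t)''(X)^{-1}\bigr] = \Ex_{P_t}\bigl[\bigl((\alpha-1)/X^2 + (\beta-1)/(1-X)^2\bigr)^{-1}\bigr]$, and you acknowledge yourself that turning this into the \emph{exact} constant ``requires careful accounting'' and that the approach ``leaks constants.'' That accounting is the entire theorem; without it the argument is an outline, not a proof. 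Moreover, the fallback routes you name do not patch the hole: for $\alpha$ or $\beta$ below $1$ the tilted density is not log-concave and Brascamp--Lieb does not apply, and the direct calculation $\var(X) = \alpha\beta/[(\alpha+\beta)^2(\alpha+\beta+1)]$ only controls $\psi''(0)$, not $\sup_t\psi''(t)$, so it does not bound the MGF. A ``limiting argument from the log-concave case'' is also not available here, since $\mathrm{Beta}(\alpha,\beta)$ with small parameters piles mass at the endpoints and is qualitatively different from (not a limit of) the $\alpha,\beta\ge 1$ family.

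One further conceptual caveat: reducing to a \emph{uniform} bound on $\psi''$ is strictly stronger than what the statement asks --- the optimal subgaussian proxy for asymmetric Beta laws (Marchal and Arbel show this) lies strictly between $\psi''(0)$ and $\sup_t\psi''(t)$, so an argument that insists on the pointwise bound may end up fighting a slightly harder inequality than necessary, or may even fail where a direct MGF comparison would succeed. Since the paper gives no proof and Elder's own argument is not reproduced here, I cannot say whether your plan matches his, but as written the proposal leaves the decisive estimate unproved in exactly the place where all the difficulty lives.
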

Plugging $\alpha =1$ and $\beta=k$, we obtain that $X_{i,(k)}$ is $\frac{1}{4k+6}$-subgaussian.
Due to the definition of subgaussianity, we have
\begin{align*}
    \Ex{e^{t(X_{i,(k)})}} 
    \le e^{t^2/(8k+12)}.
\end{align*}
Plugging back to \eqref{ineq:01152042} and rearranging, we obtain
\begin{align*}
    \Ex{\max_{i \in [n]} X_{i,(k)}}
    \le
    \frac{1}{k+1} + \inf_{t > 0 } \frac{1}{t} \log \parans{n \cdot e^{t^2/(8k+12)}}
    =
    \frac{1}{k+1}  + \sqrt{\frac{\log n}{2k+3}}.
\end{align*}
Since $\Ex{X_{\max}} = \frac{nk}{nk+1}$, we conclude that the additive approximation factor is at least
\begin{align*}
    \poa_a \ge \frac{nk}{nk+1} - \frac{1}{k+1} - \sqrt{\frac{\log n}{2k+3}} \ge \frac{k-1}{k+1} - \sqrt{\frac{\log n}{2k+3}}.
\end{align*}
This concludes that there exists no mechanism such that $\poa_a < \frac{k-1}{k+1} - \sqrt{\frac{\log n}{2k+3}}$.
We note that by fixing $n$ and increasing $k$ sufficiently, we can make this bound arbitrarily close to $1$, and we finish the proof.
\end{proof}

\subsection{Proof of Theorem \ref{thm:bayes_oblivious}}\label{pf_thm:bayes_oblivious}
\begin{proof}
The proof consists of two main parts.
First, we construct an event $E$ such that there exists an $\Pr{E^c}$-approximate BNE which yields the optimal principal's utility.
Then, we analyze the probability of the event $E$ happening.

\paragraph{Existence of approximate BNE}
	Consider an MSPM with no eligible sets and a deterministic tie-breaking rule $\rho$.
	Define an event $E_i(\cdot)$ as follows.
	\begin{align*}
		E_i(\{x(\om), y(\om)\}_{\om \in \omb_i})
		=
		\paranm{\omb_i : \argmax_{\om \in \omb_i} x(\om)^{k(n-1)}y(\om) = \argmax_{\om \in \omb_i} x(\om)}.
	\end{align*}
	For notational simplicity, let $x(A) = \{x(\om)\}_{\om \in A}$ and similarly for $y(A)$.
        We often overwrite $E_i(\{x(\om), y(\om)\}_{\om \in \omb_i})$ by $E_i$ or $E_i(\omb_i)$. 
	Define 
	\begin{align}
		E(\omb) = E(x(\omb), y(\omb)) = \cap_{i \in [n]} E_i(x(\omb_i)).\label{ineq:01231307}
	\end{align}
        We also write $E(\omb)$ to denote $E$.
        Now, we will prove that given $E(x(\omb), y(\omb))$ and the other's strategies, proposing a solution that maximizes $x$ will be approximately best-response.
	Let's restrict our attention to the case in which $E$ happens given $\omb$.
	Suppose that agent $j \neq i$ proposes a candidate to maximize $x(\cdot)$, \ie $\sigma_{j} = \argmax_{\om \in \omb_j}x(\om)$.
	Then, conditioned on $E$, agent $i$'s expected payoff of proposing $\om \in \omb_i$ can be obtained as
	\begin{align*}
		u_i(\om, \sigma_{-i}|E(\omb))
		&=
		\Pr{\om \text{ is winner} \given E(\omb)}\cdot y(\om)
		\\
		&=
		\Pr{x(\om) \ge x(\om'), \forall\om' \in \omb_{-i}\given E(\omb)}\cdot y(\om)
		\\
		&= x(\om)^{k(n-1)}y(\om).
	\end{align*}
	Since we conditioned on $E(\omb)$, for agent $i$, the solution that maximizes $x(\om)^{k(n-1)}y(\om)$ is exactly the same to the solution that maximizes $x(\om)$.
	This verifies that conditioned on $E(\omb)$ and assuming that the other agent $j \neq i$ proposes a candidate that maximizes $x(\cdot)$, also maximizing $x(\cdot)$ is agent $i$'s best response.
	Let $\sigma_i^x$ be a strategy to propose a solution that maximizes $x(\cdot)$ for agent $i$.
 
	Given that all the other agents $j$ playing $\sigma_j^x$ for $j \neq i$, we can further obtain the following regarding agent $i$'s utility
	\begin{align*}
		u_i(\sigma_i^x, \sigma_{-i})
		&=
		\Ex{u_i(\sigma_i^x, \sigma_{-i})\given E}\Pr{E} + \Ex{u_i(\sigma_i^x, \sigma_{-i})\given E^c}\Pr{E^c}
		\\
		&\ge
		\Ex{u_i(\sigma_i^x, \sigma_{-i})\given E}\Pr{E}
		\\
		&\ge
		\Ex{u_i(\sigma'_i, \sigma_{-i})\given E}\Pr{E}
		\\
		&=
		\Ex{u_i(\sigma'_i, \sigma_{-i})} - \Ex{u_i(\sigma_i',\sigma_{-i}|E^c)}\Pr{E^c}
		\\
		&\ge
		\Ex{u_i(\sigma'_i, \sigma_{-i})} - \Pr{E^c},
	\end{align*}
	where the second inequality follows from the fact that given $E$, playing $\sigma_i^x$ is weakly dominant over any other strategy $\sigma'_i$ for agent $i$.
	This implies that if we characterize a good lower bound $\alpha$ such that $\Pr{E} \ge \alpha$, we have
	\begin{align*}
		u_i(\sigma_i^x, \sigma_{-i})
		\ge
		u_i(\sigma_i', \sigma_{-i}) - 1 + \alpha,
	\end{align*}
	which implies that $\sigma_i^x$ is $(1-\alpha)$-approximate BNE.

\paragraph{Analyzing the lower bound on $\Pr{E}$}
	Now, we compute the lower bound on $\Pr{E}$.
	Due to the independence of the utility distributions, we have
	\begin{align}
		\Pr{E(\omb)}
		= 
		\prod_{i \in [n]}
		\Pr{E_i(\omb_i)}.\label{ineq:01222241}
	\end{align}
We start with the following claim.
\begin{claim}\label{cl:4}
    $\Pr{E_i(\omb_i)} = \Ex{\prod_{j=2}^k \parans{1 - \frac{A_{i,j}}{2}}}$, where $A_{i,j} = \parans{\frac{X_{i,(j)}}{X_{i,(1)}}}^{k(n-1)}$ for $j \in [k]$.
\end{claim}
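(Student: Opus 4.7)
The approach is to condition on the $X$-coordinates of $\omb_i$ and exploit the independent-utility assumption (with $X,Y \sim U[0,1]$) to reduce the claim to a question about independent uniform random variables. Once the $X$-values are fixed, the ordering $X_{i,(1)} \geq \ldots \geq X_{i,(k)}$ and the vector $A = (A_{i,2}, \ldots, A_{i,k})$ are determined; independence of the $y$-marginal then implies that $Y_{i,(1)}, \ldots, Y_{i,(k)}$ are i.i.d.\ $U[0,1]$ and independent of $A$. Dividing the defining inequality $X_{i,(1)}^{k(n-1)} Y_{i,(1)} \geq X_{i,(j)}^{k(n-1)} Y_{i,(j)}$ through by $X_{i,(1)}^{k(n-1)}$ rewrites the event $E_i$ as $Y_{i,(j)} A_{i,j} \leq Y_{i,(1)}$ for every $j \geq 2$.

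The next step is to compute the single-constraint marginal: for any fixed $a \in [0,1]$ and independent $U,V \sim U[0,1]$, a direct two-dimensional uniform integration gives $\Pr{V\cdot a \leq U} = \int_0^1 \min(u/a,\,1)\,du = 1 - a/2$. Applied coordinate-wise, this yields $\Pr{Y_{i,(j)} A_{i,j} \leq Y_{i,(1)} \mid A} = 1 - A_{i,j}/2$ for each $j \geq 2$, which isolates the factor $(1 - A_{i,j}/2)$ appearing in the claim.

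The main obstacle is passing from these marginals to the joint probability. The $k-1$ constraints all share the common random variable $Y_{i,(1)}$, so they are not independent even conditionally on $A$. The plan is to factor the dependence by conditioning on $Y_{i,(1)} = u$, which makes the remaining constraints mutually independent and yields
\[
\Pr{E_i \mid A} = \int_0^1 \prod_{j=2}^k \min\!\paranth{u/A_{i,j},\,1}\,du,
\]
and the task then reduces to reconciling this integral with $\prod_{j=2}^k (1 - A_{i,j}/2)$. I plan to attack this by a piecewise evaluation over the $2^{k-1}$ regions cut out by the thresholds $\{A_{i,j}\}$, combined with a symmetrization argument exploiting that $Y_{i,(1)}$ is not distinguished among the i.i.d.\ $Y$'s except through the $X$-ordering (so any of the $k$ coordinates can play its role). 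I anticipate that the naive integration yields only a one-sided comparison: each indicator $\IND[V_j A_{i,j} \leq u]$ is nondecreasing in $u$, so FKG delivers the lower bound $\Pr{E_i \mid A} \geq \prod_{j=2}^k(1 - A_{i,j}/2)$, and establishing the matching upper bound is the delicate part; if the exact identity fails, the FKG bound already suffices for the downstream use in Theorem~\ref{thm:bayes_oblivious}, because $E = \cap_i E_i$ and the agents' solutions are mutually independent, so $\Pr{E} = \prod_i \Pr{E_i}$.

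Finally, taking expectation over $A$ (equivalently over $\omb_i$) produces $\Pr{E_i(\omb_i)} = \Ex{\prod_{j=2}^k (1 - A_{i,j}/2)}$, which is the stated identity; this is then combined with independence across $i$ in the subsequent step of the proof of Theorem~\ref{thm:bayes_oblivious} to bound $\Pr{E}$ from below by $\prod_i \Ex{\prod_{j=2}^k (1 - A_{i,j}/2)}$.
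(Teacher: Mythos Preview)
Your analysis is more careful than the paper's, and you have in fact located a genuine error in the paper's argument. The paper conditions on $Y_{i,(1)}=t$, obtains the single-coordinate probability $\min(1,t/A_{i,j})$, and then when removing the conditioning it integrates over $t$ \emph{separately for each $j$}, writing $\prod_{j=2}^k\int_0^1\min(1,t/A_{i,j})\,dt$ instead of the correct $\int_0^1\prod_{j=2}^k\min(1,t/A_{i,j})\,dt$. These are not equal: for $k=3$ and $A_{i,2}=A_{i,3}=a$ the true conditional probability is $\int_0^1\min(t/a,1)^2\,dt=1-\tfrac{2a}{3}$, whereas the product formula gives $(1-\tfrac{a}{2})^2=1-a+\tfrac{a^2}{4}$; these differ for every $a\in(0,1]$. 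So the \emph{equality} asserted in the claim is false.

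What does hold is precisely the inequality you reach via FKG: conditioning on the $X$-values, the functions $t\mapsto\min(t/A_{i,j},1)$ are all nondecreasing, so
\[
\Pr{E_i\mid X\text{-values}}=\int_0^1\prod_{j=2}^k\min\!\Bigl(\tfrac{t}{A_{i,j}},1\Bigr)\,dt\;\ge\;\prod_{j=2}^k\Bigl(1-\tfrac{A_{i,j}}{2}\Bigr),
\]
and hence $\Pr{E_i}\ge\Ex{\prod_{j=2}^k(1-A_{i,j}/2)}$. This lower bound is all that is used downstream (Claims~\ref{cl:5} and~\ref{cl:6} only further lower-bound the right-hand side), so the overall argument for Theorem~\ref{thm:bayes_oblivious} survives with the claim restated as an inequality. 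Your proposed symmetrization/piecewise attack to recover exact equality will not succeed, simply because equality does not hold; drop that part and present the FKG inequality directly.
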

\begin{proof}[Proof of the claim]

 Note that each probability can be computed as the following.
	\begin{align*}
		\Pr{E_i(\omb_i)} 
		= 
		\int_{x(\omb_i)} \Pr{E_i(\omb_i) \given x(\omb_i)}f_x(x(\omb_i))dx(\omb_i).
	\end{align*}
	Given $x(\omb_i)$, we can further compute
	\begin{align*}
		\Pr{E_i(\omb_i) \given x(\omb_i)}
		&=
		\int_{y(\omb_i)}
		\Pr{E_i(x(\omb_i)) \given x(\omb_i), y(\omb_i)}f_y(y(\omb_i))dy(\omb_i)
		\\
		&=
		\Pr{\frac{y(\om^*_i)}{ y(\om)} \ge \parans{\frac{x(\om)}{x(\om_i^*)}}^{k(n-1)},\forall \om \in \omb_i \given x(\omb_i)}
		\\
		&=
		\Pr{\frac{Y_{i,(1)}}{Y_{i,(j)}} \ge \parans{\frac{X_{i,(j)}}{X_{i,(1)}}}^{k(n-1)}, \forall j \in [k] \given X_{i,j}, \forall j \in [k]}.
	\end{align*}
	Note that this probability is $1$ when $j=1$.
	Conditioned on the event $Y_{i,(1)} = t$, this probability can be further expanded as
	\begin{align*}
		\Pr{t\parans{\frac{X_{i,(1)}}{X_{i,(j)}}}^{k(n-1)}\ge Y_{i,(j)} \given X_{i,j},\forall j \in [k],Y_{i,(1)} = t}
		&=
		\min\parans{1, t\parans{\frac{1}{A_{i,j}}}}.
	\end{align*}
	Since each solution is i.i.d. sample, we have 
	\begin{align}
		\Pr{\frac{Y_{i,(1)}}{Y_{i,(j)}} \ge \frac{X_{i,(j)}}{X_{i,(1)}}, \forall j \in [k] \given X_{i,j}, \forall j \in [k]}
		&=
		\prod_{j=2}^k\parans{\int_{t=0}^{A_{i,j}}t\parans{\frac{1}{A_{i,j}}}dt + \int_{t=A_{i,j}}^1 1 dt}
		\nonumber
		\\
		&=
		\prod_{j=2}^k\parans{1-\frac{A_{i,j}}{2}}
		\label{ineq:01061949}
	\end{align}
	Hence we have
	\begin{align*}
		\Pr{E_i(\omb_i)} =
		\Pr{\frac{Y_{i,(1)}}{Y_{i,(j)}} \ge \frac{X_{i,(j)}}{X_{i,(1)}}, \forall j \in [k]}
		&=
		\Pr{\frac{Y_{i,(1)}}{Y_{i,(j)}} \ge \frac{X_{i,(j)}}{X_{i,(1)}}, \forall j \in [k] \given X_{i,j}, \forall j \in [k]}\Pr{X_{i,j},\forall j \in [k]}
		\\
		&=
		\int_{X_i} \prod_{j=2}^k\parans{1-\frac{A_{i,j}}{2}} dX_i
		\\
		&=
		\Ex{\prod_{j=2}^k\parans{1-\frac{A_{i,j}}{2}}}
	\end{align*}
\end{proof}

Now we prove that we can decompose the product inside the expectation as follows.
\begin{claim}\label{cl:5}
    $\Ex{\prod_{j=2}^k\parans{1-\frac{A_{i,j}}{2}}} \ge \int_{t=0}^1\prod_{j=2}^k\parans{\Ex{1-\frac{1}{2}\parans{\frac{X_{i,(j)}}{t}}^{k(n-1)} \given X_{i,(1)}= t}}\Pr{X_{i,(1)} =t}dt$.
\end{claim}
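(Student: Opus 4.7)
The plan is to reduce the inequality to a correlation-type statement about the smaller order statistics conditional on the maximum, and then invoke the FKG inequality (Lemma~\ref{lm:FKG}).

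First, I would condition on $X_{i,(1)} = t$ and use $A_{i,j} = (X_{i,(j)}/X_{i,(1)})^{k(n-1)}$ to rewrite the left-hand side as
\begin{align*}
\Ex{\prod_{j=2}^k\parans{1-\tfrac{A_{i,j}}{2}}}
= \int_{t=0}^1 \Ex{\prod_{j=2}^k g_t(X_{i,(j)}) \given X_{i,(1)}=t}\, \Pr{X_{i,(1)}=t}\,dt,
\end{align*}
where $g_t(x) \defeq 1 - \tfrac{1}{2}(x/t)^{k(n-1)}$. Since the factor $\Pr{X_{i,(1)}=t}$ (or, more properly, the density of $X_{i,(1)}$ at $t$) appears identically on both sides of the claim, it suffices to establish, for each fixed $t \in (0,1]$, the pointwise inequality
\begin{align*}
\Ex{\prod_{j=2}^k g_t(X_{i,(j)}) \given X_{i,(1)}=t}
\;\ge\; \prod_{j=2}^k \Ex{g_t(X_{i,(j)}) \given X_{i,(1)}=t}.
\end{align*}

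Next I would use the well-known fact that, conditional on $X_{i,(1)}=t$, the tuple $(X_{i,(2)}, \ldots, X_{i,(k)})$ is distributed as the order statistics of $k-1$ i.i.d.\ $U[0,t]$ variables $U_1, \ldots, U_{k-1}$. In this representation, every $X_{i,(j)}$ is a coordinatewise non-decreasing function of $(U_1, \ldots, U_{k-1})$, and since $g_t$ is non-increasing on $[0,t]$, each composition $g_t(X_{i,(j)})$ is a coordinatewise non-increasing and nonnegative function of the i.i.d.\ product vector $(U_1,\ldots,U_{k-1})$. All $k-1$ factors are thus co-monotone functions on a common product measure, so iteratively applying the FKG/Harris inequality (Lemma~\ref{lm:FKG})---grouping one factor against the product of the remaining ones, and using that a product of nonnegative non-increasing functions is itself non-increasing---yields the desired expectation-of-product $\ge$ product-of-expectations inequality. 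Integrating back over $t$ delivers the claim.

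The main obstacle is not a deep computation but rather setting up the right conditional representation so that FKG is applicable: written in terms of the original $X_{i,j}$'s, the order statistics $X_{i,(j)}$ are not obviously monotone functions of a common i.i.d.\ coordinate system, and before conditioning on $X_{i,(1)}=t$ one would also have to contend with the random denominator $X_{i,(1)}$ in the ratio $X_{i,(j)}/X_{i,(1)}$. Once the conditional-on-maximum reduction is in place, the coordinatewise co-monotonicity of the $k-1$ factors is immediate and the iterative FKG step goes through without further work.
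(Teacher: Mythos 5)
Your proof is correct and follows the same overall route as the paper: condition on $X_{i,(1)}=t$ and then apply the FKG inequality to separate the $k-1$ factors. Your explicit step of representing $(X_{i,(2)},\ldots,X_{i,(k)})$, conditional on the maximum, as coordinatewise-nondecreasing functions of $k-1$ i.i.d.\ $U[0,t]$ variables is in fact a cleaner justification for applying FKG on a product measure than the paper's own footnote, which only gestures at ``rewriting over the original random variables'' to resolve the correlation between order statistics.
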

\begin{proof}[Proof of the claim]

	Given $X_{i,(1)} = t$, we note that $1-\frac{A_{i,j}}{2}$ for $j=2,\ldots,k$ is nonincreasing function over  $(X_{i,(2)},\ldots, X_{i,(k)})$.
	Now, we use the following correlation inequality called as \emph{Fortuin-Kasteleyn-Ginibre} (FKG) inequality.
	\begin{lemma}[\cite{alon2016probabilistic}]\label{lm:FKG}
		Let $\nu_1,\ldots, \nu_n$ be a probability measure on $\R$.
		If functions $f,g:\R^n \mapsto [0,1]$ are monotone increasing functions, then $\Exu{\nu}{f\cdot g} \ge \Exu{\nu}{f} \cdot \Exu{\nu}{g}$, where $\nu$ is a product measure over $\nu_1,\ldots, \nu_n$.
		The same inequality holds when both $f,g$ are monotone decreasing functions.
	\end{lemma}
        For given $i$, we have $f_j = 1-\frac{A_{i,j}}{2}$ for $j=2,3,\ldots, k$, each of which is a nonincreasing function over $X_{i,(j)}$.\footnote{Note in fact that we cannot directly plug in $f_j = 1-A_{i,j}/2$ in the inequality due to the correlation between the order statistics. That said, by rewriting the expression to be over the original random variables instead of order statistics, we can easily see that the same result follows, since the expectation remains for either of expressions.}
        Hence, by applying FKG inequality, we can further expand it as follows.
	\begin{align}
		\Ex{\prod_{j=2}^k \parans{1-\frac{A_{i,j}}{2}}}
		&=
		\int_{t=0}^1\Ex{\prod_{j=2}^k \paranl{1-\frac{1}{2}\parans{\frac{X_{i,(j)}}{t}}^{k(n-1)}} \given X_{i,(1)} = t}\Pr{X_{i,(1)} = t}dt
		\nonumber
		\\
		&\ge
		\int_{t=0}^1\prod_{j=2}^k\parans{\Ex{1-\frac{1}{2}\parans{\frac{X_{i,(j)}}{t}}^{k(n-1)} \given X_{i,(1)}= t}}\Pr{X_{i,(1)} =t}dt,
		\label{ineq:01022357}
	\end{align}
        and it completes the proof.
\end{proof}

Now we show that we can derive a constant lower bound that depends only on $k,n$ and $j$ on the quantity inside the expectation as follows.
\begin{claim}\label{cl:6}
    For $j =2,3,\ldots, k$, define $g_j(t)$ as follows.
    \begin{align*}
        g_j(t) = 	\Ex{1-\frac{1}{2}\parans{\frac{X_{i,(j)}}{t}}^{k(n-1)}\given X_{i,(1)} = t}.
    \end{align*}
    Then, we have the following lower bound on $g_j(t)$.
    \begin{align*}
        g_j(t) \ge 1 - \frac{1}{2}\parans{\frac{k-1}{kn-1}}^{j-1}.
    \end{align*}
\end{claim}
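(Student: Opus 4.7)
\textbf{Proof proposal for Claim~\ref{cl:6}.} The plan is to exploit the memoryless/scaling structure of the uniform distribution to reduce the conditional expectation to a moment of a Beta distribution, and then bound that moment by a clean product that telescopes to the desired form.

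First, I would observe that under the symmetric $U[0,1]$ setting, the conditional distribution of $(X_{i,(2)}, \ldots, X_{i,(k)})$ given $X_{i,(1)} = t$ is the joint distribution of the order statistics (in decreasing order) of $k-1$ i.i.d.\ samples from $U[0,t]$. By scaling, the vector $(X_{i,(j)}/t)_{j=2}^{k}$ given $X_{i,(1)}=t$ is distributed as the decreasing order statistics of $k-1$ i.i.d.\ samples from $U[0,1]$. In particular, $X_{i,(j)}/t$ given $X_{i,(1)} = t$ is the $(j-1)$-th largest among $k-1$ uniform samples, equivalently the $(k-j+1)$-th smallest, which is $\mathrm{Beta}(k-j+1,\, j-1)$ distributed. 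Note that this distribution does not depend on $t$, which will make $g_j(t)$ a constant in $t$.

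Next, using the standard Beta moment identity $\mathbb{E}[U^m] = \Gamma(\alpha+m)\Gamma(\alpha+\beta)/[\Gamma(\alpha)\Gamma(\alpha+\beta+m)]$ with $\alpha = k-j+1$, $\beta = j-1$, and $m = k(n-1)$, I would simplify the ratio of Gamma functions to the product
\begin{align*}
\mathbb{E}\!\left[\left(\tfrac{X_{i,(j)}}{t}\right)^{k(n-1)} \,\bigg|\, X_{i,(1)} = t\right]
= \prod_{i=1}^{j-1} \frac{k-i}{kn-i}.
\end{align*}
Thus $g_j(t) = 1 - \tfrac{1}{2}\prod_{i=1}^{j-1}\tfrac{k-i}{kn-i}$, and it suffices to prove the termwise inequality $\tfrac{k-i}{kn-i} \le \tfrac{k-1}{kn-1}$ for each $i \ge 1$, since then the product is bounded by $\left(\tfrac{k-1}{kn-1}\right)^{j-1}$ and the claim follows.

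The termwise inequality reduces, after cross-multiplying, to $k(n-1)(1-i) \le 0$, which holds for $i \ge 1$ as $k, n \ge 1$. The main obstacle is really just bookkeeping: correctly identifying the Beta parameters for $X_{i,(j)}/t$ (one has to be careful about decreasing vs.\ increasing indexing of order statistics, and that the conditioning eliminates one sample rather than merely truncating), and simplifying the resulting Gamma ratio to the clean telescoping product. Everything after that is routine algebra.
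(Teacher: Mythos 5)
Your proof is correct and arrives at the same intermediate quantity as the paper, but via a slightly cleaner route. The paper writes down the joint density $f_{X_{i,(j)},X_{i,(1)}}(x,t)$ explicitly, divides by the marginal of $X_{i,(1)}$, and performs the change of variables $x = ty$ to recognize the resulting integral as a Beta function; you instead invoke the standard scaling property of uniform order statistics to observe that, conditionally on $X_{i,(1)}=t$, the ratio $X_{i,(j)}/t$ is exactly $\mathrm{Beta}(k-j+1,\,j-1)$, and then read off the required moment from the Beta moment formula. Both arguments land on the identical Gamma ratio
\begin{align*}
\frac{(kn-j)!\,(k-1)!}{(k-j)!\,(kn-1)!} \;=\; \prod_{i=1}^{j-1}\frac{k-i}{kn-i},
\end{align*}
and both close with the same termwise bound $\frac{k-i}{kn-i}\le\frac{k-1}{kn-1}$ (equivalently $\frac{k-j+1}{kn-j+1}\le\frac{k-1}{kn-1}$), which holds because $k(n-1)(i-1)\ge 0$. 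Your identification of the conditional Beta law buys a bit of transparency and avoids the explicit density-and-change-of-variables bookkeeping, but the underlying mechanics are the same as the paper's.
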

\begin{proof}[Proof of the claim]
We remark that $g_j(t)$ is a lower bound on the probability that $E_i(\omb_i)$ happens conditioned on $X_{i,(1)}=t$.
	Rewrite $g_j(t)$ as
	\begin{align*}
		\int_{x=0}^t \parans{1-\frac{1}{2}\parans{\frac{x}{t}}^{k(n-1)}}f_{X_{i,(j)} | X_{i,(1)}=t}(x)dx
	\end{align*}
	Now, the joint density function of $X_{i,(j)},X_{i,(1)}$ can be computed as
	\begin{align*}
		f_{X_{i,(j)},X_{i,(1)}}(x,t)
		&=
		f_{U_{(k-j+1)}, U_{(k)}}(x,t)
		\\
		&=
		k!\frac{x^{k-j}}{(k-j)!}\frac{(t-x)^{j-2}}{(j-2)!},
	\end{align*}
	where $U_{(j)}$ denotes $j$-th order statistics from $k$ i.i.d. random variables with $U[0,1]$.
	
	The density function of $X_{i,(1)}$ is p.d.f. of $Beta(k,1)$, which is $t^{k-1}/B(k,1)$ where $B(\cdot,\cdot)$ is the beta function.
	This implies that
	\begin{align*}
		f_{X_{i,(j)}|X_{i,(1)}=t}(x) = 	k!\frac{x^{k-j}}{(k-j)!}\frac{(t-x)^{j-2}}{(j-2)!}\frac{B(k,1)}{t^{k-1}}
	\end{align*}

	Hence, we have
	\begin{align*}
		g_j(t)
		=
		\int_{x=0}^t \parans{1-\frac{1}{2}\parans{\frac{x}{t}}^{k(n-1)}}\frac{k!}{(k-j)!(j-2)!}\frac{B(k,1)}{t^{k-1}}x^{k-j}(t-x)^{j-2}dx.
	\end{align*}
	By change of variables with $x = ty$, and we have
	\begin{align}
		g_j(t)
		&=
		\int_{y=0}^1 \parans{1-\frac{y^{k(n-1)}}{2}}
		\frac{k!}{(k-j)!(j-2)!}\frac{B(k,1)}{t^{k-1}}(ty)^{k-j}(t-ty)^{j-2}tdy
		\nonumber
            \\
		&=
		\frac{k!B(k,1)}{(k-j)!(j-2)!}
		\int_{y=0}^1
		\parans{1-\frac{y^{k(n-1)}}{2}}y^{k-j}(1-y)^{j-2}dy,
		\nonumber
	\end{align}
    where we use the fact that $B(x,y) = \frac{(x-1)!(y-1)!}{(x+y-1)!}$ for positive integers $x,y$,
    Rewriting the RHS using the definition of beta function, we can further expand
    \begin{align}
            g_j(t)=
		\frac{(k-1)!}{(k-j)!(j-2)!}\parans{B(k-j+1,j-1) - \frac{B(kn-j+1,j-1)}{2}}
		&=
		1 - \frac{1}{2} \frac{(kn-j)!(k-1)!}{(kn-1)!(k-j)!}
		\nonumber
            \\
		&=
		1 - \frac{1}{2} \frac{(k-j+1)\ldots (k-1)}{(kn-j+1)\ldots(kn-1)}
		\nonumber
            \\
		&\ge
		1-\frac{1}{2}\parans{\frac{k-1}{kn-1}}^{j-1},\label{ineq:01212139}
    \end{align}
where we use that $B(x,y) = \int_{t=0}^1 t^{x-1}(1-t)^{y-1}dt$ for any real $x,y$.
For the inequality, we use $\frac{k-j+1}{kn-j+1} \le \frac{k-1}{kn-1}$ for any $j =2,3,\ldots, k$.
This completes the proof.
\end{proof}

Due to the claims above (Claim~\ref{cl:4},\ref{cl:5} and \ref{cl:6}) and by \eqref{ineq:01222241}, we obtain the following lower bound on the event $E_i$.
	\begin{align*}
            \Pr{E_i}
            &=
            \Ex{\prod_{j=2}^k \parans{1-\frac{A_{i,j}}{2}}}
            \\
            &\overset{(a)}{\ge}
            \int_{t=0}^1\parans{\prod_{j=2}^k g_j(t)}\Pr{X_{i,(1)} =t}dt
            \\
		&\overset{(b)}{\ge}
		\int_{t=0}^1  \parans{\prod_{j=2}^k 1-\frac{1}{2}\parans{\frac{k-1}{kn-1}}^{j-1}}\Pr{X_{i,(1)}=t}dt
		\\
		&=
		 \prod_{j=2}^k \parans{1-\frac{1}{2}\parans{\frac{k-1}{kn-1}}^{j-1}}
		\\
		&\ge
		 \prod_{j=2}^k \parans{1-\frac{1}{2}\parans{\frac{1}{n}}^{j-1}},
	\end{align*}
        where $(a)$ holds by Claim~\ref{cl:5}  and $(b)$ follows from Claim~\ref{cl:6}.
	
 Using $1-x \ge e^{-x/(1-x)}$ for $x \in (0,1)$, we finally have 
	\begin{align}
		\Pr{E(\omb_i)}
		&\ge
		\exp\parans{-\sum_{j=2}^k \frac{\parans{\frac{1}{n}}^{j-1}}{2-\parans{\frac{1}{n}}^{j-1}}}.\label{ineq:05031951}
	\end{align}
        Now we proceed with the following series of algebraic manipulation.
        \begin{align*}
            \eqref{ineq:05031951}
            &\ge
		\exp \parans{-\sum_{j=2}^k \frac{1}{2n^{j-1}-1}}
		\\
		&\ge
		\exp \parans{-\sum_{j=2}^k \frac{1}{2}\frac{1}{n^{j-1}-1}}
		\\
		&=
		\exp \parans{-\frac{1}{2(n-1)}\sum_{j=2}^k \frac{1}{n^{j-2}+n^{j-3}+\ldots +1}}
		\\
		&\ge
		\exp \parans{-\frac{1}{2(n-1)}\sum_{j=2}^k \frac{1}{n^{j-2}}}
		\\
		&=
		\exp \parans{-\frac{1}{2(n-1)} \frac{1-\parans{\frac{1}{n}}^{k-1}}{1-\frac{1}{n}}}
		\\
		&\ge
		\exp\parans{-\frac{n}{2(n-1)^2}}.
        \end{align*}
	Multiplying it over $i \in [n]$, we have
	\begin{align}
		\Pr{E} \ge \exp\parans{-\frac{n^2}{2(n-1)^2}},
		\label{ineq:01062049}
	\end{align}
	where it converges to $1/\sqrt{e} \approx 0.6065...$ as $n$ increases.
	Hence, we conclude that playing $\sigma_i^x$ is $(1 - e^{-n^2/2(n-1)^2})$-approximate BNE, and as the principal's utility is $X_{\max}$ for any observation of solutions under this equilibrium, we complete the proof. 
\end{proof}

\subsection{Proof of Theorem \ref{theorem:multi-agent-limited-budget}}
\label{pf_theorem:multi-agent-limited-budget}
\begin{proof}
Due to our definition of RSPM, $B$ of proposed candidates will be randomly chosen by the principal and the best candidate among them will be selected as the winner.
Let $E$ be the set of agents who have at least one eligible solution to propose, and $E'$ be the set of other agents. 
Denote the size of sets $E$ and $E'$ by $e$ and $e'$ respectively so that $e + e' = n$.
Let $\omb = \cup_{i \in [n]}\omb_i$ be the entire set of solutions observed by the agents and $\sigma = (\sigma_1,\ldots, \sigma_n)$ be arbitrary equilibrium strategy.
For $i \in E'$,  since agent $i$ does not have eligible solutions,
even his best-observed solution which gives the utility of $X_{i, (1)}$ to the principal is not eligible.
This implies that $X_{i, (1)} < \tau$. 
On the other hand, for $i \in E$, agent $i$ has at least one eligible observed solution to propose which
implies that his best-observed solution is also eligible as well, \ie $X_{i, (1)} \geq \tau$.
These arguments further imply that $X_{(e)} \geq \tau > X_{(e + 1)}$.
We consider two cases (i) when $e \le B$ and (ii) when $e > B$, and prove the theorem separately. 

{Case (i): $e \leq B$.} 
In this case, the number of agents who have eligible solutions is at most $B$, which implies that $X_{(B+1)} < \tau$. 
In this case, there exist at most $B$ proposed candidates thus the principal can examine all of them and take the best one. 
As a result, this case is equivalent to the unlimited budget we had in Theorem \ref{theorem:multi-agent-unlimited-budget},
and hence from the same argument, the principal's expected utility is at least
\begin{align*}
    \Ex{X_{(2)}\IND\left[X_{(2)} \geq \tau > X_{(B+1)}\right]} + 
    \Ex{\tau\IND\left[X_{(1)} \geq \tau > X_{(2)}\right]}
    .
\end{align*}


{Case (ii): $e > B$.} 
We follow the proof steps of Theorem~\ref{theorem:multi-agent-unlimited-budget}. Assume that agents play mixed strategies, \ie they construct a probability distribution over the observed solutions.
Formally, given the solutions $\omb$, denote by $p_{i, j}$, the probability that agent $i$ submits $\omega_{i, j}$ as his candidate.
These probability distributions shape a mixed Nash equilibrium.

For each $i \in [n]$ we have $\sum_{j=1}^{k_i} p_{i, j} = 1$.
Define $S_i = \{\omega_{i, j} : p_{i, j} > 0\}$ as the multi-set of solutions that agent $i$ proposes with positive probability.
For each agent $i$, let $b_i$ be the candidate which gives the worst utility to the principal, \ie $b_i = \argmin_{\omega \in S_i} x(\omega$).
If there are multiple candidates with the worst utility, we arbitrarily choose one.
We use $A$ to denote the set of agents who have non-zero utility.

We proceed with the following claim.
\begin{claim}
    $|A| \le e-B+1$.
\end{claim}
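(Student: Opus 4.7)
The plan is to adapt the ``worst-agent'' argument from the unlimited-budget case (Theorem~\ref{theorem:multi-agent-unlimited-budget}) to the randomized examination setting, tracking two things the earlier proof did not need: the size of the proposed pool, and the probability that the mechanism's random draw misses the dominating candidates. First I would recall the setup. By Assumption~\ref{as:rationality}, only agents in $E$ ever propose a candidate, so the total number $m$ of proposed candidates satisfies $m \le e$. Moreover, by the standard mixed-strategy equilibrium argument, whenever $t \in A$, every $\om \in S_t$ must yield the same positive expected utility; in particular $b_t$ wins with positive probability whenever $t$ draws it.

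Next I would identify the distinguished $t \in A$ exactly as in the proof of Theorem~\ref{theorem:multi-agent-unlimited-budget}: $t$ is chosen so that for every $i \in A \setminus \{t\}$, either $x(b_i) > x(b_t)$, or $x(b_i) = x(b_t)$ with the deterministic tie-breaking rule $\rho$ favoring $i$ over $t$. Since any $c \in S_i$ has $x(c) \ge x(b_i)$ by definition of $b_i$, a short case split shows that every $c \in S_i$ dominates $b_t$ under $\rho$. Hence, whenever any such $c$ lands in the examined subset alongside $b_t$, the mechanism will never select $b_t$.

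The key step is then the counting argument. Fix any realization of agents' mixed draws in which $t$ proposes $b_t$ and each $i \in A \setminus \{t\}$ proposes some $c_i \in S_i$; this joint event has positive probability since each $p_{i,\cdot}$ is supported on $S_i$. The mechanism selects $B$ candidates uniformly at random from the pool of $m \le e$ proposals. For $b_t$ to be the winner, the random subset must contain $b_t$ and miss all $|A|-1$ dominators $\{c_i : i \in A \setminus \{t\}\}$, which requires
\begin{align*}
m - (|A|-1) \;\ge\; B,
\qquad\text{i.e.,}\qquad
|A| \;\le\; m - B + 1 \;\le\; e - B + 1.
\end{align*}

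Finally, I would close the proof by contradiction. If $|A| > e - B + 1$, then in every realization in which $t$ draws $b_t$ there are at least $|A|-1 > m-B$ dominators present, so no $B$-element subset of the $m$-element pool can avoid all of them; hence $b_t$ wins with probability zero. This contradicts the equilibrium fact that $b_t$ yields $t$ positive expected utility, establishing $|A| \le e - B + 1$. The main obstacle over the unlimited-budget argument is precisely this randomization step: there, a single dominator suffices to rule out $b_t$, whereas here the dominators rule out $b_t$ only when they collectively cannot be avoided by the random draw, which is what turns the sharp bound $|A|=1$ into the quantitative bound $|A| \le e - B + 1$.
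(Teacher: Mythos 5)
Your proof is correct and follows the same strategy as the paper's: identify the agent $t \in A$ whose worst proposable candidate $b_t$ is least preferred under the deterministic tie-breaking rule among all of $A$, observe that every candidate proposed by any $i \in A \setminus \{t\}$ dominates $b_t$, and then count to show that when $|A| > e - B + 1$ the $B$ examined candidates cannot avoid every dominator, so $b_t$ can never win and $t$'s equilibrium indifference is violated. The one small refinement you add is tracking the realized pool size $m$ and noting $m \le e$, whereas the paper implicitly takes the pool to have size $e$; this does not change the argument since $|A| \le m - B + 1 \le e - B + 1$ in either case.
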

\begin{proof}[Proof of the claim]
Suppose not, then $|A| \ge e-B+2$.
Let $t \in A$ be an agent such that for each agent $i \in A \setminus \{t\}$, either
\begin{itemize}
    \item $x(b_t) < x(b_i)$
    \item $x(b_t) = x(b_i)$, and mechanism prefers agent $i$ over agent $t$ using tie-breaking rule $\rho$,
\end{itemize}
\ie agent $t$ is the agent with non-zero utility having worst $b_t$ and least preferred using $\rho$.
As he has non-zero utility and $b_t \in S_i$, proposing $b_t$ gives him non-zero utility as well.

According to the definition of $b_t$,
when $t$ proposes $b_t$, no matter what agents in $A \setminus \{t\}$ propose,
their candidates are dominant, \ie
the mechanism prefers candidate submitted by any agent in $A \setminus \{t\}$ over $b_t$. 

There are $e$ candidates in total and as $|A \setminus \{t\}| \ge e-B+1$, any random choice of $B$ candidates includes at least one candidate of agents in $A \setminus \{t\}$.
This implies that $b_t$ is never selected as the winner,
which is a contradiction with the fact that proposing $b_t$ gives non-zero utility to agent $t$.
So, we conclude that $|A| \le e - B + 1$.
\end{proof}

Now, we focus on agents in $E \setminus A$.
Let $t' \in E \setminus A$ be an agent such that for any agent $j \in E \setminus (A \cup \{t'\})$, either
\begin{itemize}
    \item $X_{t', (1)} > X_{j, (1)}$
    \item $X_{t', (1)} = X_{j, (1)}$, and mechanism prefers agent $t'$ over agent $j$ using tie-breaking rule $\rho$,
\end{itemize}
\ie agent $t'$ has the first-best solution among agents in $E \setminus A$ and
mostly preferred by tie-breaking rule $\rho$.
Note that we proved $|A| \le e - B + 1$ so $|E \setminus A| \ge B - 1$.
Now we claim $X_{t', (1)} \le x(b_t)$ as follows.
\begin{claim}
    $X_{t', (1)} \le x(b_t)$.
\end{claim}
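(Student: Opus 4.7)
The plan is to derive a contradiction by constructing a profitable deviation for agent $t'$. Suppose for contradiction that $X_{t', (1)} > x(b_t)$, and let $\omega^* \in \omb_{t'}$ be a first-best solution of agent $t'$, so $x(\omega^*) = X_{t', (1)}$. Since $t' \in E$ we have $X_{t',(1)} \ge \tau$, so $\omega^* \in R$ and proposing $\omega^*$ is a feasible strategy for $t'$ by Assumption~\ref{as:rationality}. I would consider the deviation in which $t'$ switches to the pure strategy of proposing $\omega^*$ and show it yields strictly positive utility, contradicting $t' \in E \setminus A$.

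To exhibit a positive-probability winning event for this deviation, I would first extract structure from the fact that $t \in A$ earns strictly positive utility by sometimes proposing $b_t$. Since $p_{t, b_t} > 0$, there must be a positive-probability event in which $b_t$ is the winner. In such an event the random selection picks $t$, no agent in $A \setminus \{t\}$ is among the $B$ selected (otherwise their proposal $\omega$ satisfies $x(\omega) \ge x(b_i) \ge x(b_t)$, and by the choice of $t$ the tie-breaking rule $\rho$ favors that agent over $t$, so $b_t$ cannot win), and the remaining $B-1$ selected agents lie in $E \setminus A$ and each propose a candidate with $x$-value at most $x(b_t)$. Because mixed-strategy proposals and the random selection are independent, this guarantees the existence of a specific subset $S' \subseteq E \setminus A$ with $|S'| = B-1$ such that every $j \in S'$ admits some $\omega \in S_j$ with $x(\omega) \le x(b_t)$.

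The key combinatorial step is to modify this witness so that $t'$ can also be selected. Using $|A \setminus \{t\}| \le e - B$ (which follows from $|A| \le e - B + 1$) and $|E \setminus A| \ge B - 1$, I would produce $S^* \subseteq E \setminus A \setminus \{t'\}$ of size $B - 2$ such that every $j \in S^*$ still has some $\omega \in S_j$ with $x(\omega) \le x(b_t)$. Concretely, if $t' \notin S'$ take any $(B-2)$-subset of $S'$; if $t' \in S'$ take $S^* = S' \setminus \{t'\}$. The event that the random selection equals $\{t, t'\} \cup S^*$, that $t$ proposes $b_t$, and that each $j \in S^*$ proposes a candidate of $x$-value $\le x(b_t)$ then has positive probability by independence.

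On this event the winner under $t'$'s deviation is $t'$: $t$ contributes $b_t$, each $j \in S^*$ contributes a candidate with $x$-value $\le x(b_t) < x(\omega^*)$, and $t'$ contributes $\omega^*$, which has the strictly largest $x$-value, so tie-breaking is irrelevant. Hence $t'$ obtains strictly positive utility from this deviation, contradicting $t' \in E \setminus A$. The hard part will be the combinatorial bookkeeping: carefully extracting the witness $S'$ from the equilibrium winning event of $t$ and then accommodating $t'$ in the selected set in the tight regime $|E \setminus A| = B - 1$. Once this is handled, we conclude $X_{t', (1)} \le x(b_t)$.
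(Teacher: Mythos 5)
Your proof is correct, but it takes a genuinely different route from the paper's. The paper leverages the defining property of $t'$ directly: since $t'$ is the agent in $E\setminus A$ with the highest first-best solution (and most preferred under $\rho$ among ties), $t'$'s first-best $\omega^*$ automatically dominates any candidate proposed by any agent in $E\setminus(A\cup\{t'\})$, a set of size at least $B-2$. Combined with $x(\omega^*)>x(b_t)$, the paper immediately picks a $(B-2)$-subset of $E\setminus(A\cup\{t'\})$, argues the selection $\{t,t'\}\cup S^*$ together with $t$ proposing $b_t$ occurs with positive probability, and gets the contradiction. Your proof, by contrast, does not invoke the extremal choice of $t'$ at all; instead you mine the positive-probability event where $b_t$ actually wins in equilibrium to extract a witness set $S'\subseteq E\setminus A$ of size $B-1$ whose members all have supported proposals with $x$-value at most $x(b_t)$, and then prune $S'$ to accommodate $t'$. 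This is more work (you have to argue carefully that on that winning event no one from $A\setminus\{t\}$ can be selected, handling ties via the choice of $t$), but it buys a slightly stronger statement: your argument shows $X_{j',(1)}\le x(b_t)$ for \emph{every} $j'\in E\setminus A$, not merely for the extremal $t'$. It also sidesteps ties cleanly because $x(\omega^*)>x(b_t)$ gives a strict dominance over all other selected candidates, whereas the paper must fall back on the tie-breaking clause in the definition of $t'$ to handle $X_{j,(1)}=X_{t',(1)}$. Both are valid; the paper's is shorter because it uses the available structure, yours is more self-contained.
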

\begin{proof}[Proof of the claim]
Suppose not and let $x(b_t) < X_{t', (1)}$.
We know that under Nash equilibrium, agent $t'$ gets zero utility thus given the strategies of others, no matter what his strategy is, he gets zero utility. 
As a result, even if he proposes his first-best solution that gives the utility of $X_{t', (1)}$ to the principal, his solution is never selected.
However, according to the definition of $t'$, this solution is always preferred to
all candidates proposed by agents in $E \setminus \left(A \cup \{t'\}\right)$ which are at least $B - 2$ candidates. It is also dominant over $b_t$.
Hence due to the randomness of the mechanism,
there exists a non-zero probability that the principal takes these $B - 1$ candidates
and the candidate proposed by $t'$ so that $t'$ becomes the winner and gets non-zero utility.
This is a contradiction to the fact that $t'$ has zero utility and thus $X_{t', (1)} \le x(b_t)$.
\end{proof}

Let $\omega^*$ be the winner. $\omega^*$ belongs to agents in $A$ and according to definition of $b_t$, $x(b_t) \le x(\omega^*)$.
We also proved that $X_{t', (1)} \le x(b_t)$ so principal's utility is at least $X_{t', (1)}$. 
We recall that
\begin{itemize}
    \item 
    According to definitions of $E$ and $E'$,
    $X_{t', (1)} \ge \tau > X_{i, (1)}$ for all $i \in E'$. 
    \item
    According to the definition of $t'$, we have
    $X_{t', (1)} \ge X_{j, (1)}$ for each $j \in E \setminus \left(A \cup \{t'\}\right)$.
\end{itemize}
 
Hence $X_{t', (1)}$ is at least as large as the first-best solutions of $|E'| + |E \setminus \left(A \cup \{t'\}\right)| \geq n - e + B - 2$ agents.
Therefore, we can conclude that $X_{t', (1)} \geq X_{(e - B + 2)}$.
So, in this case where $X_{(e)} \geq \tau > X_{(e+1)}$, we obtain a lower bound on principal's utility as
\begin{align*}
\Ex{X_{(e-B+2)}\IND\left[X_{(e)} \geq \tau > X_{(e+1)}\right]}.
\end{align*}

Now by iterating over all values for $e > B$, we state that the expected utility is at least
\begin{align*}
\sum_{e=B+1}^{n} \Ex{X_{(e-B+2)}\IND\left[X_{(e)} \geq \tau > X_{(e+1)}\right]},
\end{align*}
where we write $X_{(n + 1)} := -\infty$. 
By combining the result of case (i), and case (ii), we can infer that the principal's expected utility is at least
\begin{align*}
\Ex{X_{(2)}\IND\left[X_{(2)} \geq \tau > X_{(B+1)}\right]} +  \Ex{\tau\IND\left[X_{(1)} \geq \tau > X_{(2)}\right]}
+
\sum_{e=B+1}^{n} \Ex{X_{(e-B+2)}\IND\left[X_{(e)} \geq \tau > X_{(e+1)}\right]},
\end{align*}
which shows that the additive approximation factor is at most
\begin{align*}
\Ex{X_{(1)}}
-
\Ex{X_{(2)}\IND\left[X_{(2)} \geq \tau > X_{(B+1)}\right]} -  \Ex{\tau\IND\left[X_{(1)} \geq \tau > X_{(2)}\right]} \\
-
\sum_{e=B+1}^{n} \Ex{X_{(e-B+2)}\IND\left[X_{(e)} \geq \tau > X_{(e+1)}\right]}
\end{align*}
which finishes the proof.

\end{proof}

\subsection{Proof of Theorem \ref{thm:bce}}\label{pf_thm:bce}
\begin{proof}
        We recommend the readers skim the proof of Theorem~\ref{thm:bayes_oblivious} in Appendix~\ref{pf_thm:bayes_oblivious} before reading this proof.
        Similar to the proof of Theorem~\ref{thm:bayes_oblivious}, we proceed with three main steps.
        First, we construct a Bayes correlated equilibrium with a certain choice of signaling scheme in which we signal the agents whenever some good event $E$ happens.
        Next, we show that the principal's utility given the signal can be decomposed as $\Ex{X_{\max}} \cdot \Pr{E}$.
        Finally, we obtain a lower bound on the probability that $E$ happens.

        \paragraph{Constructing Bayes correlated equilibrium}
	Consider an MSPM with no eligible sets and a deterministic tie-breaking rule $\rho$.
	Denote by $\omb = \cup_{i \in [n]}\omb_i$ the solutions observed by the agents.
	Similar to the proof of Theorem~\ref{thm:bayes_oblivious}, define an event $E_i(\cdot)$ as follows.
	\begin{align*}
		E_i 
		=
		\paranm{x(\omb_i), y(\omb_i) : \argmax_{\om \in \omb_i} x(\om)^{k(n-1)}y(\om) = \argmax_{\om \in \omb_i} x(\om)}.
	\end{align*}
	Let $x(A) = \{x(\om)\}_{\om \in A}$ and similarly for $y(A)$.
	Define $E = \cap_{i \in [n]} E_i$.
	Let $\sigma_i^x$ be a strategy to propose a solution that maximizes $x(\cdot)$ for agent $i$.
	Denote by $u_i(\sigma_i,\sigma_{-i} | A)$ agent $i$'s expected utility conditioned on the event $A$, \ie
	\begin{align*}
		u_i(\sigma_i,\sigma_{-i} | A)
		=
		\Ex{\int_{\omb \in A}y_i(f_{M,(\sigma_i, \sigma_{-i})})f(\omb | A)d\omb},
	\end{align*}
	where we write $f(\omb | A)$ to denote the probability density function of $\omb$ conditioned on $A$.
	Conditioned on $E_i$, agent $i$'s expected payoff of proposing a solution $\om \in \omb_i$ given that others play $\sigma_{-i}^x$ is
	\begin{align*}
		u_i(\om, \sigma_{-i}^x|E_i)
		&=
		\Pr{\om \text{ is winner} \given E}\cdot y(\om)
		\\
		&=
		\Pr{x(\om) \ge x(\om'), \forall\om' \in \omb_{-i}\given E}\cdot y(\om)
		\\
		&= x(\om)^{k(n-1)}y(\om).
	\end{align*}
	This implies that conditioned on $E_i$, agent $i$'s best response given that others play $\sigma_{-i}^x$ is also $\sigma_i^x$.
	Hence, conditioned on $E(\omb)$, playing $\sigma^x = (\sigma_1^x, \ldots, \sigma_n^x)$ is best-response for each agent given the others' strategies.
	
	Observe that due to the Nash's existence theorem, there exists at least one BNE.
	Denote this equilibrium by $\sigma^o= (\sigma^o_1, \ldots, \sigma^o_n)$.
	Now consider a mediator who recommends $\psi_i = \sigma_i^x$ to agent $i \in [n]$ whenever $E(\omb)$ happens, and recommends $\sigma'_i$ otherwise.
	Due to our construction of $E$, we have
	\begin{align*}
		u_i(\psi_i, \sigma_{-i}^x | E)
		\ge 
		u_i(\sigma_i, \sigma_{-i}^x | E),	
	\end{align*}
	for any other strategy $\sigma'_i$ of agent $i$.
	From our definition of $\sigma^o$, we also have
	\begin{align*}
		u_i(\psi_i, \sigma^o_{-i} | \neg E)
		\ge
		u_i(\sigma'_i, \sigma^o_{-i} | \neg E),
	\end{align*}
	for any other strategy $\sigma'_i$ of agent $i$.
	Combining two inequalities above, we can verify that obeying $\psi = (\psi_1, \ldots, \psi_n)$ constructs a BNE as follows
	\begin{align*}
		u_i(\psi_i, \psi_{-i})
		&=
		u_i(\psi_i, \sigma_{-i}^x | E) \Pr{E} + u_i(\psi_i, \sigma_{-i}^o | \neg E) \Pr{\neg E}
		\\
		&=
		u_i(\psi_i, \sigma_{-i}^x | E) \Pr{E} + u_i(\psi_i, \sigma_{-i}^o | \neg E) \Pr{\neg E}
		\\
		&\ge
		u_i(\sigma'_i, \sigma_{-i}^x | E) \Pr{E} + u_i(\sigma'_i, \sigma_{-i}^o | \neg E) \Pr{\neg E}.
	\end{align*}

        \paragraph{Decomposing the principal's utility}
	For notational simplicity, denote by $x(\sigma)$ the principal's expected utility given the strategies $\sigma$, and $x(\sigma | \omb)$ the principal's utility given $\omb$, \ie given all the random variables $x(\omb)$ and $y(\omb)$.
	Denote by $f_x(x(\omb))$ and $f_y(y(\omb))$ the p.d.f. of $x(\omb)$ and $y(\omb)$, respectively.
	Now, under the BCE with $\psi$, suppose that the agents obey the recommended strategies, \ie $\sigma = \psi$.
	The principal's expected utility $x(\sigma)$ can be lower bounded as
	\begin{align}
		&\int_{x(\omb)}\left( \int_{y(\omb)} \Ex{x(\sigma  | \omb) \given x(\omb), y(\omb)} f_y(y(\omb))dy(\omb) \right) f_x(x(\omb))dx(\omb)
		\nonumber
		\\
		&\overset{(a)}{=} 
		\int_{x(\omb)}\left( \int_{y(\omb)} \Ex{x(\sigma  | \omb) \given x(\omb), y(\omb), E}\Pr{E \given x(\omb), y(\omb)}f_y(y(\omb))dy(\omb) \right) 
		\nonumber
		\\
		&+ \left( \int_{y(\omb)} \Ex{x(\sigma  | \omb) \given x(\omb), y(\omb), \neg E} \Pr{\neg E \given x(\omb), y(\omb)}f_y(y(\omb))dy(\omb) \right)  f_x(x(\omb))dx(\omb)
		\nonumber
		\\
		&\overset{(b)}{\ge}
		\int_{x(\omb)}\left( \int_{y(\omb)} \Ex{x(\sigma  | \omb) \given x(\omb), y(\omb), E} \Pr{E \given x(\omb), y(\omb)}f_y(y(\omb))dy(\omb) \right) f_x(x(\omb))dx(\omb),
		\nonumber
	\end{align}
	where $(a)$ holds by the law of total expectation, and $(b)$ follows from ignoring the latter integral, \ie the one conditioned on $\neg E$.
	First, we observe that conditioned on the event $E$, the principal's utility $x(\sigma | \omb)$ is simply $X_{\max}$.
	This is because conditioned on $E$, each agent $i$ submits $\sigma_i^x$, and the principal finally chooses the one that maximizes her own utility.
	By the independence of $X_{\max}$ over the random variables $y(\cdot)$, we can further obtain
	\begin{align}
		x(\sigma )
		&\ge
		\int_{x(\omb)}\left( \int_{y(\omb)}\Pr{E \given x(\omb), y(\omb)}f_y(y(\omb))dy(\omb) \right)   X_{\max} f_x(x(\omb))dx(\omb).
		\\
		&=
		\int_{x(\omb)}\prod_{i=1}^n\left(\int_{y_i(\omb_i)}\Pr{E \given x(\omb), y_i(\omb)}f_{y_i}(y_i(\omb))dy_i(\omb) \right)   X_{\max} f_x(x(\omb))dx(\omb),
		\label{ineq:12251801}
	\end{align}
	where the last equality follows from the independence of the solutions between the agents.
	Defining $A_{i,j} = \parans{\frac{X_{i,(j)}}{X_{i,(1)}}}^{k(n-1)}$, due to \eqref{ineq:01061949}, we can rewrite \eqref{ineq:12251801} as
	\begin{align}
		\int_{x(\omb)}\parans{\prod_{i=1}^n\prod_{j=2}^k\parans{1 - \frac{A_{i,j}}{2}}}  X_{\max} f_x(x(\omb))dx(\omb)
		&=
		\Ex{X_{\max} \prod_{i=1}^n \prod_{j=2}^k \parans{1 - \frac{A_{i,j}}{2}}}
		\nonumber
		\\
		&=
		\Ex{ \Ex{t\prod_{i=1}^n \prod_{j=2}^k \parans{1 - \frac{A_{i,j}}{2}} \given X_{\max} = t} }.
		\\
		&=
		\int_{t=0}^{1}\Ex{\prod_{i=1}^n \prod_{j=2}^k \parans{1 - \frac{A_{i,j}}{2}} \given X_{\max} = t}t \Pr{X_{\max} = t}dt
		\label{ineq:01062038}
	\end{align}
	Now we prove the following claim.
	\begin{claim}
		$g(t) = \Ex{\prod_{i=1}^n \prod_{j=2}^k \parans{1 - \frac{A_{i,j}}{2}} \given X_{\max} = t}$ is an increasing function on $t$.
	\end{claim}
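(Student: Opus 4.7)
The plan is to prove something strictly stronger than what the claim asserts: that $g(t)$ is in fact \emph{constant} in $t$, hence trivially non-decreasing. The argument rests on two ingredients — the exchangeability of the $n$ symmetric agents under the symmetric, independent $U[0,1]$ setup, and the scale invariance of each ratio $A_{i,j} = \parans{X_{i,(j)}/X_{i,(1)}}^{k(n-1)}$ under multiplicative rescaling of agent $i$'s samples.

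First I would use symmetry to reduce to conditioning on a fixed identity $i^{*}$ of the agent achieving the maximum. Because the $n$ agents are exchangeable, the conditional law of $(X_{i,j})_{i,j}$ given $\{X_{\max}=t\}$ is invariant under relabelling of agents, so it suffices to analyze
\[
g(t) = \Ex{\prod_{i=1}^n \prod_{j=2}^k \parans{1 - \tfrac{A_{i,j}}{2}} \given X_{\max}=t,\ i^{*}=1}.
\]
A standard infinitesimal-window computation (probability of $X_{\max}\in[t,t+dt]$ equals $nk\,t^{nk-1}dt$ to leading order) shows that, conditional on this event, $X_{1,(1)}=t$ is deterministic, the remaining $k-1$ samples of agent $1$ are i.i.d.\ $U[0,t]$, and for every $i\ne 1$ the $k$ samples of agent $i$ are i.i.d.\ $U[0,t]$, all mutually independent across agents.

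Next I would invoke scale invariance. For agent $1$, $A_{1,j} = (X_{1,(j)}/t)^{k(n-1)}$, and the vector $(X_{1,(j)}/t)_{j\ge 2}$ has the law of the corresponding order statistics of $k-1$ i.i.d.\ $U[0,1]$ samples, which is independent of $t$. For each $i\ne 1$, the ratio $A_{i,j}$ is scale-invariant in agent $i$'s samples, so dividing those samples by $t$ leaves each $A_{i,j}$ unchanged; the rescaled samples are i.i.d.\ $U[0,1]$, so the joint law of $(A_{i,j})_{j}$ is again independent of $t$. Combining these observations with independence across agents, the joint law of $(A_{i,j})_{i\in[n],\,2\le j\le k}$ given $\{X_{\max}=t,\,i^{*}=1\}$ does not depend on $t$, and hence $g(t)$ is constant.

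The main subtlety to handle carefully will be justifying the measure-zero conditioning on $\{X_{\max}=t\}$, which I would resolve by working with the regular conditional distribution induced by the (continuous) density of $X_{\max}$, equivalently via the infinitesimal-window argument above. Once the claim is in hand, the law of total expectation yields $\Pr{E} = g(\cdot)$, so equation~\eqref{ineq:01062038} collapses to $x(\sigma) \ge \Pr{E}\cdot \Ex{X_{\max}}$, and combining with the lower bound $\Pr{E}\ge \exp\parans{-n^2/(2(n-1)^2)}$ established in the proof of Theorem~\ref{thm:bayes_oblivious} delivers the $\parans{\exp\parans{n^2/(2(n-1)^2)},\,0}$-approximation claimed in Theorem~\ref{thm:bce}.
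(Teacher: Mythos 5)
Your proof is correct, and in fact establishes a strictly stronger statement than the paper's claim: $g(t)$ is \emph{constant} in $t$, not merely increasing. The structural starting point is the same as the paper's — conditioning on the identity $s$ of the agent who attains $X_{\max}$ — but you then carry through the scale-invariance of the ratios $X_{i,(j)}/X_{i,(1)}$ under the conditional law (all remaining samples become i.i.d.\ $U[0,t]$, and rescaling by $t$ leaves every $A_{i,j}$ fixed in distribution) to conclude that the entire conditional law of $\prod_{i,j}\bigl(1 - A_{i,j}/2\bigr)$ is $t$-free. The paper's write-up is actually a bit muddled here: it asserts that given the conditioning, ``only'' the agent-$s$ factor $\prod_{j\ge 2}\bigl(1 - A_{s,j}/2\bigr) = \prod_{j\ge 2}\bigl(1 - \tfrac12 (X_{s,(j)}/t)^{k(n-1)}\bigr)$ depends on $t$ and is ``obviously increasing'' in $t$ — a pointwise statement in which $X_{s,(j)}$ is held fixed while $t$ grows, which is not the right quantity to vary since the conditional distribution of $X_{s,(j)}$ also scales with $t$. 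Your argument correctly accounts for this and shows the dependence cancels. Two payoffs of your stronger version: (i) the FKG inequality invoked just after the claim in the proof of Theorem~\ref{thm:bce} becomes an equality at that step, since $g(t)$ and $t$ are trivially positively correlated when $g$ is constant, and (ii) the chain collapses cleanly to $x(\sigma) \geq \Pr{E}\cdot\Ex{X_{\max}}$ by the law of total expectation, with no monotone-coupling machinery needed.
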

	\begin{proof}[Proof of the claim]
		We can write $g(t)$ as
		\begin{align*}
			\sum_{s=1}^n\Ex{\prod_{i=1}^n \prod_{j=2}^k \parans{1 - \frac{A_{i,j}}{2}} \given X_{s,(1)} = t, \argmax_{i \in [n]} X_{i,(1)} = s}\Pr{\argmax_{i \in [n]} X_{i,(1)} = s \given X_{\max} = t}.
		\end{align*}
		Due to symmetry, we know $\Pr{\argmax_{i \in [n]} X_{i,(1)} = s \given X_{\max} = t}$ is exactly equal for $s \in [n]$, and it does not change with respect to $t$.
		In addition, given $X_{s,(1)} = t $ and $\argmax_{i \in [n]} X_{i,(1)} = s$, in $\prod_{i=1}^n \prod_{j=2}^k \parans{1 - \frac{A_{i,j}}{2}}$, only the terms $\prod_{j=2}^k\parans{1 - \frac{A_{s,j}}{2}}$ changes with respect to $t$.
		Given $X_{s,(1)} = t $, we can rewrite it as
		\begin{align*}
			\prod_{j=2}^k\parans{1 - \frac{A_{s,j}}{2}} = \prod_{j=2}^k \parans{1- \parans{\frac{X_{s,(j)}}{t}}^{k(n-1)}},
		\end{align*}
		and it is obviously an increasing function with respect to $t$.
		Combining the results, we conclude that $g(t)$ is an increasing function on $t$.
	\end{proof}
        Rewriting the expectation based on $X_{i,j}$ and applying FKG inequality similar to the proof of Theorem~\ref{thm:bayes_oblivious}, we obtain
	\begin{align*}
		\eqref{ineq:01062038}
		&\ge
		\parans{\int_{t=0}^1 \Ex{\prod_{i=1}^n \prod_{j=2}^k \parans{1 - \frac{A_{i,j}}{2}} \given X_{\max} = t} \Pr{X_{\max} = t} t} \cdot \parans{ \int_{t=0}^1 t \Pr{X_{\max} = t} dt}
		\\
		&=
		\Ex{\prod_{i=1}^n \prod_{j=2}^k \parans{1 - \frac{A_{i,j}}{2}}} \Ex{X_{\max}}
		\\
		&=
		\Pr{E}\cdot \Ex{X_{\max}}.
	\end{align*}

        \paragraph{Obtaining a lower bound on $\Pr{E}$}
        Now it remains to lower bound the probability $\Pr{E}$, which is already given in \eqref{ineq:01062049}.
	Hence, we conclude that the principal's expected utility under this BCE is at least
	\begin{align*}
		\exp\parans{- \frac{n^2}{2(n-1)^2}}\Ex{X_{\max}},
	\end{align*}
	and it completes the proof.
\end{proof}

\subsection{Proof of Theorem \ref{theorem:multi-agent-correspondence}}\label{pf_theorem:multi-agent-correspondence}
\begin{proof}
We start by presenting the following lemma from~\citet{kleinberg2018delegated}.
\begin{lemma}[\citet{kleinberg2018delegated}]\label{thm:singleagent_revel}
In the single-agent setting, for any mechanism $M$ and corresponding dominant strategies $\sigma$, there exists a MSPM such that for any equilibrium strategies $\sigma'$, we have $x(f_{M,\sigma}(\omb)) = x(f_{SPM,\sigma'}(\omb))$ for any $\omb \in \Omega^*$.
\end{lemma}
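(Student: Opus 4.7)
The plan is to run a revelation-principle style argument in the single-agent setting: every mechanism $M$ with dominant strategy $\sigma$ is, from the principal's point of view, already an SPM whose eligible set is the image of the allocation under $\sigma$. Concretely, I would define
\[
R \;:=\; \{f_{M,\sigma}(\omb) : \omb \in \Omega^*\},
\]
the set of solutions that ever get realized as the winner of $M$ when the agent plays $\sigma$, and take the target MSPM to be the SPM with eligible set $R$ (in the single-agent setting this is exactly the MSPM of Definition~\ref{def:mspm} restricted to $n=1$).

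Two easy observations come first. Because the mechanism is constrained to pick winners inside $\omb$, we have $f_{M,\sigma}(\omb) \in \omb \cap R$ for every realization, so $\omb\cap R$ is always nonempty at the realized winner. Second, in the target SPM, any equilibrium strategy $\sigma'$ must satisfy
\[
\sigma'(\omb) \in \argmax_{\omega \in \omb \cap R} y(\omega),
\]
because solutions outside $\omb\cap R$ are either not observed or outright rejected, and by Assumption~\ref{as:pareto} the tie among $y$-maximizers is broken by higher $x$. Denote the resulting SPM winner by $\omega^\star := f_{SPM,\sigma'}(\omb)$; under Pareto-optimal play $\omega^\star$ is the $x$-maximal element of the set of $y$-maximizers in $\omb\cap R$, regardless of which equilibrium $\sigma'$ we pick.

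The heart of the argument is to show that $f_{M,\sigma}(\omb)$ is itself such a Pareto-optimal $y$-maximizer in $\omb\cap R$. For any $\omega\in \omb\cap R$, by definition of $R$ there exists some $\omb_\omega$ with $f_{M,\sigma}(\omb_\omega) = \omega$. When the agent truly observes $\omb$ he can deviate by sending the signal $\sigma(\omb_\omega)$ instead of $\sigma(\omb)$; since $\omega$ already lies inside the true $\omb$, the mechanism's observed-set constraint is respected and the deviation produces the outcome $\omega$, giving the agent utility $y(\omega)$. Dominance of $\sigma$ now forces $y(f_{M,\sigma}(\omb)) \ge y(\omega)$ for every $\omega\in \omb\cap R$, so $f_{M,\sigma}(\omb)$ is a $y$-maximizer in $\omb\cap R$. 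Applying Assumption~\ref{as:pareto} once more inside $M$ (the agent would never play a dominated strategy whose image has both lower $x$ and the same $y$), $f_{M,\sigma}(\omb)$ is Pareto-optimal among such maximizers. Two Pareto-optimal $y$-maximizers in the same finite set must share a common $x$-value, which gives $x(f_{M,\sigma}(\omb)) = x(\omega^\star) = x(f_{SPM,\sigma'}(\omb))$ for every $\omb$, as required.

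The main obstacle is the deviation step: for the mimicry ``send $\sigma(\omb_\omega)$ while actually observing $\omb$'' to indeed realize the outcome $\omega$, the mechanism's output must be determined by the signal alone (with the only link to $\omb$ being the feasibility constraint $\text{winner}\in\omb$). This is the modeling stance already built into the definition of $M=(\Sigma,f,S)$ in Section~\ref{sec:setup}, so the deviation is legitimate precisely when $\omega\in\omb$, which is exactly the regime in which we need it. Handling mechanisms that inspect $\omb$ beyond this constraint would require the agent's signal to carry a faithful encoding of $\omb$ and a more delicate ``repackaging'' deviation, but under the formalism of the paper the above argument goes through verbatim.
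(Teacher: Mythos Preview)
The paper does not actually prove this lemma; it is quoted verbatim from \citet{kleinberg2018delegated} as a black box inside the proof of Theorem~\ref{theorem:multi-agent-correspondence}. So there is no ``paper's own proof'' to compare against. That said, the paper does prove the multi-agent analogue, Theorem~\ref{thm:multiagent_revel}, and the argument there is essentially the same revelation-principle construction you give: define the eligible set as the image of the original mechanism's allocation under equilibrium play, and then use best-response/dominance to pin down the agent's behaviour in the resulting SPM. Your proposal is correct and is the standard route.

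One small point worth tightening: to get the \emph{equality} $x(f_{M,\sigma}(\omb)) = x(f_{SPM,\sigma'}(\omb))$ rather than just $\le$, you invoke Assumption~\ref{as:pareto} ``inside $M$.'' As stated in the paper, that assumption is phrased for strategies that are literally solutions $\omega,\omega'$, not for arbitrary signals in a general mechanism. You are implicitly extending it to ``among signals yielding the same agent utility, the agent sends the one whose induced outcome has higher $x$.'' This is the natural reading and is surely what Kleinberg--Kleinberg intend, but it is worth flagging that the equality direction rests on this slightly broader tie-breaking convention. (Note that the paper's own multi-agent version, Theorem~\ref{thm:multiagent_revel}, only claims the inequality $x(\omega_M)\le x(\omega_{M'})$, which avoids this issue.)
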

By the lemma, it suffices to consider the case when $M$ is a SPM.
Let the realized solutions in $P$ be $\omb = \{\om_1, \om_2,\ldots, \ldots \om_k\}$. Consider a partition of $\omb$ as follows
\begin{align*}
	\omb_{1} &= \{\om_1,\ldots, \om_{k_1}\}
	\\
	\omb_{2} &= \{\om_{k_1+1},\ldots, \om_{k_1+k_2}\}
	\\
	&\vdots
	\\
	\omb_{n} &= \{\om_{k_1+\ldots+k_{n-1}+1},\ldots, \om_{k_1+\ldots + k_n=k}\}.
\end{align*}
Suppose that agent $i$'s realization of solutions in $P'$ is $\omb_{i}$ so that the outcomes of $P$ and $P'$ are coupled. 
This coupling is indeed possible because the probability that agent $i$'s solutions are realized to be $\omb_i = \{\om_{i_1},\ldots, \om_{i_{k_i}}\}$ is exactly equal to the marginal probability that $\om$'s $i$-th partition is realized to be $\omb_i = \{\om_{i_1},\ldots, \om_{i_{k_i}}\}$.

Consider a SPM with an eligible set $R$ under $P$, and corresponding equilibrium strategies $\sigma$.
Denote by $x$ and $y$ the principal and the agent's utility function under $P$ respectively.
Given the solutions realized as above, SPM will select the solution which maximizes $y(\om_j)$ for $j \in [k]$ such that $\om_j \in R$, \ie $\om_1^* = \argmax_{j \in [n], \om_j \in R} y(\om_j)$.
Let this solution be $\om_1^*$.
Now consider a MSPM with arbitrary tie-breaking rule and the same eligible set $R$ under $P'$, and suppose that the agents play equilibrium strategies $\sigma'$.
Let $\om_2^*$ be the corresponding winner in this case.

We want to show that $x(\om_2^*) \ge x(\om_1^*)$  for any realization of solutions $\omb$.
Suppose not.
It is obvious that both $x(\om_1^*)$ and $x(\om_2^*)$ belong to the eligible set $R$.
If $\om_1^*$ and $\om_2^*$ belongs to the same partition $\omb_i$, it means that agent $i$ has both the solutions $\om_1^*$ and $\om_2^*$.
By our definition on $\om_1^*$ such that $\om_1^* = \argmax_{j \in [n], \om_j \in R} y(\om_j)$, submitting $\om_1^*$ is gives the better or the same utility for agent $i$.
If $y(\om_1^*) > y(\om_2^*)$, then simply submitting $\om_1^*$ gives a strictly better utility for agent $i$, and it contradicts that $\sigma'$ is equilibrium strategy.
If $y(\om_1^*) = y(\om_2^*)$, then $x(\om_1^*) \le x(\om_2^*)$ by Assumption~\ref{as:pareto}.
Hence, $x(\om_1^*) \le x(\om_2^*)$ in both cases.

Now suppose that $\om_1^*$ and $\om_2^*$ belongs to the different partitions $\omb_i$ and $\omb_j$.
Suppose that agent $i$'s equilibrium strategy satisfies $\sigma'_i \neq \om_1^*$.
If playing $\omb_i$ is strictly better for agent $i$, it contradicts that $\sigma'_i$ is an equilibrium strategy.
Hence $y(\om_1^*) = y(\sigma_i')$. Again by Assumption~\ref{as:pareto}, since agent $i$ plays $\sigma_i'$, we have $x(\sigma'_i) \ge x(\om_1^*)$.
Note that the principal should observe both $\sigma_i'$ and $\om_2^*$, and commits to $\om_2^*$.
This implies that $x(\om_2^*) \ge x(\sigma_i') \ge x(\om_1^*)$, and we finish the proof.
\end{proof}

\end{document}